\newcommand{\R}{\mathbb{R}}
\newcommand{\N}{\mathbb{N}}
\newtheorem{theorem}[equation]{Theorem}
\newtheorem{corollary}[equation]{Corollary}
\newtheorem{lemma}[equation]{Lemma}
\newtheorem{proposition}[equation]{Proposition}
\newtheorem{remark}[equation]{Remark}
\newtheorem{conjecture}[equation]{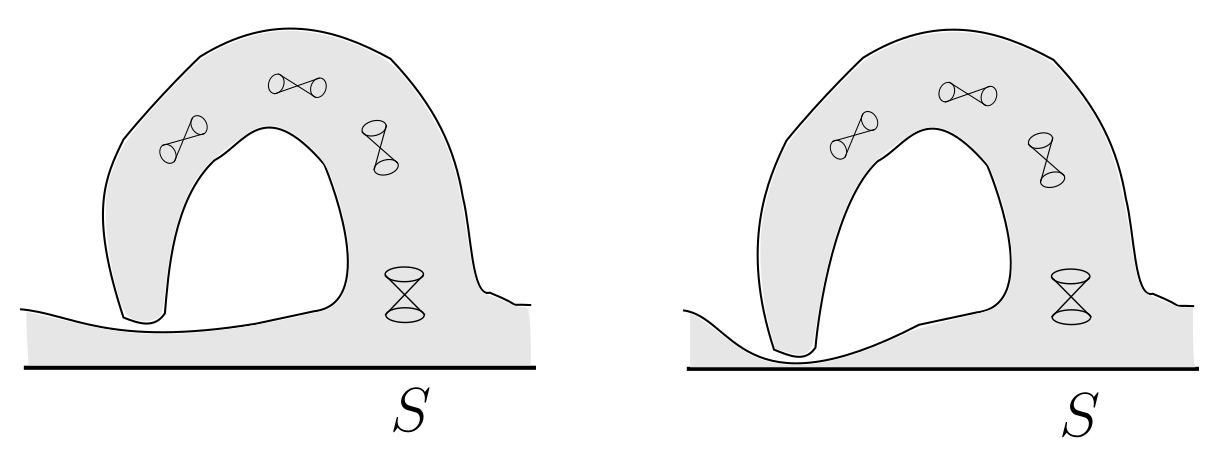}
\def\be{\begin{equation}}
\def\ee{\end{equation}}
\def\ba{\begin{eqnarray}}
\def\ea{\end{eqnarray}}
\title{Predictability of subluminal and superluminal wave equations}
\author{Felicity C. Eperon\footnote{fce21@cam.ac.uk} and Harvey S. Reall\footnote{hsr1000@cam.ac.uk} \\ {\footnotesize Department of Applied Mathematics and Theoretical Physics, University of Cambridge,} \\ {\footnotesize Wilberforce Road, Cambridge CB3 0WA, UK} \\ \\ Jan J. Sbierski\footnote{Jan.Sbierski@maths.ox.ac.uk} \\ {\footnotesize Mathematical Institute, University of Oxford, Woodstock Road, Oxford OX2 6GG, UK} }
\begin{document}

\maketitle

\begin{abstract}
It is sometimes claimed that Lorentz invariant wave equations which allow superluminal propagation exhibit worse predictability than subluminal equations. To investigate this, we study the Born-Infeld scalar in two spacetime dimensions. This equation can be formulated in either a subluminal or a superluminal form. Surprisingly, we find that the subluminal theory is less predictive than the superluminal theory in the following sense. For the subluminal theory, there can exist multiple maximal globally hyperbolic developments arising from the same initial data. This problem does not arise in the superluminal theory, for which there is a unique maximal globally hyperbolic development. For a general quasilinear wave equation, we prove theorems establishing why this lack of uniqueness occurs, and identify conditions on the equation that ensure uniqueness. In particular, we prove that superluminal equations always admit a unique maximal globally hyperbolic development. In this sense, superluminal equations exhibit better predictability than generic subluminal equations. 
\end{abstract}

\tableofcontents

\section{Introduction}

Many Lorentz invariant classical field theories permit superluminal propagation of signals around non-trivial background solutions. It is sometimes claimed that such theories are unviable because the superluminality can be exploited to construct causality violating solutions, i.e., ``time machines". The argument for this is to consider two lumps of non-trivial field with a large relative boost: it is claimed that there exist solutions of this type for which small perturbations will experience closed causal curves \cite{adams}. However, this argument is heuristic: the causality-violating solution is not constructed, it is simply asserted to exist. This means that the argument is open to criticism on various grounds \cite{mukhanov,geroch,Papallo:2015rna}. 

The reason that causality violation would be problematic is that it implies a breakdown of predictability. In this paper, rather than focusing on causality violation, we will investigate predictability. Our aim is to determine whether there is any qualitative difference in predictability between Lorentz invariant classical theories which permit superluminal propagation and those that do not. 

We will consider quasilinear scalar wave equations for which causality is determined by a metric $g(u,du)$ which depends on the scalar field $u$ and its first derivative $du$. In the initial value problem we specify initial data $(S,u,du)$ where $S$ is the initial hypersurface and $u$, $du$ are chosen on $S$ such that $S$ is spacelike w.r.t. $g(u,du)$. We can now ask: what is the largest region $M$ of spacetime in which the solution is uniquely determined by the data on $S$? Uniqueness requires that $(M,g)$ should be globally hyperbolic with Cauchy surface $S$, i.e., the solution should be a globally hyperbolic development (GHD) of the data on $S$. This suggests that the ``largest region in which the solution is unique" will be a GHD that is inextendible as a GHD, i.e., it is a {\it maximal} globally hyperbolic development (MGHD). 

Our aim, then, is determine whether there is any qualitative difference between MGHDs for subluminal and superluminal equations. 

In section \ref{general}, we will introduce the class of scalar wave equations that we will study, and define what we mean by ``subluminal" and ``superluminal" equations. Note that the standard linear wave equation is both subluminal and superluminal according to our definition. 

In section \ref{BI} we will study an example of a Lorentz invariant equation in $1+1$ dimensions, namely the Born-Infeld scalar field. The general solution of this equation is known \cite{Barbashov:1966frq,Barbashov:1967avx}. This equation can be formulated in either a subluminal or superluminal form. One can consider the interaction of a pair of wavepackets in these theories. If the amplitude of the wavepackets is not too large then the wavepackets merge, interact, and then separate again \cite{Barbashov:1967avx}. In the subluminal theory they emerge with a time delay, in the superluminal theory there is a time advance. The MGHD is the entire 2d Minkowski spacetime in both cases. 

For larger amplitude, it is known that the solution can form singularities in the subluminal theory \cite{Barbashov:1967avx}. Singularities can also form in the superluminal theory. In both cases, the formation of a singularity leads to a loss of predictability because MGHDs are extendible across a Cauchy horizon, and the solution is not determined uniquely beyond a Cauchy horizon. However, there is a qualitative difference between the subluminal and superluminal theories. In the superluminal theory there is a {\it unique} MGHD. However, in the subluminal theory, MGHDs are not unique: {\it there can exist multiple distinct MGHDs arising from the same initial data}. 

This is worrying behaviour. Given a solution defined in some region $U$, we can ask: in which subset of $U$ is the solution determined uniquely by the initial data? In the superluminal case, this region is simply the intersection of $U$ with the unique MGHD, or, equivalently, the domain of dependence of the initial surface within $U$. This can be determined {\it from the solution itself}. However, in the subluminal case there is, in general, no such method of determining the appropriate subset of $U$. To determine the region in which the solution is unique, one has to construct all other solutions arising from the same initial data!

In section \ref{theorems} we will discuss the existence and uniqueness of MGHDs for a large class of quasilinear wave equations (in any number of dimensions). We start by proving a theorem asserting that two GHDs defined in regions $U_1$ and $U_2$ will agree in $U_1 \cap U_2$ {\it provided $U_1 \cap U_2$ is connected}. Thus if one can show that $U_1 \cap U_2$ is {\it always} connected then one always has uniqueness. We will prove that this is the case for any equation with the property that there exists a vector field which is timelike w.r.t. $g(u,du)$ for all $(u,du)$. For such an equation, and for a suitable initial surface, we prove that there exists a unique MGHD. Note that {\it any} superluminal equation admits such a vector field so {\it for any superluminal equation there exists a unique MGHD}. 

Our Born-Infeld example demonstrates that one cannot expect a unique MGHD for a general subluminal equation. One can define the maximal region in which solutions are unique, which we call the maximal unique globally hyperbolic development (MUGHD). Unfortunately, as mentioned above, there is no simple characterization of the MUGHD: given a solution defined in a region $U$, there is no simple general method for determining which part of $U$ belongs to the MUGHD. As we will show, one can establish some partial results e.g. for a solution defined in $U$, the solution is unique in the subset of $U$ corresponding to the domain of dependence of the initial surface detemined w.r.t. the {\it Minkowski} metric. However, this is rather a weak result especially for equations with a speed of propagation considerably less than the speed of light. 

An important application of the notion of a MGHD is Christodoulou's work on shock formation in relativistic perfect fluids \cite{Chr}. Given that this work concerns subluminal equations, one might wonder whether the MGHD constructed in Ref. \cite{Chr} suffers from the lack of uniqueness dicussed above. We will prove that if a MGHD ``lies on one side of its boundary" then it is unique. This provides a method for demonstrating uniqueness of a MGHD once it has been constructed. In particular, this implies that there is a unique MGHD for the initial data considered in Ref. \cite{Chr}. However, we emphasize that the equations of Ref. \cite{Chr} are likely to exhibit non-uniqueness of MGHDs for more complicated choices of initial data. 

Of course we have not answered the question which motivated the present work, namely whether it is possible to ``build a time machine" in any Lorentz invariant theory which admits superluminal propagation. However, our work does show that the object that one would have to study in order to address this question, namely the MGHD, is well-defined in a superluminal theory. Smooth formation of a time machine would require that there exist generic initial data belonging to some suitable class (e.g. smooth, compactly supported, data specified on a complete surface extending to spatial infinity in Minkowski spacetime) for which the MGHD is extendible, with a compactly generated \cite{hawking} Cauchy horizon.\footnote{The word ``generic" is included to reflect the condition that the time machine should be stable under small perturbations of the initial data.} In the Appendix we explain why this is not possible in $1+1$ dimensions. Whether this is possible in a higher dimensional superluminal theory (let alone {\it all} such theories) is an open question. 

\section{General scalar equation}

\label{general}

\subsection{Subluminal and superluminal equations}

\label{sec:subsuper}

Consider a scalar field $u: \mathbb{R}^{d+1} \rightarrow \mathbb{R}$ in $(d+1)$-dimensional Minkowski spacetime. Assume that the field satisfies a quasilinear equation of motion\footnote{Everything we say in the next few sections applies also to a quasilinear {\it system}, where $u$ denotes a $N$-component vector of scalar fields.}
\be
\label{eqndims}
 g^{\mu\nu}(u,du) \partial_\mu \partial_\nu u = F(u,du)
\ee
where $F$ is a smooth\footnote{Here, and throughout this paper, `smooth' means $C^\infty$.} function and \eqref{eqndims} is written with respect to the canonical coordinates $x^\mu$ on $\mathbb{R}^{d+1}$.

We will say that $(M,u)$ is a {\it hyperbolic solution} if $M$ is a connected open subset of $\mathbb{R}^{d+1}$ and $u:M \rightarrow \mathbb{R}$ is a smooth solution of the above equation for which $g^{\mu\nu}(u,du)$ has Lorentzian signature. For such a solution we can define $g_{\mu\nu}(u,du)$ as the inverse of $g^{\mu\nu}$ and then $(M,g)$ is a spacetime. Causality for the scalar field is determined by the metric $g$ so we will be studying the causal properties of the spacetime $(M,g)$. 

Now assume that we have a Minkowski metric $m_{\mu\nu}$ on $\mathbb{R}^{d+1}$ (i.e. a flat, Lorentzian metric), with inverse $m^{\mu\nu}$. We call the above equation {\it subluminal} if, whenever $g^{\mu\nu}$ is Lorentzian, every vector that is causal w.r.t. $g_{\mu\nu}$ is also causal w.r.t. $m_{\mu\nu}$ (so the null cone of $g_{\mu\nu}$ lies on, or inside, the null cone of $m_{\mu\nu}$). We call the equation {\it superluminal} if, whenever $g^{\mu\nu}$ is Lorentzian, every vector that is causal w.r.t. $m_{\mu\nu}$ is also causal w.r.t. $g_{\mu\nu}$ (so the null cone of $g_{\mu\nu}$ lies on, or outside, the null cone of $m_{\mu\nu}$). 

Most equations are neither subluminal nor superluminal e.g. because the null cones of $g_{\mu\nu}$ and $m_{\mu\nu}$ may not be nested or because the relation between the null cones of $g_{\mu\nu}$ and $m_{\mu\nu}$ may be different for different field configurations. Note also that the standard wave equation ($g_{\mu\nu} = m_{\mu\nu}$) is both subluminal and superluminal according to our definitions. 

Clearly these definitions depends on the choice of $m_{\mu\nu}$. There are infinitely many Minkowski metrics on $\mathbb{R}^{d+1}$. An equation might be subluminal w.r.t. one choice of $m_{\mu\nu}$ and superluminal w.r.t. some other choice. However, for many equations there exists no $m_{\mu\nu}$ such that the equation is either subluminal or superluminal. In physics applications one usually has a preferred choice of $m_{\mu\nu}$, i.e., $m_{\mu\nu}$ is ``the" spacetime metric. In particular, this is the case for the class of Lorentz invariant equations (defined below). 

Since $M$ is a subset of $\mathbb{R}^{d+1}$ it follows that $M$ is orientable because an orientation $(d+1)$-form of $\mathbb{R}^{d+1}$ can be restricted to $M$. In the superluminal case, any vector field $T^\mu$ that is timelike w.r.t. $m_{\mu\nu}$ must also be timelike w.r.t. $g_{\mu\nu}$. It follows that $(M,g)$ is time orientable in the superluminal case. In the subluminal case, note that the null cone of $g^{\mu\nu}$ lies on or outside the null cone of $m^{\mu\nu}$ hence the 1-form $dx^0$ (for inertial frame coordinates $x^\mu$) is timelike w.r.t. $g^{\mu\nu}$. Therefore $T^\mu = -g^{\mu\nu} (dx^0)_\nu = -g^{0 \mu}$ defines a time orientation so $(M,g)$ is time orientable. Furthermore, this shows that $x^0$ is a global time function which implies that $(M,g)$ is stably causal in the subluminal case \cite{mukhanov}.

\subsection{The initial value problem}
\label{SecIVP1}

Let's now discuss the initial value problem for an equation of the form (\ref{eqndims}). Consider prescribing smooth initial data $(S,u,du)$ where $S$ is a hypersurface in $\mathbb{R}^{d+1}$ and $(u,du)$ are specified on $S$. Local well-posedness of the initial value problem requires that initial data is chosen so that $g(u,du)$ is Lorentzian and that $S$ must be spacelike w.r.t. $g(u,du)$. Given such data, one expects a unique hyperbolic solution of (\ref{eqndims}) to exist locally near $S$.\footnote{In fact, for a general equation this is expecting too much. We will discuss this in section \ref{theorems_intro} and Proposition \ref{PropLocUniqueness}.}

We'll say that a hyperbolic solution $(M,u)$ is a {\it development} of the data on $S$ if $S \subset M$ and the solution $(M,u)$ is consistent with the data on $S$. To discuss predictability, we would like to know whether $(M,u)$ is {\it uniquely} determined by the initial data $(S,u,du)$. A necessary condition for such uniqueness is that $(M,g)$ should be globally hyperbolic with Cauchy surface $S$. If $(M,g)$ is not globally hyperbolic then the solution in the region of $M$ beyond the Cauchy horizons $H^\pm(S)$ is not determined uniquely by the data on $S$. We will say that a hyperbolic solution $(M,u)$ is a {\it globally hyperbolic development} (GHD) of the initial data iff $(M,g)$ is globally hyperbolic with Cauchy surface $S$. 

A GHD $(M,u)$ is {\it extendible} if there exists another GHD $(M',u')$ with $M \subsetneq M'$ and $u=u'$ on $M$. We say that $(M,u)$ is a {\it maximal} globally hyperbolic development (MGHD) of the initial data if $(M,u)$ is not extendible as a GHD of the specified data on $S$. Note that a MGHD might be extendible but the extended solution will not be a GHD of the data on $S$: it will exhibit a Cauchy horizon for $S$. 

MGHDs play an important role in General Relativity.  In General Relativity, given initial data for the Einstein equation, there exists a {\it unique} (up to diffeomorphisms) MGHD of the data \cite{cbgeroch}. This MGHD is therefore the central object of interest in GR because it is the largest region of spacetime that can be uniquely predicted from the given initial data. Any well-defined question in the theory can be formulated as a question about the MGHD.\footnote{For example, the strong cosmic censorship conjecture asserts that, for suitable initial data, the MGHD is generically inextendible. The weak cosmic censorship conjecture asserts that, for asymptotically flat initial data, the MGHD generically has a complete future null infinity.}

Surprisingly, the subject of maximal globally hyperbolic developments for equations of the form (\ref{eqndims}) has not received much attention.\footnote{The only exceptions we are aware of are the sketches in \cite{Chr}, Chapter 2, page 40, and in \cite{sbierski}, Section 1.4.1, which  both do not mention the subtleties arising in the case of general wave equations, namely that for two GHDs $u_1 : U_1 \to \R$ and $u_2 :U_2 \to \R$ of the same initial data posed on a connected hypersurface we do not need to have that $U_1 \cap U_2$ is connected. For more on this see our detailed discussion in Section \ref{SecUniqueness}.} 
By analogy with the Einstein equation one might expect a unique MGHD for such an equation. We will see that this is indeed the case for superluminal equations but it is not always true for subluminal equations. The reason that this does not occur for the Einstein equation is that solving the Einstein equation involves constructing the background manifold (which gives flexibility) whereas in solving (\ref{eqndims}) the background manifold is fixed. It is this rigidity which leads to non-uniqueness of MGHDs for subluminal equations.

\section{Born-Infeld scalar in two dimensions}

\label{BI}

\subsection{Two dimensions}

\label{sec:subsupermap}

Let's now consider {\it Lorentz invariant} equations. By this we mean that we pick a Minkowski metric $m_{\mu\nu}$ on $\mathbb{R}^{d+1}$, with constant components in the canonical coordinates $x^\mu$, and we demand that isometries of $m_{\mu\nu}$ map solutions of the equation to solutions of the equation. We will assume that our equation has the form (\ref{eqndims}) where now $g=g(m,u,du)$ and $F=F(m,u,du)$ depend on the choice of $m$. 

The two-dimensional case is special because if $m$ is a Minkowski metric then so is 
\be 
\hat{m} = -m
\ee
Using this fact we can relate subluminal and superluminal equations. Define
\be
 \hat{g} (m,u,du) = -g(-m,u,du)
\ee
and 
\be
 \hat{F}(m,u,du) = -F(-m,u,du)
\ee
Now $u$ satisfies (\ref{eqndims}) if, and only if, it satisfies
\be
\label{hattedeq}
 \hat{g}^{\mu\nu}(\hat{m}, u,du) \partial_\mu \partial_\nu u = \hat{F}(\hat{m}, u,du)
\ee
We view this equation as describing a scalar field in 2d Minkowski spacetime with metric $\hat{m}$. It is easy to see that if (\ref{eqndims}) is a subluminal equation then (\ref{hattedeq}) is superluminal, and vice-versa. 

Since the above transformation reverses the overal sign of $m$ and $g$, it maps timelike vectors to spacelike vectors and vice-versa, i.e., the causal ``cones" of the two theories are the complements of each other. This means that any solution of a superluminal equation arises from a solution of the corresponding subluminal equation simply by interchanging the definitions of timelike and spacelike. For example, if one draws a spacetime diagram for a solution of the subluminal equation, with time running from bottom to top, then the same diagram describes a solution of the superluminal equation, with time running from left to right (or right to left: one still has the freedom to choose the time orientation).

In the Appendix we discuss some general properties of superluminal equations in two dimensions, in particular the question of whether solutions of such an equation can exhibit ``causality violation". 

\subsection{Born-Infeld scalar}

In two dimensional Minkowski spacetime, consider a scalar field with equation of motion obtained from the Born-Infeld action
\be
\label{BIaction}
 S =-\frac{1}{c} \int d^2 x \sqrt{ 1 + c m^{\mu\nu} \partial_\mu \Phi \partial_\nu \Phi }
\ee
where $c$ is a constant. By rescaling the coordinates we can set $c=\pm 1$. The case $c=1$ is the standard Born-Infeld theory. This theory is referred to as ``exceptional" because, unlike in most nonlinear theories, a wavepacket in this theory propagates without distortion and never forms a shock \cite{taniuti}. 

The equation of motion is
\be
\label{BIeom}
 g^{\mu\nu} \partial_\mu \partial_\nu \Phi=0
\ee
where
\be
 g^{\mu\nu} = m^{\mu\nu} - \frac{c m^{\mu\rho}m^{\nu\sigma} \partial_\rho\Phi \partial_\sigma\Phi}{\left(1 + cm^{\lambda \tau} \partial_\lambda \Phi \partial_\tau \Phi \right)}
\ee
The inverse of $g^{\mu\nu}$ is
\be
\label{gdef}
 g_{\mu\nu} = m_{\mu\nu} + c\partial_\mu \Phi \partial_\nu \Phi
\ee
A calculation gives
\be
 \det g_{\mu\nu} = - \left(1 +c m^{\rho\sigma} \partial_\rho \Phi \partial_\sigma \Phi \right)
\ee
Hence $g$ is a Lorentzian metric (i.e. the equation of motion is hyperbolic) if, and only if,
\be
\label{lorentzian}
 1 +c m^{\rho\sigma} \partial_\rho \Phi \partial_\sigma \Phi >0
\ee
In the language of section \ref{sec:subsuper}, a hyperbolic solution must satisfy this inequality. 

Consider a vector $V^\mu$. Note that
\be
  m_{\mu\nu} V^\mu V^\nu = g_{\mu\nu}V^\mu V^\nu - c \left( V \cdot \partial \Phi \right)^2
\ee
If $c=1$ then the final term is non-positive. Hence if $V$ is causal w.r.t. $g_{\mu\nu}$ then $V$ is causal w.r.t. $m_{\mu\nu}$, i.e., the null cone of $g$ lies on or inside that of $m$. However, for $c=-1$, the null cone of $m$ lies on or inside that of $g$. Hence the $c=+1$ theory is subluminal and the $c=-1$ is superluminal according to the definitions of section \ref{sec:subsuper}. 

The two theories are related by the transformation $(c,m,g) \rightarrow (-c,-m,-g)$ with $\Phi$ fixed. This is the map described in section \ref{sec:subsupermap}. 

\subsection{Relation to Nambu-Goto string}

It is well-known that the $c=1$ theory is a gauge-fixed version of an infinite Nambu-Goto string whose target space is $2+1$ dimensional Minkowski spacetime. The same is true for $c=-1$ except that the target space now has $+--$ signature, i.e., two time dimensions. The action of such a string is
\be
 S_{NG} = - \int d^2 x \sqrt{ - \det g}
\ee
where
\be
 g_{\mu\nu} = G_{AB} \partial_\mu X^A \partial_\nu X^B
\ee
with $G_{AB} = {\rm diag}(-1,1,c)$ ($c=\pm 1$), $x^\mu$ are worldsheet coordinates, and $X^A(x)$ are the embedding coordinates of the string. It is assumed that the worldsheet of the string is timelike, i.e., that $g_{\mu\nu}$ has Lorentzian signature.
Fixing the gauge as
\be
\label{gauge}
 x^0=X^0 \qquad x^1 = X^1
\ee
and defining $\Phi(x) = X^2(x)$, the action reduces to that of the Born-Infeld scalar described above, and the worldsheet metric $g_{\mu\nu}$ is the same as the effective metric given by equation (\ref{gdef}). Note that the $c=\pm 1$ theories are mapped to each other under the transformation $(G,g) \rightarrow (-G,-g)$. From the worldsheet point of view, this corresponds to interchanging the definitions of timelike and spacelike, as discussed above. 

Although the Born-Infeld scalar can be obtained from the Nambu-Goto string, we will not regard them as equivalent theories. We will view the BI scalar as a theory defined in a global 2-dimensional Minkowski spacetime. No such spacetime is present for the Nambu-Goto string. Of course any solution of the BI scalar theory can be ``uplifted" to give some solution for the Nambu-Goto string. However, the converse is not true because not all solutions of the Nambu-Goto string can be written in the gauge (\ref{gauge}). In particular, string profiles which ``fold back" on themselves as in Fig. \ref{fig:ngstringprofile} are excluded by this gauge choice. From the BI perspective, such configurations will look singular. Of course such singularities can be eliminated by returning to the Nambu-Goto picture. However, we will not do this: the point is that the BI scalar is our guide to possible behaviour of nonlinear scalar field theories in 2d Minkowski spacetime, and most such theories do not have any analogue of the Nambu-Goto string interpretation. 

\subsection{Non-uniqueness}

\label{sec:nonuniqueness}

We can use the Nambu-Goto string to explain heuristically why there is a problem with the {\it subluminal} Born-Infeld scalar theory. (The superluminal case is harder to discuss heuristically because in this case the Nambu-Goto target space has two time directions.) Consider a left moving and a right moving wavepacket propagating along the string. As we will review below, if the wavepackets are sufficiently strong, when they intersect then the string can fold back on itself as described above. This is shown in Fig. \ref{fig:ngstringprofile}. 
When this happens, the field $\Phi$ ``wants to become multi-valued". But this is not possible in the BI theory because $\Phi$ is a scalar field in 2d Minkowski spacetime so $\Phi$ must be single-valued. 

\begin{figure}
	\centering
	\includegraphics[width=0.5\linewidth]{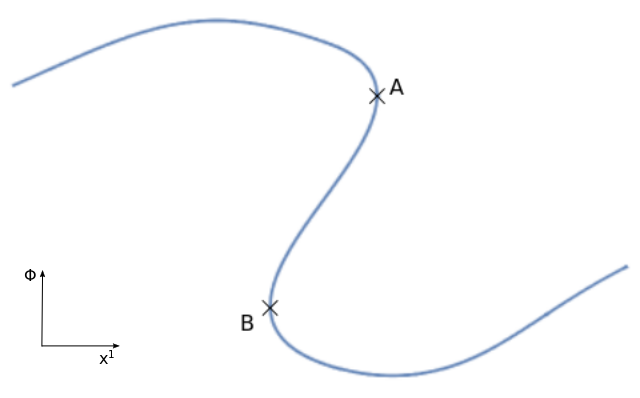}
	\caption{An example of the string folding back on itself. The gradient is infinite at points $A$ and $B$.}
	\label{fig:ngstringprofile}
\end{figure}

Clearly we have to ``choose a branch" of the solution $\Phi$ at each point of 2d Minkowsi spacetime. We want to do this so that the solution is as smooth as possible. There are two obvious ways of doing this. We could start from the left of the string and extend until we reach the point $A$ of infinite gradient as shown in Fig \ref{fig:ngstringprofile}. But beyond this point we have to jump to the other branch, so the solution is discontinuous as shown in Fig. \ref{fig:ngstringprofilea}. If the discontinuity is approached from the left then the gradient of $\Phi$ diverges as we approach $A$. However, if approached from the right the gradient remains bounded up to the discontinuity at $A$. Following out this procedure for the full spacetime produces a globally defined solution of the Born-Infeld theory. After some time, the wavepackets on the Nambu-Goto string separate and the resulting Born-Infeld solution becomes continuous again. 

Now note that instead of starting on the left and extending to point $A$ we could have started on the right and extended to point $B$. Now the discontinuity would occur at $B$ instead of $A$. So now the solution appears as shown in Fig. \ref{fig:ngstringprofileb}. Approaching the discontinuity from the right, the gradient of $\Phi$ diverges at $B$. However approaching from the left, the gradient remains bounded up to the discontinuity at $B$. As above, this procedure gives a globally defined solution of the Born-Infeld theory. This is clearly a {\it different} solution from the solution discussed in the previous paragraph. 

\begin{figure}[h!]
	\centering
	\begin{minipage}{0.45\textwidth}
		\centering
		\includegraphics[width=0.8\linewidth]{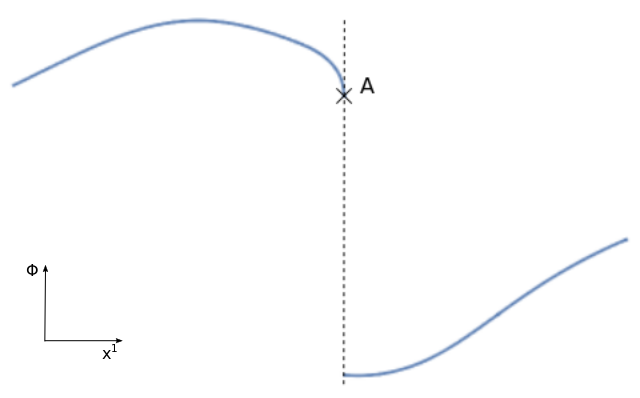}
		\caption{Solution with discontinuity at $A$.}
		\label{fig:ngstringprofilea}
		
	\end{minipage}\hfill
	\begin{minipage}{0.45\textwidth}
		\centering
		\includegraphics[width=0.8\linewidth]{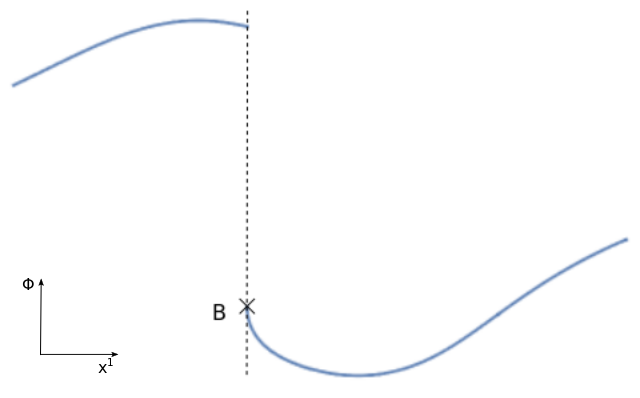}
		\caption{Different solution with discontinuity at $B$.}
		\label{fig:ngstringprofileb}
	\end{minipage}
\end{figure}

Starting from initial data prescribed on some line $S$ in the far past, the above constructions produce two {\it different} solutions which agree with the data on $S$. Now non-uniqueness is to be expected because the solution $\Phi$ is singular (at $A$ or $B$), so the corresponding spacetimes $(M,g)$ will not be globally hyperbolic. Therefore lack of uniqueness is to be expected beyond the Cauchy horizon. However, we will show, in the subluminal case, that the lack of uniqueness {\it occurs before a Cauchy horizon forms}. In other words, the two solutions disagree in a region which belongs to $D^+(S)$ for both solutions. This implies that the two solutions cannot arise from the same MGHD of the data on $S$. Therefore MGHDs are not unique.  

Clearly there are other ways we could construct Born-Infeld solutions from the Nambu-Goto solution: we do not have to take the discontinuity to occur at either point $A$ or at point $B$, we could take it to occur at any point between $A$ and $B$. This leads to an infinite set of possible solutions, and an infinite set of distinct MGHDs.

The above discussion was for the subluminal ($c=1$) theory. We will show below that this problem does not occur for the superluminal theory. This is because, in the superluminal theory, from the 2d Born-Infeld perspective, $A$ and $B$ are {\it timelike} separated with $B$ (say) occuring to the future of $A$. This implies that $B$ lies to the future of the infinite gradient singularity at $A$ hence $B$ cannot belong to $D^+(S)$ if $S$ is a surface to the past of $A$. Therefore there is a unique choice of branch in the superluminal theory. In this theory there is a unique MGHD. 
 
\subsection{General solution}

\label{sec:gensol}

The $c=1$ (subluminal) BI scalar theory was solved by Barbashov and Chernikov \cite{Barbashov:1966frq,Barbashov:1967avx}. We will follow the notation of Whitham \cite{whitham}, who gives a nice summary of their work. Because the superluminal and subluminal theories are related as discussed above, it is easy to write down the general hyperbolic\footnote{This solution was obtained using the method of characteristics which only works when the equation is hyperbolic so only hyperbolic solutions are obtained using this method.} solution for both cases. 
Write the Minkowski metric as
\be
\label{minkbi}
m=-c (dx^0)^2 +c (dx^1)^2
\ee
and define null coordinates
\be
 \xi = x^1-x^0 \qquad \eta = x^1+x^0
\ee
The solution is written in terms of a mapping $\Psi : \mathbb{R}^2 \rightarrow \mathbb {R}^2$ given by
\be
 \Psi : (\rho,\sigma) \mapsto (\xi(\rho,\sigma),\eta(\rho,\sigma))
\ee
where
\be
\label{xisol}
 \xi(\rho,\sigma) = \rho - \int_{-\infty}^\sigma \Phi_2'(x)^2 dx
 \ee
 and
 \be
\label{etasol}
\eta(\rho,\sigma)= \sigma + \int_\rho^\infty \Phi_1'(x)^2 dx
\ee
with $\Phi_1(\rho)$ and $\Phi_2(\sigma)$ smooth functions such that $\Phi_1'$ and $\Phi_2'$ decay at infinity fast enough to ensure that the integrals converge.\footnote{The latter assumption could be relaxed by replacing the infinite limits of the integrals by finite constants.} These two functions can be viewed as specifying the profiles of left moving and right moving wavepackets.

Assuming that $\Psi$ is invertible we can write $\rho = \rho(\xi,\eta)$ and $\sigma=\sigma(\xi,\eta)$ and the solution is given by
\be
\label{Phi_sol}
 \Phi(\xi,\eta) = \Phi_1(\rho(\xi,\eta)) + \Phi_2(\sigma(\xi,\eta)) 
\ee 
We can state the above result as a theorem \cite{Barbashov:1966frq,Barbashov:1967avx,whitham}:
\begin{theorem}
Let $\Phi_1(\rho)$ and $\Phi_2(\sigma)$ be smooth functions defined for all $(\rho,\sigma) \in \R^2$. Let $V$ be a connected open subset of $\mathbb{R}^2$. If the map $\Psi:V \rightarrow U \subset \mathbb{R}^2$ defined by (\ref{xisol}) and (\ref{etasol}) is a diffeomorphism then (\ref{Phi_sol}) defines a smooth solution $\Phi: U \rightarrow  \mathbb{R}$ of the Born-Infeld scalar equation of motion. 
\label{diffeo_theorem}
\end{theorem}

Clearly it will be important to determine whether or not $\Psi$ is a diffeomorphism.  
 \begin{lemma}
 A necessary (although not sufficient) condition for $\Psi:V \rightarrow U$ to be a diffeomorphism is that either $ \Phi_1'(\rho)^2 \Phi_2'(\sigma)^2 <1$ throughout $V$ or $ \Phi_1'(\rho)^2 \Phi_2'(\sigma)^2 >1$ throughout $V$. 
\label{nec_cond}
\end{lemma}
\begin{proof}
The Jacobian of the map $\Psi$ is
\be
\label{jacobian}
 \det \frac{\partial(\xi,\eta)}{\partial (\rho,\sigma)} = 1-\Phi_1'(\rho)^2 \Phi_2'(\sigma)^2
\ee 
hence a necessary condition for $\Psi$ to define a diffeomorphism is that the RHS cannot vanish at any point of $V$. Since $V$ is connected the result follows immediately.
\end{proof}

A point on the boundary $\partial V$ at which $\Phi_1' \Phi_2' = 1$ corresponds to a singularity:

\begin{lemma}
Assume that $\Psi:V\rightarrow U$ is a diffeomorphism such that $\Phi_1'(\rho) \Phi_2'(\sigma) \rightarrow 1$ as $(\rho,\sigma) \rightarrow (\rho_0,\sigma_0)$ for some $(\rho_0,\sigma_0) \in \partial V$. Let $\gamma:(0,1) \rightarrow V$ be a smooth curve with $\gamma(t) \rightarrow (\rho_0,\sigma_0)$ as $t \rightarrow 1$. Then the gradient of the solution $\Phi$ at the point $\Psi(\gamma(t))$ diverges as $t \rightarrow 1$. 
\label{lemma_sing}
\end{lemma}
\begin{proof}
A calculation gives 
\be \label{NullDerivativesRhoSigma}
 \partial_\xi \Phi = \frac{\Phi_1'(\rho)}{1-\Phi_1'(\rho) \Phi_2'(\sigma)} \qquad \partial_\eta\Phi = \frac{\Phi_2'(\sigma)}{1-\Phi_1'(\rho) \Phi_2'(\sigma)}
\ee
The result follows immediately.
\end{proof}

It can be shown similarly that points of $\partial V$ where $\Phi_1' \Phi_2' = -1$ correspond to a divergence in the second derivative of $\Phi$ although we will not need this result below.

We will be mainly interested in causal properties of the metric $g$ defined by (\ref{gdef}). If $\Psi:V \rightarrow U$ is a diffeomorphism then we can introduce $(\rho,\sigma)$ as coordinates on $V$. The metric $g$ defined by (\ref{gdef}) takes a simple form in these coordinates:

\begin{lemma}
Consider a Born-Infeld solution constructed as in Theorem \ref{diffeo_theorem}. In coordinates $(\rho,\sigma)$, the metric (\ref{gdef}) is
\be
\label{grhosigma}
 g =c  \left( 1 + \Phi_1'(\rho) \Phi_2'(\sigma) \right)^2 d\rho d\sigma
\ee
\label{glemma}
\end{lemma}
\begin{proof}
Direct calculation using (\ref{xisol}), (\ref{etasol}) and (\ref{Phi_sol}). 
\end{proof}

Note that the vector fields $\partial/\partial \rho$ and $\partial/\partial \sigma$ are null w.r.t. $g$. Let's determine whether they are future or past directed. Recall (section \ref{sec:subsuper}) that the time-orientation for $g$ is determined by a choice of time orientation for Minkowski spacetime. 

\begin{lemma}
Consider a Born-Infeld solution constructed as in Theorem \ref{diffeo_theorem}. In the subluminal case, $\partial/\partial \rho$ is past-directed and $\partial/\partial \sigma$ is future-directed w.r.t. $g$. In the superluminal case, if $\Phi_1'(\rho)^2 \Phi_2'(\sigma)^2 < 1$ then $\partial/\partial \rho$ and $\partial/\partial \sigma$ are both future directed whereas if $\Phi_1'(\rho)^2 \Phi_2'(\sigma)^2 > 1$ then they are both past-directed. In either case the spacetime $(U,g)$ is stably causal.
\label{lemma_orientation}
\end{lemma}
\begin{proof}
In the subluminal case ($c=1$) we know (section \ref{sec:subsuper}) that $x^0$ is a global time function for the spacetime $(U,g)$ so this spacetime is stably causal. From (\ref{xisol}) and (\ref{etasol}) one finds $\partial x^0/\partial \rho<0$ and $\partial x^0/\partial \sigma>0$ and the result follows.

In the superluminal case ($c=-1$), $\partial/\partial x^1$ is timelike w.r.t. $m$ so (section \ref{sec:subsuper}) we choose $\partial/\partial x^1$ as a time-orientation on $(V,g)$. A calculation gives
\be
 \frac{\partial}{\partial x^1} = \frac{1}{1-\Phi_1'(\rho)^2 \Phi_2'(\sigma)^2} \left[ ( 1+ \Phi_2'(\sigma)^2) \frac{\partial}{\partial \rho} + ( 1+ \Phi_1'(\rho)^2) \frac{\partial}{\partial \sigma}  \right] 
\ee
The inner products (w.r.t. $g$) of $\partial /\partial x^1$ with $\partial/\partial \rho$ and $\partial/\partial \sigma$ can be calculated using (\ref{grhosigma}). Clearly these inner products have the opposite sign to $1-\Phi_1'(\rho)^2 \Phi_2'(\sigma)^2$ and so $\partial/\partial \rho$ and $\partial/\partial \sigma$ are both future directed if this quantity is positive and past directed if it is negative. If $\Phi_1'(\rho)^2 \Phi_2'(\sigma)^2<1$ then let $X = \partial/\partial \rho + \partial/\partial \sigma$, which is future-directed and timelike w.r.t. $g$. We then have $g_{\mu\nu} X^\nu \propto -[d (\rho+\sigma)]_\mu$ hence $\rho+\sigma$ is a global time function for $(U,g)$ and so $(U,g)$ is stably causal. Similarly if $\Phi_1'(\rho)^2 \Phi_2'(\sigma)^2 > 1$  then $-(\rho+\sigma)$ is a global time function for $(U,g)$. 
 \end{proof}

In the superluminal case, this proves that solutions constructed using Theorem \ref{diffeo_theorem} cannot exhibit any violation of causality. However, we note that there may be solutions of (\ref{BIeom}) that cannot be obtained using Theorem \ref{diffeo_theorem}. Such solutions would requires multiple charts $V_\alpha$, each with corresponding coordinates $(\rho_\alpha,\sigma_\alpha)$ and diffeomorphisms $\Psi_\alpha$. In any given chart the solution will take the form described above. With multiple charts, it may not be possible to construct a global time function for the superluminal theory. 

We are interested in globally hyperbolic developments of initial data. It is very easy to determine whether or not a solution constructed using Theorem \ref{diffeo_theorem} is globally hyperbolic:

\begin{lemma}
Consider a Born-Infeld solution constructed as in Theorem \ref{diffeo_theorem}. Then $(U,g)$ is globally hyperbolic with Cauchy surface $S$ if, and only if, $(V,\hat{m})$ is globally hyperbolic with Cauchy surface $\Sigma=\Psi^{-1}(S)$, where $\hat{m} = c d\rho d\sigma$. 
\label{glob_hyp_lemma}
\end{lemma}
\begin{proof}
This is an immediate consequence of (\ref{grhosigma}) which shows that $g$ and $\hat{m}$ define causally equivalent metrics on $V$. (Here we are not bothering to distinguish the metric $g$ on $U$ and the metric on $V$ defined by pull-back of $g$ w.r.t. $\Psi$.)
\end{proof}

Thus global hyperbolicity can be checked using the flat metric $\hat{m}$ on $V$. More generally, the causal properties of $(U,g)$ are the same as those of the flat spacetime $(V,\hat{m})$.

We will show that, given initial data on a surface $S$, there exist multiple distinct maximal globally hyperbolic developments in the subluminal case ($c=1$) but there is a unique MGHD in the superluminal ($c=-1$) case. This difference can be traced to the following property:

\begin{lemma}
Let $p,\,q$ be distinct points such that $\Psi(p)=\Psi(q)$. Then the straight line connecting $p,q$ in the $(\rho,\sigma)$ plane is spacelike w.r.t. $\hat{m}$ in the subluminal case and timelike in the superluminal case. 
\label{sep_lemma}
\end{lemma}
\begin{proof}
Let $p$ and $q$ have coordinates $(\rho_2,\,\sigma_2)$ and  $(\rho_1,\,\sigma_1)$ respectively.
From equations (\ref{xisol}) and (\ref{etasol}) we have
\begin{equation}
\delta \rho \equiv \rho_2-\rho_1=\int_{\sigma_1}^{\sigma_2}\Phi_2'(x)^2 dx, \qquad \delta \sigma \equiv \sigma_2-\sigma_1=\int_{\rho_1}^{\rho_2}\Phi_1'(x)^2 dx. \label{eq:difference}
\end{equation}
From the first equation we see that $\delta \sigma = 0$ implies $\delta \rho=0$ and the second equation gives the converse. Hence $\delta \rho=0$ if, and only if, $\delta \sigma=0$, i.e., $p=q$. Since we are assuming $p \ne q$ we must have $\delta \rho \ne 0$ and $\delta \sigma \ne 0$. The first equation then implies that $\delta \rho$ has the same sign as $\delta \sigma$ so
\begin{equation}
\label{deltarhodeltasigma}
 \delta \rho\, \delta \sigma >0.
\end{equation}
The result follows from the definition of $\hat{m}$ in Lemma \ref{glob_hyp_lemma}. 
\end{proof}

Theorem \ref{diffeo_theorem} defines a solution in a subset $U$ of Minkowski spacetime. The following theorem \cite{Barbashov:1967avx} guarantees a {\it global} solution:

\begin{theorem}
Let $\Phi_1$ and $\Phi_2$ be smooth functions on the real line such that the integrals in (\ref{xisol}) and (\ref{etasol}) converge for $\rho \rightarrow -\infty$ and $\sigma \rightarrow \infty$. Assume that  $\Phi_1'(\rho)^2 \Phi_2'(\sigma)^2 < 1$ for all $(\rho,\sigma)$. Then the map $\Psi: \mathbb{R}^2 \rightarrow \mathbb{R}^2$ defined by (\ref{xisol}), (\ref{etasol}) is a diffeomorphism and so the Born-Infeld solution of Theorem \ref{diffeo_theorem} is a globally defined smooth solution.
\label{global_theorem}
\end{theorem}
\begin{proof}
Following \cite{Barbashov:1967avx}, use (\ref{etasol}) to write
\be
\label{sigmasol}
 \sigma = \sigma_\eta(\rho) \equiv \eta - \int_\rho^\infty \Phi_1'(x)^2 dx
 \ee
and then substitute into (\ref{xisol}) to obtain
\be
\label{rhoeq}
 \xi = F(\rho;\eta) \equiv  \rho - \int_{-\infty}^{\sigma_\eta(\rho)} \Phi_2'(x)^2dx
\ee
We want to use this equation to determine $\rho$ as a function of $\xi,\eta$. A calculation gives
\be
\label{dFdrho}
 \left( \frac{\partial F}{\partial \rho} \right)_\eta = 1 - \Phi_1'(\rho)^2 \Phi_2'(\sigma_\eta(\rho))^2
\ee
So $\Phi_1'(\rho)^2 \Phi_2'(\sigma)^2 < 1$ implies that $F$ is a strictly increasing function of $\rho$ and hence there exists at most one solution $\rho$ of (\ref{rhoeq}) for any $(\xi,\eta)$. Given a solution for $\rho$, (\ref{sigmasol}) determines $\sigma$ uniquely. This proves that the map $\Psi$ is injective. 

We now show that there exists exactly one solution of (\ref{rhoeq}). Our assumptions on $\Phi_1$ imply that $\sigma_\eta(\rho)\rightarrow \eta$ as $\rho \rightarrow \infty$ and $\sigma_\eta(\rho) \rightarrow \eta-C$ as $\rho \rightarrow - \infty$ where $C = \int_{-\infty}^\infty \Phi_1'(x)^2 dx$. Our assumptions on $\Phi_1$ imply that $\Phi_1'(\rho) \rightarrow 0$ as $\rho \rightarrow \pm \infty$. So now from (\ref{dFdrho}) we see that $(\partial F/\partial \rho)_\eta \rightarrow 1$ as $ \rho \rightarrow \pm \infty$. So, at fixed $\eta$, $F$ is strictly increasing and has gradient $1$ for $\rho \rightarrow \pm \infty$. This implies that, at fixed $\eta$, the map $\rho \rightarrow F(\rho;\eta)$ is a bijection from $\mathbb{R}$ to itself. Hence there exists exactly one solution of (\ref{rhoeq}) for given $(\xi,\eta)$. Hence $\Psi$ is a bijection. That $\Psi$ is a diffeomorphism now follows from the fact that the RHS of equation (\ref{jacobian}) is everywhere non-zero.
\end{proof}

\begin{lemma}
The solution of Theorem \ref{global_theorem} is globally hyperbolic.
\end{lemma}
\begin{proof}
This follows immediately from Lemma \ref{glob_hyp_lemma} because $(V,\hat{m})=(\mathbb{R}^2,\hat{m})$ so the causal structure w.r.t. $g$ is the same as 2d Minkowski spacetime. In the subluminal case, surfaces of constant $x^0$ are Cauchy because $x^0$ is a global time function. In the superluminal case, a surface of constant $\rho + \sigma$ is Cauchy since the proof of Lemma \ref{lemma_orientation} shows that $\rho+\sigma$ is a global time function. 
\end{proof}

As discussed above, we need $\Psi$ to be a diffeomorphism for equations \eqref{xisol}, \eqref{etasol}, \eqref{Phi_sol} to define a solution of the Born-Infeld scalar. However, we note that these equations define a solution of the Nambu-Goto string irrespective of whether or not $\Psi$ is a diffeomorphism. To see this, take $(\rho,\sigma)$ as worldsheet coordinates and replace the LHS of (\ref{xisol}) and (\ref{etasol}) by $X^1-X^0$ and $X^1+X^0$ respectively. Together with $X^2=\Phi=\Phi_1(\rho)+\Phi_2(\sigma)$ this specifies a globally well defined embedding of the string worldsheet into $\mathbb{R}^3$. The worldsheet metric is (\ref{grhosigma}). The solution describes a superposition of left moving and right moving wavepackets described by $\Phi_1(\rho)$ and $\Phi_2(\sigma)$, each travelling at the speed of light with respect to $g$. The worldsheet metric degenerates at points where $\Phi_1'(\rho) \Phi_2'(\sigma) = - 1$. These correspond to ``cusp" singularities  at which the string worldsheet becomes null. The string is smooth at points where $\Phi_1'(\rho) \Phi_2'(\sigma) = +1$, which correspond to points of infinite gradient like A or B in Fig. \ref{fig:ngstringprofile}. 

\subsection{Example of non-uniqueness in subluminal case}

\label{SecIVP}

We start by recording that the subluminal Born-Infeld scalar equation of motion \eqref{BIeom} written out in coordinates $x^\mu$ reduces to:
\begin{equation} \label{Eqc1}
-\big( 1 + (\partial_{x^1}\Phi)^2\big) \partial_{x^0}^2 \Phi + 2\partial_{x^0}\Phi \partial_{x^1} \Phi \cdot \partial_{x^0}\partial_{x^1} \Phi + \big(1-(\partial_{x^0}\Phi)^2\big)\partial_{x^1}^2 \Phi =0 \;.
\end{equation}
In this section we will demonstrate the existence of two different maximal globally hyperbolic developments (MGHDs) arising from the same initial data for the above equation. We will do this with an example involving a specific choice of the functions $\Phi_1$ and $\Phi_2$, and construct solutions using Theorem \ref{diffeo_theorem}. 

 To construct a solution of \eqref{Eqc1} we choose functions
\be
 \Phi_1(x) = \Phi_2(x) = \phi(x) \equiv \int_{-\infty}^x a e^{-t^2}dt
\label{phi_def}
\ee
where $a>1$ is a constant. This gives $\Phi'_1(x) = \Phi'_2(x) = \phi'(x) :=  a e^{-x^2}$. Hence $\Phi'_1(\rho)^2 \Phi'_2(\sigma)^2 = a^2 e^{-r^2}$ where $r = \sqrt{\rho^2 + \sigma^2}$. Let $r_0 = \sqrt{2 \ln(a)}$. In the $(\rho, \sigma)$ plane we have 
\begin{equation}\label{PhiPrime}
\begin{aligned}
\Phi_1'(\rho)\Phi_2'(\sigma) &< 1 \quad \textnormal{ outside the circle of radius } r_0\\ 
\Phi_1'(\rho)\Phi_2'(\sigma) &> 1 \quad \textnormal{ inside the circle of radius } r_0 \\ 
\Phi_1'(\rho)\Phi_2'(\sigma) &= 1 \quad  \textnormal{ on the circle of radius } r_0\;.
\end{aligned}
\end{equation}
Theorem \ref{global_theorem} does {\it not} apply, and we do not have a global solution. Indeed the map $\Psi$ defined by this choice of $\Phi_1$ and $\Phi_2$ is not injective on $\mathbb{R}^2$. In section \ref{numerics} we will determine numerically the region in which injectivity fails and explain heuristically how this leads to non-uniqueness of MGHDs. Then, in section \ref{nutheorem} we will use the above example to prove a theorem establishing non-uniqueness of MGHDs.

\subsubsection{Numerical demonstration of non-uniqueness of MGHDs}

\label{numerics}

{\it Step 1.} We start by showing that, for the example \eqref{phi_def}, $\Psi$ is non-injective on $\mathbb{R}^2$ but its restriction to a subset $V'$ of $\mathbb{R}^2$ {\it is} injective and so we obtain a solution of \eqref{Eqc1} via Theorem \ref{diffeo_theorem}.
 
The region in which injectivity of $\Psi$ fails can be determined numerically\footnote{These plots were determined using the \textit{FindRoot} function in Mathematica to numerically construct an inverse function. A different starting point for the numerics was used for each region.} and is shown in Fig. \ref{fig:Urhosigma}: three open regions $D$, $E$ and $F$ of the $(\rho,\sigma)$ plane map to the same region $X$ of Minkowski spacetime. Here $D$ is the disc $r<r_0$. The region $X\equiv \Psi(D)$ is shown in Fig. \ref{fig:Uxieta}. The inverse image of any point in $X$ consists of three points, one in each of $D$, $E$ and $F$.\footnote{In the Nambu-Goto string interpretation, $X$ is is the region of spacetime in which the string worldsheet folds back on itself as in Fig. \ref{fig:ngstringprofile}.} However, the map $\Psi$ {\it is} injective on $V' \equiv \mathbb{R}^2  \backslash \overline{D \cup E \cup F}$ and \eqref{PhiPrime} implies that the condition of Lemma \ref{nec_cond} is satisfied on $V'$ so $\Psi$ defines a diffeomorphism from $V'$ to $U' \equiv \Psi(V') = \mathbb{R}^2 \backslash \overline{X}$. Hence Theorem \ref{diffeo_theorem} defines a solution $\Phi: U' \rightarrow \mathbb{R}^2$ of \eqref{Eqc1}. 
\begin{figure}[h!]
	\centering
	\begin{minipage}{0.45\textwidth}
		\centering
		\includegraphics[width=0.95\linewidth]{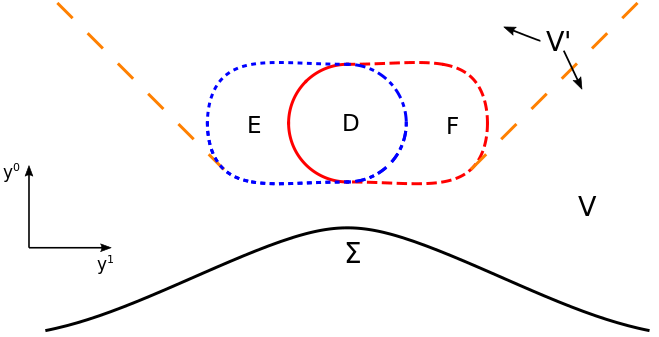}
		\caption{Plot of the $(\rho,\sigma)$ plane in coordinates $(y^0,y^1)$ defined by \eqref{ydef}. The open sets $D,E,F$ have the same image under $\Psi$. The dotted blue (dashed red) curve has the same image as the dot-dashed blue (solid red) curve. 
	 $\Psi$ is injective on $V'$, the complement of $\overline{D \cup E \cup F}$. The orange large dashed lines are the future Cauchy horizon for the initial surface $\Sigma$ in the flat spacetime $(V',\hat{m})$. $V$ is the region of $V'$ lying to the past of this Cauchy horizon.
		}
		\label{fig:Urhosigma}
		
	\end{minipage}\hfill
	\begin{minipage}{0.45\textwidth}
		\centering
		\includegraphics[width=0.75\linewidth]{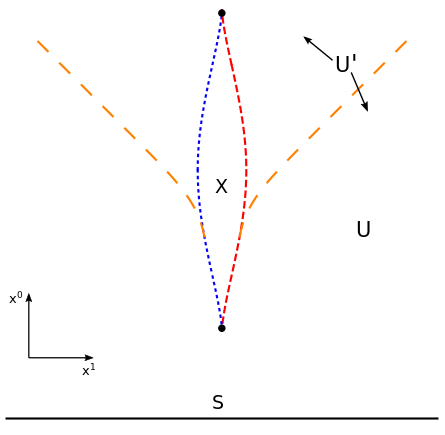}
		\caption{Minkowski spacetime with coordinates $(x^0,x^1)$. The region $X$ is the image of $D$ (or $E$ or $F$) under $\Psi$. The region $U'$ is the complement of $\overline{X}$. The two black dots are points at which the gradient of the solution $\Phi:U' \rightarrow \mathbb{R}$ diverges (by Lemma \ref{lemma_sing}). The orange large dashed lines are the future Cauchy horizon of $S$ in the spacetime $(U',g)$. $U$ is the region of $U'$ lying to the past of this Cauchy horizon. $\Phi:U \rightarrow \mathbb{R}$ is a GHD of the initial data on $S$.
		}
		\label{fig:Uxieta}
	\end{minipage}
\end{figure}

{\it Step 2.} Next we will show that the solution $\Phi: U' \rightarrow \mathbb{R}^2$ is not a GHD but, by restricting its domain, we can construct a GHD. 

Lemma \ref{glob_hyp_lemma} establishes that $(U',g)$ is globally hyperbolic if, and only if, $(V',\hat{m})$ is globally hyperbolic. Introduce coordinates $(y^0,y^1)$ in the $(\rho,\sigma)$ plane such that
\be
\label{ydef}
 \rho = y^1 -y^0 \qquad \sigma = y^1 + y^0.
\ee
In these coordinates we have
\be
 \hat{m} = -(dy^0)^2 + (dy^1)^2
\ee
and Lemma \ref{lemma_orientation} implies that $\partial/\partial y^0$ is future-directed. The causal properties of $\hat{m}$ (and hence $g$) in the $(\rho,\sigma)$ plane are easy to read off from Fig. \ref{fig:Urhosigma}. In particular it is clear that the region $V'$ is not globally hyperbolic w.r.t. $\hat{m}$ so $U'$ is not globally hyperbolic w.r.t. $g$. Consider an initial surface $S$ defined by $x^0 = -T$, as shown in Fig. \ref{fig:Uxieta}. Let $U$ be the domain of dependence of $S$ in $(U',g)$. Then by restricting $\Phi$ to $U$ we obtain a GHD $\Phi: U \rightarrow \mathbb{R}$ of the initial data on $S$. Appealing to Lemma \ref{glob_hyp_lemma}, $U=\Psi(V)$ where $V$ is the domain of dependence of $\Sigma \equiv \Psi^{-1}(S)$ in $(V',\hat{m})$. Viewed as a subset of $V'$, $V$ is bounded by the future Cauchy horizon shown in Fig. \ref{fig:Urhosigma}, which maps to a corresponding future Cauchy horizon in Fig. \ref{fig:Uxieta}. 

{\it Step 3.} Now we will show that the GHD $\Phi: U \rightarrow \mathbb{R}$ is not maximal and it can be smoothly extended to give a GHD $\Phi_a: U_a \rightarrow \mathbb{R}$ that contains part of region $X$. We will show that this extended GHD is smooth on the ``left" boundary of $X$ but singular on the ``right" boundary of $X$. 

We enlarge the GHD $\Phi: U \rightarrow \mathbb{R}$ by pushing the left large dashed orange line of Fig. \ref{fig:Urhosigma} into region $E$ until it is tangent to the boundary of $D$. Specifically, consider the region $V_a$ defined in Fig. \ref{fig:Erhosigma}. Since $V_a$ contains no points of $D$ or $F$, the map $\Psi$ is still injective on this enlarged region and still satisfies \eqref{jacobian}, hence $\Psi$ is a diffeomorphism and so Theorem \ref{diffeo_theorem} defines a solution $\Phi_a: U_a \rightarrow \mathbb{R}$ where $U_a = \Psi(V_a)$. Furthermore, $(V_a,\hat{m})$ is globally hyperbolic with Cauchy surface $\Sigma$ and so $(U_a,g)$ is globally hyperbolic with Cauchy surface $S$. Hence $\Phi_a$ is a GHD of the initial data on $S$. The region $U_a$ is shown in Fig. \ref{fig:Exieta}: it extends across the left boundary of $X$ all the way to the right boundary of $X$. This right boundary is not part of $U_a$, indeed the solution $\Phi_a$ is discontinuous across this boundary.\footnote{In the Nambu-Goto string interpretation, the string worldsheet on a surface of constant $x^0$ intersecting $X$ resembles Fig. \ref{fig:ngstringprofilea} with point $A$ on the right boundary of $X$.}

Consider a curve $\gamma_1$ approaching the right boundary of $X$ from the left (i.e. from within $X$) as in Fig. \ref{fig:Exieta}. Then $\Psi^{-1}(\gamma_1)$ is a curve approaching the solid red curve of Fig. \ref{fig:Erhosigma} from within $E$. Since $\Phi_1'\Phi_2'=1$ on this red curve, Lemma \ref{lemma_sing} implies that the gradient of $\Phi_a$ diverges along $\gamma_1$ as one approaches the boundary. Thus the gradient of $\Phi_a$ diverges along the right boundary of $X$ when approached from the left. On the other hand, if $\gamma_2$ is a curve approaching this boundary from the right (i.e. from outside $X$) as in Fig. \ref{fig:Exieta} then $\Psi^{-1}(\gamma_2)$ approaches the dotted red curve of Fig. \ref{fig:Erhosigma}, which is in the region where $\Phi_1'\Phi_2'<1$ so the gradient of $\Phi_a$ remains bounded along $\gamma_2$. Hence the gradient of $\Phi_a$ is bounded as one approaches the right boundary of $X$ from outside $X$.\begin{figure}[h!]
	\centering
	\begin{minipage}{0.45\textwidth}
		\centering
		\includegraphics[width=0.95\linewidth]{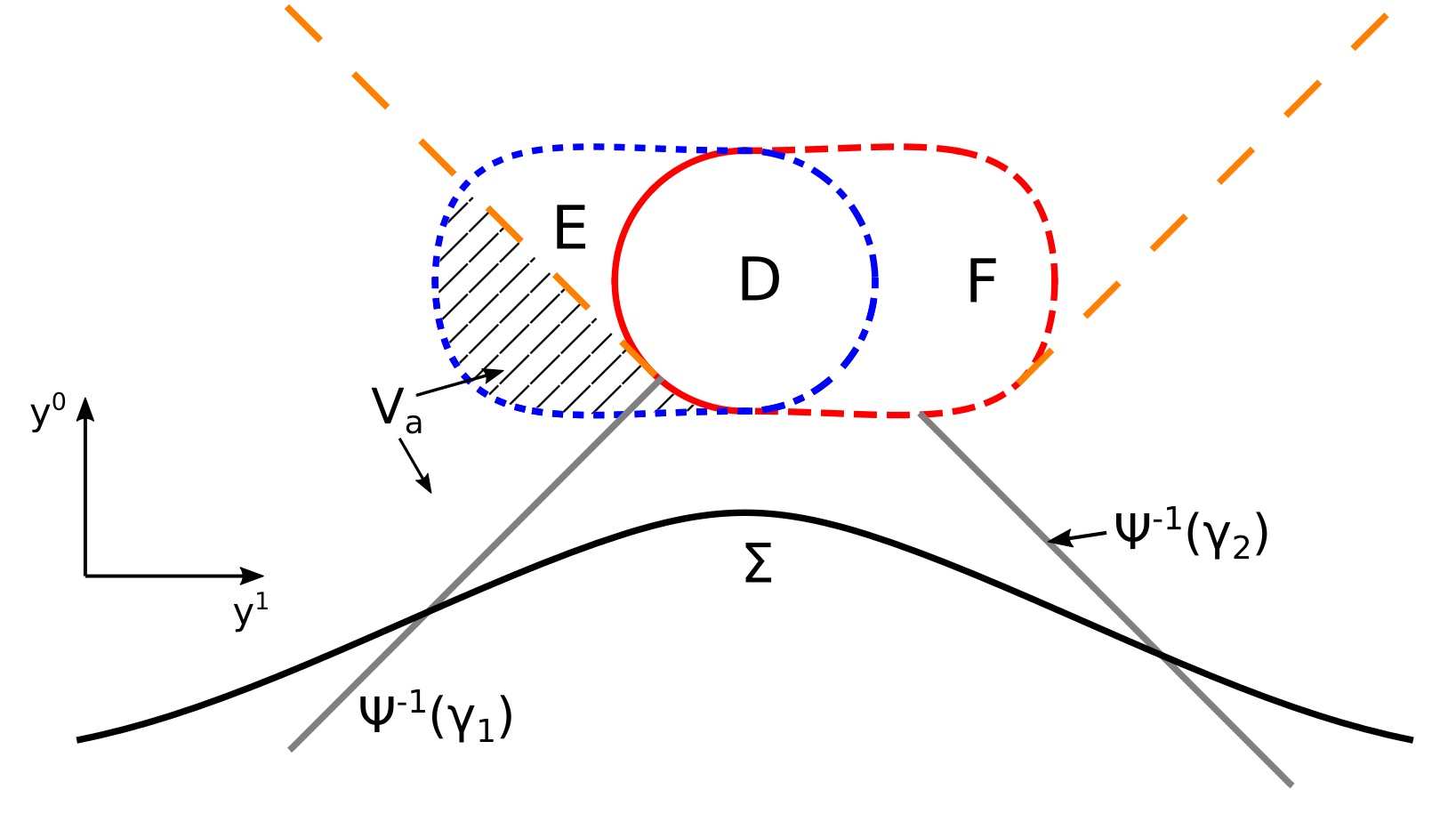}
		\caption{The large dashed orange line on the left is a line of constant $\sigma$ which is tangent to the boundary of $D$ at their point of contact. The region $V_a$ is the union of $V$ with the region to the past (w.r.t. $\hat{m}$) of this line and the shaded section of $E$. The future bounday of $V_a$ consists of the pair of large dashed orange null lines together with the (spacelike) sections of the solid and dashed red curves that connect them.}
		\label{fig:Erhosigma}
		
	\end{minipage}\hfill
	\begin{minipage}{0.45\textwidth}
		\centering
		\includegraphics[width=0.85\linewidth]{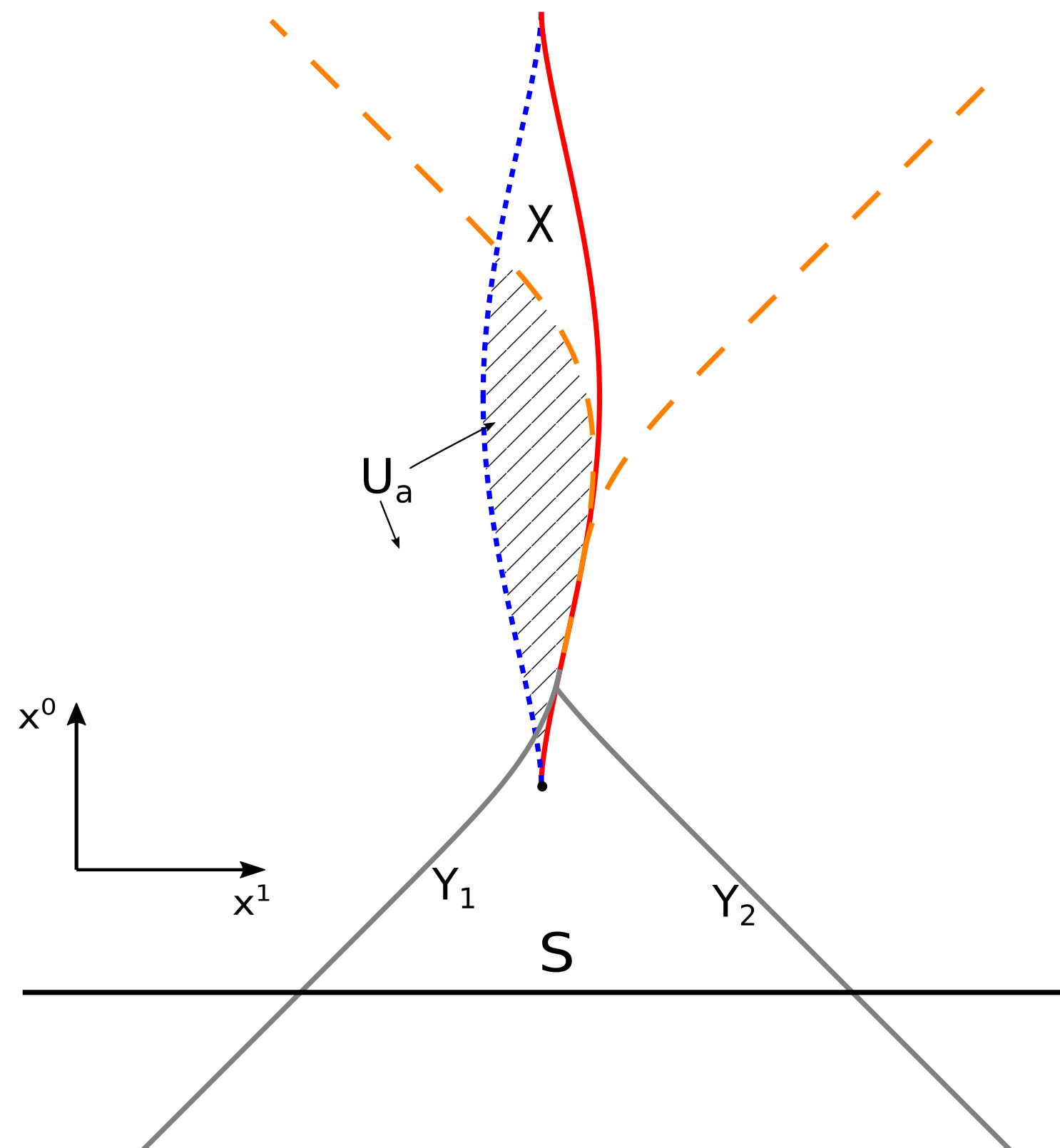}
		\caption{The region $U_a=\Psi(V_a)$ contains part of the left boundary of $X$ and extends up to the right boundary of $X$, where the gradient of the solution $\Phi_a$ diverges. The future (w.r.t. $g$) boundary of $U_a$ consists of the large dashed orange curves (null w.r.t. $g$) and a section of the right boundary of $X$ (spacelike w.r.t. $g$) starting at the black dot.}
		\label{fig:Exieta}
	\end{minipage}
\end{figure}

{\it Step 4.} Finally we show that there is a {\it different} way of extending $\Phi: U \rightarrow \mathbb{R}$ to give a GHD and that this implies non-uniqueness of MGHDs.

We construct this new extension of $\Phi: U \rightarrow \mathbb{R}$ as follows. Define $V_b$ to be the reflection of $V_a$ under $y^1 \rightarrow -y^1$. So $V_b$ is an extension of $V$ into region $F$. Everything we've said about $V_a$ is true also of $V_b$ and so this defines another GHD $\Phi_b : U_b \rightarrow \mathbb{R}$ where $U_b = \Psi(V_b)$. In this case, $U_b$ extends across the {\it right} boundary of $X$ all the way the the left boundary of $X$, where the gradient of $\Phi_b$ diverges when approaching from the right.\footnote{In the Nambu-Goto string interpretation, this corresponds to Fig. \ref{fig:ngstringprofilea} with point $B$ on the left boundary of $X$.}

We now have two different GHDs of the same intial data on $S$, $\Phi_a: U_a \rightarrow \mathbb{R}$ and $\Phi_b : U_b \rightarrow \mathbb{R}$. These two solutions agree in $U$ but they differ in $X$ because $\Phi_a$ has divergent gradient on the right boundary of $X$ whereas $\Phi_b$ has divergent gradient on the left boundary of $X$. Thus the corresponding maximal GHDs must differ in $X$. {\it This demonstrates the non-uniqueness of maximal GHDs for \eqref{Eqc1}.} 

We will now discuss this result and highlight properties of our example that are relevant to the general results of Section \ref{theorems}.

Consider the intersection $U_a \cap U_b$ shown in Fig. \ref{fig:intersection}. Note that this is {\it disconnected}, consisting of two connected components. One component contains $S$ but no points of $X$ and the other component is a subset of $X$. The two solutions agree on the former component but they disagree on the latter component. In Section \ref{theorems} we will prove that this disconnectedness is a necessary condition for two GHDs to differ in some region.  

\begin{figure}
	\centering
	\includegraphics[width=0.4\linewidth]{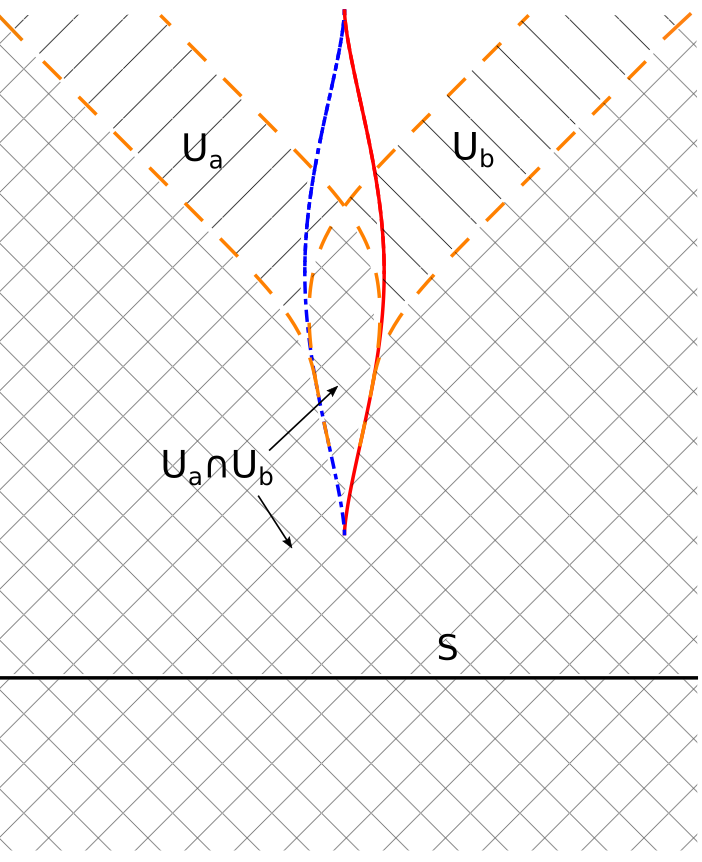}
	\caption{The regions $U_a$ and $U_b$ are given by the right/left hatching respectively. The intersection of these regions is disconnected, with one component lying inside $X$ and the other component (containing $S$) outside $X$.}
	\label{fig:intersection}
\end{figure}

Another point to emphasize is that the boundary of $U_a$ consists of a section (along the right boundary of $X$, between the lower black dot and the orange curves of Fig. \ref{fig:Exieta}), which can be approached from both sides (either the left or the right) within $U_a$. In other words $U_a$ {\it lies on both sides of its boundary}. (The same is true for $U_b$.) In section \ref{theorems} we will show that this property is a necessary condition for non-uniqueness of MGHDs. 

We have shown that there exist two distinct MGHDs arising from the same data on $S$. In fact one can show that there are infinitely many such MGHDs (cf section \ref{sec:nonuniqueness}). The different MGHDs all agree in the region $U$ but they differ in $X$. In section \ref{theorems} we will define the {\it maximal unique globally hyperbolic development} (MUGHD) $\Phi_{\rm max}:R \rightarrow \mathbb{R}$ of the initial data on $S$ as follows. $R$ is the largest open subset of Minkowski spacetime on which the solution is {\it uniquely} determined by the data on $S$. Such a development is necessarily globally hyperbolic with Cauchy surface $S$. For the above example, we have $R = U$ and $\Phi_{\rm max} = \Phi$. As we have seen, the solution $\Phi:U \rightarrow \mathbb{R}$ can be extended, whilst maintaining global hyperbolicity, but not in a unique way. From Figs \ref{fig:Urhosigma} and \ref{fig:Uxieta} we see that the future boundary of $R$ consists of a singular point (the lower black dot in Fig. \ref{fig:Uxieta}) from which emanate a pair of spacelike (w.r.t. $g$) curves which connect to a pair of null (w.r.t. $g$) curves. The solution can be smoothly, but not uniquely, extended across these spacelike and null curves.

The extendibility across the spacelike curves is a new kind of breakdown of predictability. Fig. \ref{fig:Uxieta} suggests that we should view these spacelike curves (the early time sections of the red and blue dotted curves) as a ``consequence" of the formation of a singularity (the black dot). This interpretation is suggested if one uses $x^0$ as a time function (e.g. in a numerical simulation). However, since these curves are spacelike, they are not in causal contact with the singularity. Furthermore, it is just as legitimate to use $y^0$ as a time function. From this point of view, Fig. \ref{fig:Urhosigma} shows that the spacelike curves form before (i.e. at earlier $y^0$) the singular point. So it is incorrect to ascribe the breakdown of predictability to the formation of the singularity. 

This behaviour is worrying. Given a development of the data on $S$, there is no general way of determining, {\it from the solution itself},  which region of it belongs to the MUGHD. To determine this region one has to construct all GHDs with the same initial data! This is much worse than the failure of predictability associated with the formation of a Cauchy horizon because the location of a Cauchy horizon within a development can be determined from the solution itself. 

How would the non-uniqueness of MGHDs manifest itself in, say, a numerical simulation? The answer is that the solution will depend not just on the initial data but also on the choice of time function. To see this, consider the globally hyperbolic development $\Phi_a: U_a \rightarrow \mathbb{R}$. Since $S$ is a Cauchy surface we can choose a global time function for $U_a$ such that $S$ is a surface of constant time. We can do the same for $\Phi_b : U_b \rightarrow \mathbb{R}$. Of course these two time functions are different but either could be used for a numerical evolution starting from the data on $S$. For points in the MUGHD $U$, the results of these two numerical evolutions will agree. However, for points in $X$, the results will disagree. In practice one would not know {\it a priori} which points belong to the MUGHD, i.e., one would not know in what region the results of the numerical evolution are independent of the choice of time function.\footnote{Since we are dealing with a subluminal theory, one could just declare that $x^0$ is a preferred time function and ignore the above problems. However this is unsatisfactory: if one uses $x^0$ as the time function (with $S$ a surface of constant $x^0$ at sufficiently early time) then from Fig. \ref{fig:Uxieta} the evolution must stop at the line of constant $x^0$ passing through the singularity corresponding to the (lower) black dot so one obtains only part of the MUGHD.} 

Note that, for any solution, the domain of dependence of $S$ defined using the Minkowski metric $m$ is a subset of the domain of dependence of $S$ defined using $g$. Hence a solution which is globally hyperbolic w.r.t. $g$ is also globally hyperbolic w.r.t. $m$. We could therefore ask about uniqueness of MGHDs defined w.r.t. $m$ instead of w.r.t. $g$. We'll refer to these as $m$-MGHDs. For the above example, there is indeed a unique $m$-MGHD: it is bounded to the future by two future-directed null (w.r.t. $m$) lines emanating from the lower black dot in Fig. \ref{fig:Uxieta}. We'll prove in Section \ref{theorems} that any subluminal equation always admits a unique $m$-MGHD, which is a subset of the MUGHD. However, if the speed of propagation w.r.t. $g$ is much less than the speed of propagation w.r.t. $m$ then the $m$-MGHD will not be a very useful concept because it will not contain a large part of the MUGHD. 

We have used the Born-Infeld scalar as an example exhibiting non-uniqueness of MGHDs. This example is rather artificial because there is a ``more fundamental" underlying theory, namely the Nambu-Goto string, for which there is no problem with predictability. However, our point is that if this pathological feature can occur for a particular scalar field theory then it is to be expected to occur also for other scalar field theories for which there is no analogue of the Nambu-Goto string interpretation. 

This ends the heuristic discussion of our example of non-uniqueness. We will now present a rigorous proof of the non-uniqueness of MGHDs.\footnote{Note that the regions $V_a$, $V_b$ etc in the proof of this theorem are defined slightly differently from the regions defined in the discusion above.}

\subsubsection{Theorem on non-uniqueness of MGHDs}

\label{nutheorem}

\begin{theorem} \label{ThmNonUniqueness}
For the equation \eqref{Eqc1} there exist two GHDs $\Phi_a : U_a \to \R$ and $\Phi_b : U_b \to \R$  of the same initial data posed on $\{x^0=0\}$ such that there exists an $x \in U_a \cap U_b$ with $\Phi_a(x) \neq \Phi_b(x)$.
\end{theorem}

\begin{proof}
We begin by remarking that we will prove the statement of the theorem with the hypersurface $\{x^0=0\}$ replaced by $\{x^0 = -T\}$ for $T\gg 1$. This represents no loss of generality since the equation \eqref{Eqc1} is invariant under translations in $x^0$. We will construct the two GHDs using Theorem \ref{diffeo_theorem} and Lemma \ref{glob_hyp_lemma}. 

We choose $\Phi_1$ and $\Phi_2$ as in \eqref{phi_def} and recall equation \eqref{PhiPrime}. We start by investigating the map $\Psi(\rho, \sigma) = \big(\xi(\rho, \sigma), \eta(\rho, \sigma)\big)$ defined by (\ref{xisol}) and (\ref{etasol}).  
\vspace*{3mm}

\textbf{Step 1:}   \emph{Analysis of the level sets of $\eta(\rho, \sigma)$.}

We begin by noticing that the function 
\begin{equation} \label{Feta}
\eta(\rho, \sigma) = \sigma + \int_\rho^\infty(\phi'(x))^2\,dx
\end{equation}
clearly satisfies $d\eta (\rho, \sigma) = d\sigma - (\phi'(\rho))^2 d\rho \neq 0$ for all $(\rho, \sigma) \in \R^2$, and thus its level sets are closed, embedded, one-dimensional submanifolds which foliate $\R^2$. The leaf $\{\eta = \eta_0\}$ can be written as a graph over the $\rho$-axis: $\sigma_{\eta_0}(\rho) = \eta_0 - \int_\rho^\infty(\phi'(x))^2\,dx$. It follows from $\frac{d\sigma_{\eta_0}}{d\rho} = (\phi'(\rho))^2 >0$, that the graph is strictly monotonically increasing. Moreover, we have $\sigma_{\eta_0}(\rho) \to \eta_0 -C_0 $ for $\rho \to -\infty$ and $\sigma_{\eta_0}(\rho) \to \eta_0 $ for $\rho \to +\infty$, where $C_0 := \int_{-\infty}^{\infty} (\phi'(x))^2\,dx >0$.

Next we investigate the qualitative behaviour of the intersection of the leaves of constant $\eta = \eta_0$ with the circle of radius $r_0 = \sqrt{2 \ln (a)}$. Let $\eta|_{r=r_0}$ denote the restriction of $\eta$ to the circle of radius $r_0$. Since the latter is compact, it follows that $\eta|_{r=r_0}$ takes on its minimum $\eta_{\min} := \min \eta|_{r = r_0}$ and its maximum $\eta_{\max} := \max \eta|_{r = r_0}$. Hence, the differential of $\eta|_{r =r_0}$ must have at least two zeros.

We now parametrise the circle of radius $r_0$ by $\gamma_\pm(\rho) = (\rho, \pm \sqrt{2\ln(a) - \rho^2})$ where $\rho \in (-\sqrt{2 \ln(a)}, \sqrt{2\ln(a)})$, and we compute $\dot{\gamma}_\pm(\rho) = \partial_\rho \mp \frac{\rho}{\sqrt{2\ln(a) - \rho^2}} \partial_\sigma$. It follows that
\begin{equation}\label{DerivativeEta}
d\eta|_{r=r_0}\big(\dot{\gamma}_\pm(\rho)\big) = \mp \frac{\rho}{\sqrt{2 \ln(a) - \rho^2}} - \big(\phi'(\rho)\big)^2
\end{equation}
Let us first consider the upper arc of the circle, i.e., $\sigma >0$ and the minus sign in \eqref{DerivativeEta}. We try to solve
\begin{equation}\label{DerEta2}
\underbrace{-\frac{\rho}{\sqrt{2\ln(a) - \rho^2}}}_{=:f(\rho)} = a^2 e^{-2\rho^2}
\end{equation}
Clearly, this does not have any solution for $\rho \geq 0$. We also remark that regularising $\dot{\gamma}(\rho)$ for $\rho \to \sqrt{2\ln(a)}$ by multiplying it, and thus also \eqref{DerEta2}, by $\sqrt{2\ln(a) - \rho^2}$, shows that $(\rho = \sqrt{2\ln(a)}, \sigma = 0)$ cannot be an extremum of $\eta|_{r = r_0}$. It follows that $\eta|_{r = r_0}$ does not have any extrema in the quadrant $\{\rho \geq 0, \sigma \geq 0\}$.

On the other hand we have
\begin{equation*}
\frac{d}{d\rho} f(\rho) = -\frac{1}{\sqrt{2\ln(a) - \rho^2}} - \frac{\rho^2}{(2\ln(a) - \rho^2)^{\frac{3}{2}}} <0
\end{equation*}
and
\begin{equation*}
\frac{d}{d\rho} (a^2 e^{-2\rho^2}) = -2a^2 \rho e^{-2\rho^2} >0 \qquad \textnormal{ for } \rho <0 \;.
\end{equation*}
It follows that $\eta|_{r = r_0}$ can have at most one extremum in the quadrant $\{\rho < 0, \sigma > 0\}$.

Similarly, by considering the lower arc of the circle and the plus sign in \eqref{DerivativeEta} we find that $\eta|_{r=r_0}$ does not have any extrema in the quadrant $\{\rho \leq 0, \sigma \leq 0\}$ and at most one extremum in the quadrant $\{\rho >0, \sigma <0\}$. It thus follows that $\eta|_{r=r_0}$ has \emph{exactly} two extrema, one in the quadrant $\{\rho < 0, \sigma > 0\}$ and one in the quadrant $\{\rho >0, \sigma <0\}$. Moreover, it is easy to see that the extremum in the quadrant $\{\rho < 0, \sigma > 0\}$ is the global maximum (for example this follows from $d\eta|_{r=r_0}\big(\dot{\gamma}_+(0)\big) <0$) and the extremum in the quadrant  $\{\rho >0, \sigma <0\}$ is the global minimum. Furthermore it  follows that $\eta|_{r=r_0}$ is strictly monotonically increasing along the two segments of the circle of radius $r_0$ that connect the minimum of $\eta|_{r=r_0}$ with its maximum. In particular, $\eta|_{r=r_0}$ takes on every value $\eta \in (\eta_{\min}, \eta_{\max})$ exactly twice.

In conclusion, we have established the following qualitative picture:
The curves of constant $\eta$ for $\eta < \eta_{\min}$ are disjoint of the circle of radius $r=r_0$ and lie below it, the curve $\eta = \eta_{\min}$ touches the circle in exactly one point in the quadrant $\{\rho >0, \sigma <0\}$, the curves of constant $\eta$ for $\eta_{\min} < \eta < \eta_{\max}$ intersect the circle in exactly two points, the curve $\eta = \eta_{\max}$ touches the circle in exactly one point in the quadrant $\{\rho <0, \sigma >0\}$, and finally the curves of constant $\eta$ for $\eta > \eta_{\max}$ are disjoint of the circle of radius $r=r_0$ and are lying above it. This behaviour is summarised in Figure \ref{FigEtaCircle}.
\begin{figure}[h]
  \centering
  \includegraphics[width=8cm]{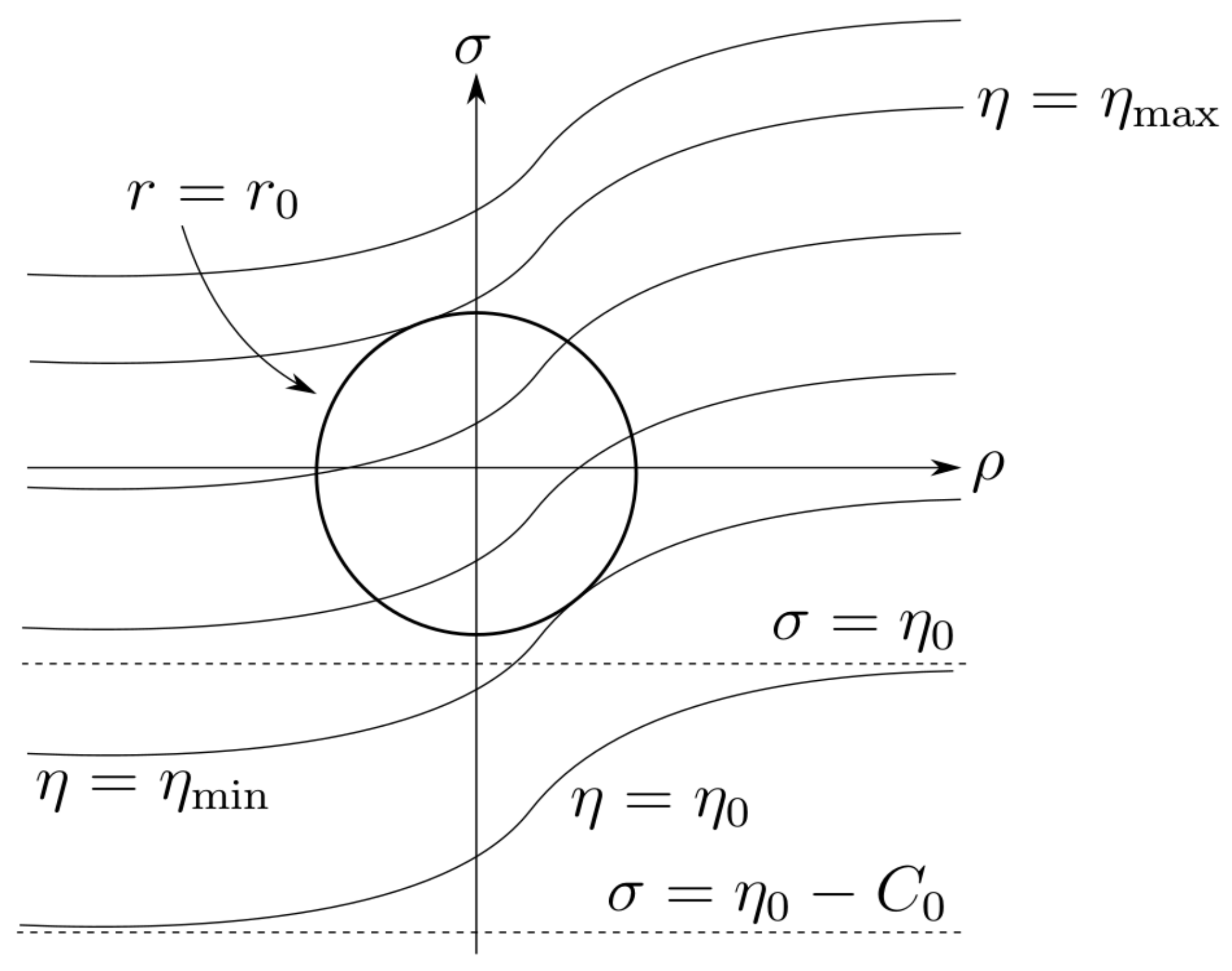}
       \caption{The level sets of $\eta$ in the $(\rho, \sigma)$ plane. Note that Lemmas \ref{glemma} and \ref{lemma_orientation} imply that rotating this diagram clockwise through $45^\circ$ gives a Penrose diagram on which the null geodesics of $g$ are straight lines at $45^\circ$ to the horizontal, with time increasing up the diagram.}
      \label{FigEtaCircle}
\end{figure}
\vspace*{3mm}

If $\Psi : \R^2_{\rho, \sigma} \supseteq V \to U \subseteq \R^2_{\xi, \eta}$ is a diffeomorphism, then Theorem \ref{diffeo_theorem} gives a solution $\Phi : U \to \R$ of \eqref{Eqc1}. By \eqref{jacobian} and \eqref{PhiPrime} $\Psi$ is a local diffeomorphism everywhere away from the circle $r = r_0$.
\vspace*{3mm}

\textbf{Step 2:}   \emph{We analyse  in which  regions  in $\R^2_{\rho, \sigma} $ the map  $\Psi$ is injective.}
 
For this, let us assume that for $(\hat{\rho}, \hat{\sigma}), (\tilde{\rho}, \tilde{\sigma}) \in \R^2_{\rho, \sigma}$ we have 
\begin{equation}\label{PsiIn}
\Psi(\hat{\rho}, \hat{\sigma})= \Psi(\tilde{\rho}, \tilde{\sigma}) =(\hat{\xi}, \hat{\eta})\;.
\end{equation} 
Obviously this entails that $(\hat{\rho}, \hat{\sigma})$ and $ (\tilde{\rho}, \tilde{\sigma})$ both have to lie on the same curve $\eta = \hat{\eta}$. Recall from \eqref{rhoeq} that along the curve  $\eta = \hat{\eta}$, the value of the coordinate $\xi$ is a function of $\rho$: $\xi = F(\rho; \hat{\eta})$. Also recall from \eqref{dFdrho} that we have
\begin{equation}\label{GradF}
\frac{\partial F}{\partial \rho}(\rho; \hat{\eta}) = 1 - \phi'(\rho)^2 \phi'(\sigma_{\hat{\eta}}(\rho))^2 \;.
\end{equation}
It now follows from our qualitative understanding of the curves of constant $\eta$ together with \eqref{PhiPrime}  that for $\hat{\eta} < \eta_{\min}$ or $\hat{\eta} > \eta_{\max}$ we have $\frac{\partial F}{\partial \rho} (\rho; \hat{\eta}) >0$ for all $\rho \in \R$. Thus, for such $\hat{\eta}$, $\xi$ is a strictly monotonically increasing function in $\rho$ along $\eta = \hat{\eta}$ and thus \eqref{PsiIn} implies $(\hat{\rho}, \hat{\sigma}) = (\tilde{\rho}, \tilde{\sigma})$. It is easy to see that $\xi$ is still strictly monotonically increasing along $\eta = \eta_{\min}, \eta_{\max}$, since those curves touch $r = r_0$ only in one point and thus the right hand side of \eqref{GradF} only vanishes at one point. Thus, we have shown 
\begin{equation}\label{Inj1}
\textrm{$\Psi$ is injective in the regions $\{\eta \leq \eta_{\min}\}$ and $\{\eta \geq \eta_{\max}\} \subseteq \R^2_{\rho, \sigma}$.}
\end{equation}

However, for $\hat{\eta} \in (\eta_{\min}, \eta_{\max})$ the right hand side of \eqref{GradF} is negative inside the circle of radius $r_0$. Let $-r_0 \leq \rho_{\mathrm{enter}}$ be the value of $\rho$ at which $\eta = \hat{\eta}$ enters the circle of radius $r_0$ and $\rho_{\mathrm{leave}} \leq r_0$ the value at which $\eta = \hat{\eta}$ leaves the circle of radius $r_0$.  Thus, the function $ \rho \mapsto F(\rho; \hat{\eta})$ is strictly monotonically decreasing for $\rho \in (\rho_{\mathrm{enter}}, \rho_{\mathrm{leave}})$. For $\rho \in (-\infty, \rho_{\mathrm{enter}}) \cup (\rho_{\mathrm{leave}}, \infty)$ the right hand side of \eqref{GradF} is positive and thus $ \rho \mapsto F(\rho; \hat{\eta})$ is strictly monotonically increasing for such $\rho$. Moreover, it follows from $\phi'(\rho)^2\phi'(\sigma)^2 = a^2 e^{-(\rho^2 + \sigma^2)}$ (see \eqref{phi_def} and below) that for $\rho \to \pm \infty$, the right hand side of \eqref{GradF} tends to $1$. Thus we have $F(\rho; \hat{\eta}) \to -\infty$ for $\rho \to -\infty$ and $F(\rho; \hat{\eta}) \to +\infty$ for $\rho \to +\infty$. The qualitative behaviour of the function $ \rho \mapsto F(\rho; \hat{\eta})$, which we have just established, is depicted in Figure \ref{FigF}.
\begin{figure}[h]
  \centering
  \includegraphics[width=6cm]{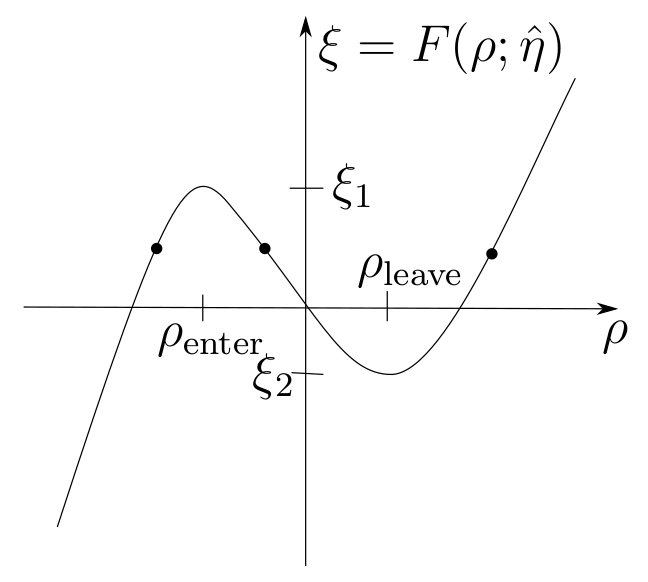}
      \caption{The qualitative behaviour of $ \rho \mapsto F(\rho; \hat{\eta})$ for $\hat{\eta} \in (\eta_{\min}, \eta_{\max})$. The black dots show that for $\hat{\xi} \in (\xi_1,\xi_2)$ there are three solutions $\rho_i$ of $\hat{\xi} = F(\rho,\hat{\eta})$. } \label{FigF}
\end{figure}
Hence, $\xi_1 := F(\rho_{\mathrm{enter}}; \hat{\eta})$ is a local maximum and $\xi_2 := F(\rho_{\mathrm{leave}}; \hat{\eta})$ is a local minimum of $F(\rho; \hat{\eta})$. For each $\hat{\xi} \in (\xi_1, \xi_2)$ there exist, by the intermediate value theorem and the monotonicity properties, exactly three points $\hat{\rho}_1< \rho_{\mathrm{enter}} < \hat{\rho}_2 <  \rho_{\mathrm{leave}} <\hat{\rho}_3$ with $\hat{\xi} = F(\hat{\rho}_i; \hat{\eta})$, $i = 1,2,3$. Setting $\hat{\sigma}_i = \sigma_{\hat{\eta}}(\hat{\rho}_i)$, this determines three distinct points $(\hat{\rho}_i, \hat{\sigma}_i)$, $i=1,2,3$, which are all mapped to $(\hat{\xi}, \hat{\eta})$ under $\Psi$. (Two of these points lie outside the circle $r=r_0$ and one lies inside.) Hence, we have shown that $\Psi$ is not injective in $\{\eta_{\min} < \eta < \eta_{\max}\}$. 

To understand the subsets of $\{\eta_{\min} < \eta < \eta_{\max}\}$ on which $\Psi$ \emph{is} injective, let us first observe that the set $\{\eta_{\min} < \eta < \eta_{\max}\} \setminus \{r \leq r_0\}$ has two connected components. We denote the `left' component\footnote{I.e., the component which contains points of arbitrarily large and negative $\rho$-coordinate.} by $\{\eta_{\min} < \eta < \eta_{\max}\}_L$ and the `right' component by $\{\eta_{\min} < \eta < \eta_{\max}\}_R$. It follows directly from \eqref{PhiPrime} and \eqref{GradF} that 
\begin{equation}\label{Inj2}
\textrm{$\Psi$ is injective on $\{\eta_{\min} < \eta < \eta_{\max}\}_L$ as well as on 
$\{\eta_{\min} < \eta < \eta_{\max}\}_R$.}
\end{equation}
Moreover, we note that inside the circle of radius $r_0$ the right hand side of \eqref{GradF} is bounded from below by $1 -a^4 <0$. Thus, for $\hat{\eta} \in (\eta_{\min}, \eta_{\max})$ the function $F(\rho; \hat{\eta})$ can decrease in between $\rho_{\mathrm{enter}}$ and $\rho_{\mathrm{leave}}$ by at most $2r_0 \cdot (1 - a^4)$. On the other hand, since $\phi'(\rho)^2\phi'(\sigma)^2 = a^2 e^{-r^2}$, the right hand side of \eqref{GradF} is bounded from below by a positive constant in $\{r > 2r_0\}$. It thus follows that we can choose $\rho_0 > 2r_0$  large enough such that $F(\rho_0; \hat{\eta}) - F(r_0; \hat{\eta}) > 2r_0 \cdot (1 - a^4)$ for all $\hat{\eta} \in (\eta_{\min}, \eta_{\max})$. Since we have $\rho_{\mathrm{leave}}(\hat{\eta}) \leq r_0$ for all $\hat{\eta} \in (\eta_{\min}, \eta_{\max})$ this shows
\begin{equation*}
\sup_{\rho < \rho_{\mathrm{enter}}}F(\rho; \hat{\eta}) < \inf_{\rho > \rho_0}F(\rho;\hat{\eta}) \;.
\end{equation*}
Hence, we have shown
\begin{equation}\label{Inj3}
\textrm{$\Psi$ is injective in $\{\eta_{\min} < \eta < \eta_{\max}\}_L \cup \Big(\{\eta_{\min} < \eta < \eta_{\max}\}_R \cap \{\rho > \rho_0\}\Big)$}.
\end{equation}
\vspace*{3mm}

\textbf{Step 3:}   \emph{Construction of the two solutions $\Phi_a : U_a \to \R$ and $\Phi_b : U_b \to \R$.}

Let now $(\rho_c, \sigma_c)$ denote the point of contact of $\eta = \eta_{\min}$ with the circle $r = r_0$. We have $0<\rho_c <r_0$ and $-r_0 < \sigma_c < 0$. We now define  the region 
\begin{equation}\label{Va}
V_a := \{\eta < \eta_{\min}\} \cup \{\sigma < -r_0\} \cup \Big( \{0<\rho < \rho_c\} \cap \{r > r_0\} \cap \{\sigma < \sigma_c\}\Big) \cup \{\rho > \rho_0\}\;,
\end{equation}
cf.\ Figure \ref{FigA}.
It follows from \eqref{Inj1} and \eqref{Inj3} that $\Psi$ is injective on $V_a$ and thus a diffeomorphism onto its image.
Recall that we have
\begin{equation*}
x^0 = \frac{1}{2}(\eta - \xi) = \frac{1}{2}\left[ \sigma - \rho + \int_\rho^\infty\big(\phi'(x)\big)^2 \, dx + \int_{-\infty}^\sigma \big(\phi'(x)\big)^2 \, dx \right] \;.
\end{equation*}
Thus the hypersurface $x^0 = -T$ is given in the $(\rho, \sigma)$ plane by
\begin{equation*}
-2T = \sigma - \rho + \int_\rho^\infty\big(\phi'(x)\big)^2 \, dx + \int_{-\infty}^\sigma \big(\phi'(x)\big)^2 \, dx \;.
\end{equation*}
It follows from the trivial bounds on the integrals that for $T \gg 1$ the hypersurface $x^0 = -T$ is contained in $\Psi(V_a)$.\footnote{The reason for including $\{\rho > \rho_0\}$ in \eqref{Va} was exactly to ensure this.}   We now define our first GHD $\Phi_a : U_a \to \R$ by setting $U_a = \Psi(V_a)$ and defining $\Phi_a$ by Theorem \ref{diffeo_theorem}. It is easy to convince oneself, using Lemma \ref{glob_hyp_lemma}, that the domain is indeed globally hyperbolic with Cauchy hypersurface $x^0 = -T$.

To define the second GHD $\Phi_b : U_b \to \R$, we set
\begin{equation*}
V_b := \{ \eta < \eta_{\min}\} \cup \{ \rho > r_0\} \cup \Big(\{0>\sigma > \sigma_c\} \cap \{r > r_0\} \cap \{\rho > \rho_c\}\Big) \;.
\end{equation*}
See also Figure \ref{FigB}. It follows from \eqref{Inj1} and \eqref{Inj2} that $\Psi$ is injective on $V_b$ and thus we can set $U_b = \Psi(V_b)$ and define $\Phi_b$ by Theorem \ref{diffeo_theorem}. Again, it is easy to convince oneself that the domain is indeed globally hyperbolic with Cauchy hypersurface $x^0 = -T$. 
\begin{figure}[h]
\centering
\begin{minipage}{.5\textwidth}
  \centering
  \includegraphics[width=6cm]{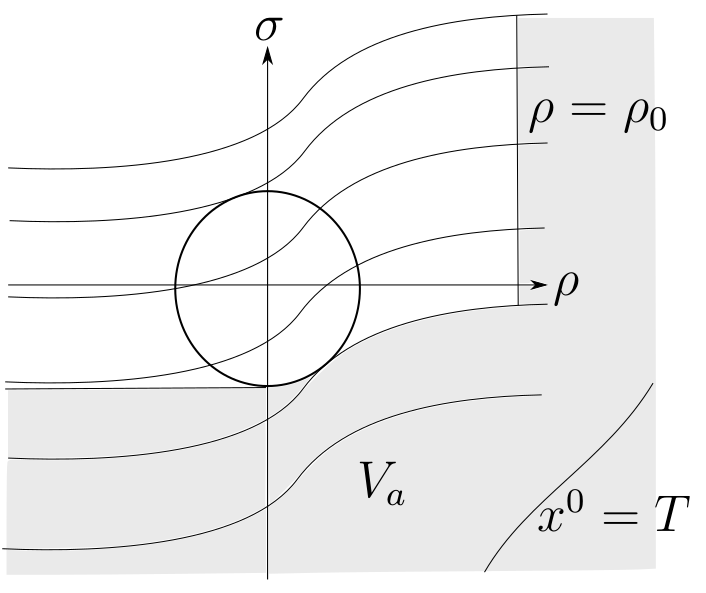}
      \caption{The domain $V_a$} \label{FigA}
\end{minipage}%
\begin{minipage}{.5\textwidth}
  \centering
  \includegraphics[width=6cm]{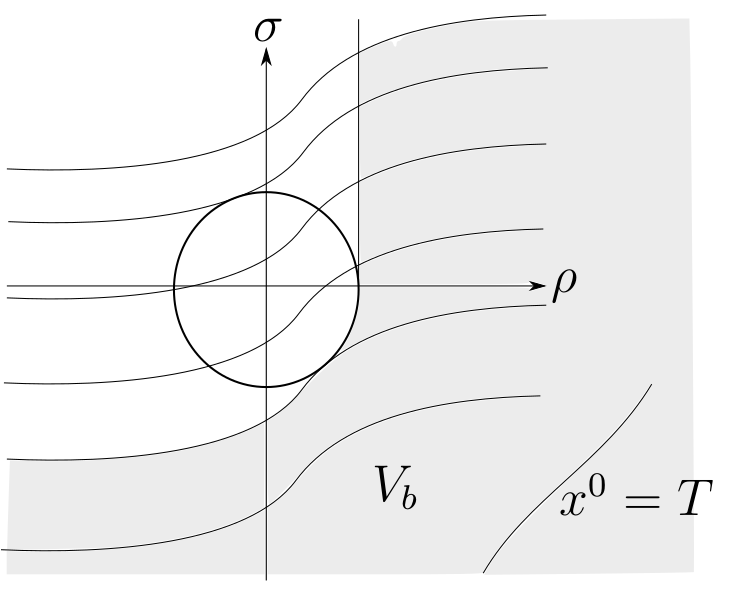}
	 \caption{The domain $V_b$} \label{FigB}
\end{minipage}
\end{figure}
\vspace*{3mm}

\textbf{Step 4:}   \emph{We show that there is an $x \in U_a \cap U_b$ with $\Phi_a(x) \neq \Phi_b(x)$.} 

For this consider a curve $\eta = \eta_{\min} + \varepsilon$. By continuity we can choose $\varepsilon>0$ small enough such that the curve $\sigma_{\eta_{\min} + \varepsilon}(\rho)$  intersects the circle of radius $r_0$ at $\rho_{\mathrm{enter}} < \rho_c < \rho_{\mathrm{leave}}$ and such that $\sigma_{\eta_{\min} + \varepsilon}(-\infty, \rho_{\mathrm{enter}} ) \subseteq V_a$ and $\sigma_{\eta_{\min} + \varepsilon}(\rho_{\mathrm{leave}}, \infty) \subseteq V_b$. Since $\eta_{\min} + \varepsilon \in (\eta_{\min}, \eta_{\max})$, the qualitative analysis of   $ \rho \mapsto  F(\rho;\eta_{\min} + \varepsilon) $ below \eqref{Inj1} applies which was summarised in Figure \ref{FigF}.
It follows that there exist $\rho_1 <\rho_{\mathrm{enter}} $ and $ \rho_{\mathrm{leave}}< \rho_2 < \rho_3$ and a strictly monotonically increasing function 
\begin{equation*}
\rho_{1\mathrm{enter}} : [0,1] \to [\rho_1, \rho_{\mathrm{enter}}]\textrm{ with }\rho_{1\mathrm{enter}}(0) = \rho_1\textrm{ and }\rho_{1\mathrm{enter}}(1) = \rho_{\mathrm{enter}}\;,
\end{equation*} and  a second strictly monotonically increasing function 
\begin{equation*}
\rho_{23} : [0,1] \to [\rho_{2}, \rho_3]\textrm{ with }\rho_{23}(0) = \rho_2\textrm{ and }\rho_{23}(1) = \rho_3
\end{equation*}
 such that   
 \begin{equation*}
 \Psi\big(\rho_{1\mathrm{enter}}(s), \sigma_{\eta_{\min} + \varepsilon}(\rho_{1\mathrm{enter}}(s))\big) = \Psi\big(\rho_{23}(s), \sigma_{\eta_{\min} + \varepsilon}(\rho_{23}(s))\big)
 \end{equation*}
 holds for all $s \in [0,1]$. 
 See also Figure \ref{FigF2} and the discussion following \eqref{Inj1}. 
\begin{figure}[h]
  \centering
  \includegraphics[width=6cm]{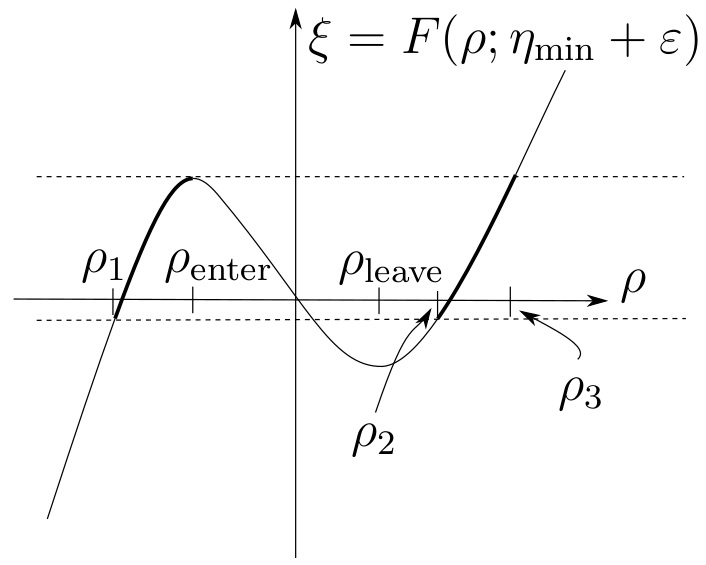}
       \caption{The functions $\rho_{1\mathrm{enter}}$ and $\rho_{23}$. } \label{FigF2}
\end{figure} 
 It now follows from Lemma \ref{lemma_sing} and \eqref{PhiPrime} that $\partial_\xi \Phi_a$ and $\partial_\eta \Phi_a $ along $\Psi\big(\rho_{1\mathrm{enter}}(s), \sigma_{\eta_{\min} + \varepsilon}(\rho_{1\mathrm{enter}}(s))\big)$ tend to $+\infty$ for $s \to 1$, while $\partial_\xi \Phi_b$ and $\partial_\eta \Phi_b $ along $\Psi\big(\rho_{23}(s), \sigma_{\eta_{\min} + \varepsilon}(\rho_{23}(s))\big)$ tend to a finite value. This suffices to establish the claim and thus conclude the proof.
\end{proof}

We remark the following:
\begin{remark}
We emphasise that the two GHDs constructed in the proof of Theorem \ref{ThmNonUniqueness} are \emph{smooth}. Hence, the non-uniqueness mechanism exhibited does not stem from a loss of regularity of the solution.
\end{remark}
\begin{remark}
We note that the two GHDs constructed in the proof of Theorem \ref{ThmNonUniqueness} are not maximal. However, an application of  Zorn's Lemma (!) to the set of globally hyperbolic extensions of $\Phi_a : U_a \to \R$ shows that there exists a MGHD which contains $\Phi_a : U_a \to \R$. In the same way one shows the existence of a MGHD which contains $\Phi_b : U_b \to \R$. These two MGHDs are clearly distinct. In fact one can even show that there are infinitely many distinct MGHDs of the constructed initial data. We leave the details to the reader. 
\end{remark}
\begin{remark}
The initial data constructed in the proof of Theorem \ref{ThmNonUniqueness} are not compactly supported. However, by a standard domain of dependence argument one can cut off the initial data outside a large enough ball to produce compactly supported initial data and two GHDs thereof which satisfy the statement of Theorem \ref{ThmNonUniqueness}.
\end{remark}
The following remark might be skipped and come back to when referred to later in Section \ref{theorems}.
\begin{remark} \label{RemCutOff}
Recall that equation \eqref{Eqc1} (which is \eqref{BIeom}  multiplied by $1+(\partial_{x^1}\Phi)^2 - (\partial_{x^0}\Phi)^2$) is not manifestly hyperbolic, i.e., a quasilinear wave equation of the form \eqref{QuasilinearEq} which we will consider in Section \ref{theorems}.   Its principal symbol is
\begin{equation*}
\tilde{g}^{-1} = (1+(\partial_{x^1}\Phi)^2 - (\partial_{x^0}\Phi)^2) \cdot g^{-1} = \begin{pmatrix}
-(1 + (\partial_{x^1}\Phi)^2) & \partial_{x^0}\Phi \partial_{x^1} \Phi \\ \partial_{x^0}\Phi \partial_{x^1} \Phi & 1 -(\partial_{x^0} \Phi)^2 
\end{pmatrix} 
\end{equation*}
and the determinant is $\det \tilde{g}^{-1} = -1 -(\partial_{x^1}\Phi)^2 + (\partial_{x^0}\Phi)^2 = -1 - 4\partial_{\eta}\Phi \partial_\xi \Phi$.
It follows from \eqref{NullDerivativesRhoSigma} and \eqref{phi_def} that for the two hyperbolic solutions constructed above we have
\begin{equation*}
\det \tilde{g}^{-1}  = -1 - 4 \frac{\phi'(\rho) \phi'(\sigma)}{(1-\phi'(\rho)\phi'(\sigma))^2} <-1 \;.
\end{equation*}
We can thus modify the principal symbol of \eqref{Eqc1} in the region $\{ (\partial_{x^0}\Phi, \partial_{x^1}\Phi)  \in \R^2\; |  \det \tilde{g}^{-1} \geq -\frac{1}{2} \} = \{(\partial_{x^0}\Phi)^2 \geq (\partial_{x^1} \Phi)^2 + \frac{1}{2} \}$ to make it Lorentz-metric valued for all $d\Phi$ (and keeping it subluminal), thus creating a subluminal quasilinear wave equation (i.e. a quasilinear equation for which every solution is hyperbolic) which is still solved by our two GHDs of the same initial data constructed above which take different values at a point that lies in both of their domains. 
\end{remark}

\subsection{Uniqueness for superluminal case}

Non-uniqueness of MGHDs is \textit{not} a problem in the superluminal ($c=-1$) case. We will prove this for an arbitrary superluminal equation in section \ref{theorems} below. In this section we will discuss briefly the interpretation of the example \eqref{phi_def} in the superluminal case.

In the superluminal case, recall that the Minkowski metric \eqref{minkbi} is
\be
 m = -(dx^1)^2 + (dx^0)^2
\ee
and we choose time orientation $\partial/\partial x^1$. Defining coordinates $(y^0,y^1)$ in the $(\rho,\sigma)$ plane as in \eqref{ydef} gives, for the flat metric of Lemma \ref{glob_hyp_lemma}
\be
\hat{m} = -(dy^1)^2 + (dy^0)^2
\ee
Now consider the example \eqref{phi_def}. We want to construct a solution using Theorem \ref{diffeo_theorem} so assume that $\Psi: V \rightarrow U$ is a diffeomorphism. Lemma \ref{nec_cond} implies that either $V \subset D$ or $V$ lies outside $D$. We consider the latter case, so $\Phi_1'(\rho)^2 \Phi_2'(\sigma)^2<1$ in $V$. The proof of Lemma \ref{lemma_orientation} reveals that $y^1$ is a global time function for $(U,g)$ (or $(V,\hat{m})$) so we take our initial surface $S=\Psi(\Sigma)$ where $\Sigma$ is a line $y^1 = -Y$ where $Y$ is large enough so that $\Sigma$ lies to the past of $\overline{D \cup E \cup F}$ as shown in Fig. \ref{fig:superrhosigma}.\footnote{Equivalently we could define $S$ to be a surface $x^0 = -T$ where $T$ is chosen large enough to make $S$ spacelike w.r.t. $g$. However, this would gives plots with a lot of white space between $S$ (or $\Sigma$) and the region of interest.} 

\begin{figure}
\centering
	\begin{minipage}{0.45\textwidth}
		\centering
		\includegraphics[width=0.95\linewidth]{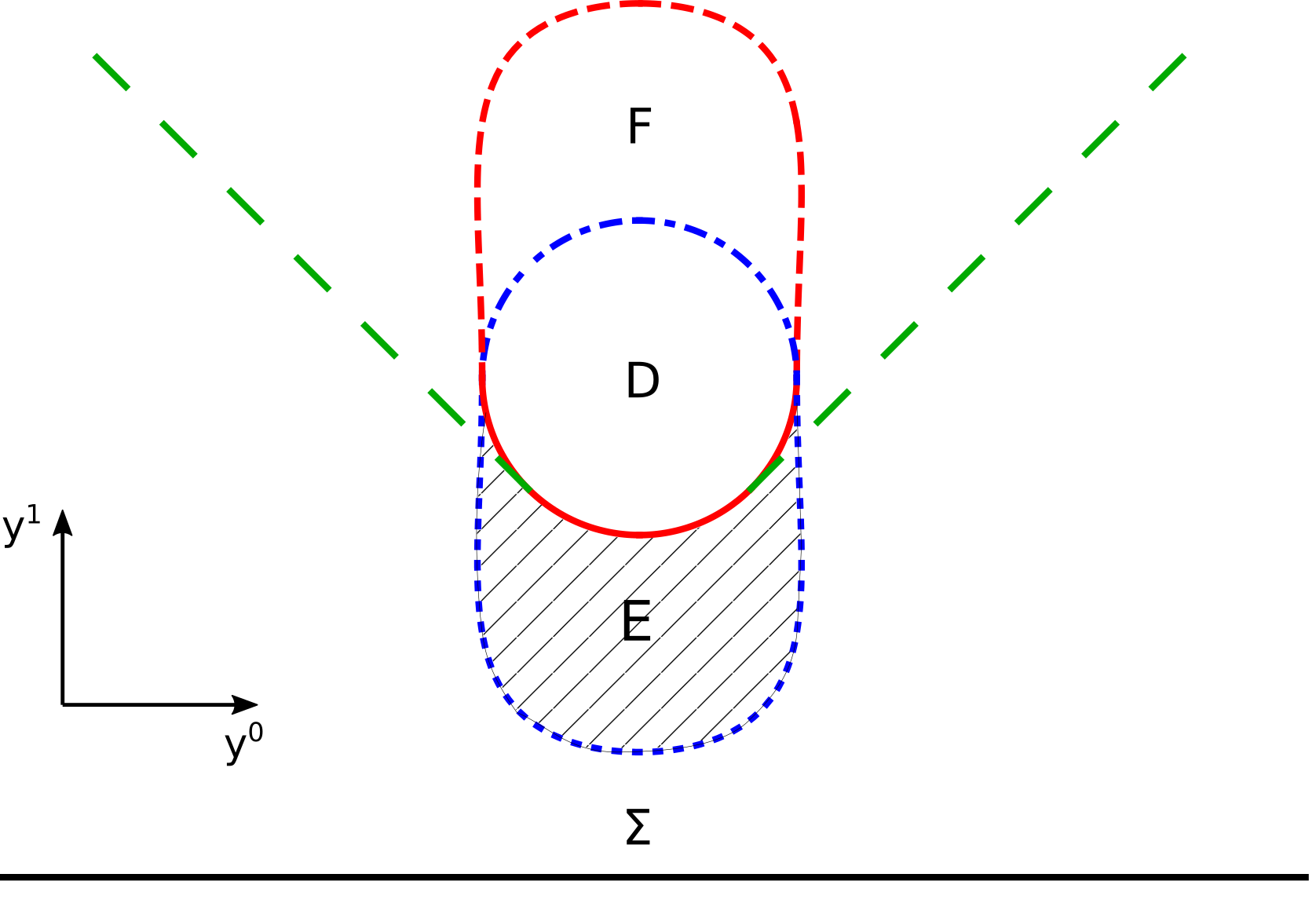}
		\caption{In the superluminal case we orient the plot so that the time function $y^1$ is the vertical axis. The large dashed green lines are lines of constant $\rho$ or $\sigma$ that are tangent to the circle at their point of contact. The MGHD of the data on $S$ is defined by choosing $V$ to be the region bounded to the future by the pair of large dashed green lines together with the section of the solid red curve joining them. This includes the hatched section of $E$ but not the two small regions of $E$ between the large dashed green lines and $D$.
		}
		\label{fig:superrhosigma}
		
	\end{minipage}\hfill
	\begin{minipage}{0.45\textwidth}
		\centering
		\includegraphics[width=0.95\linewidth]{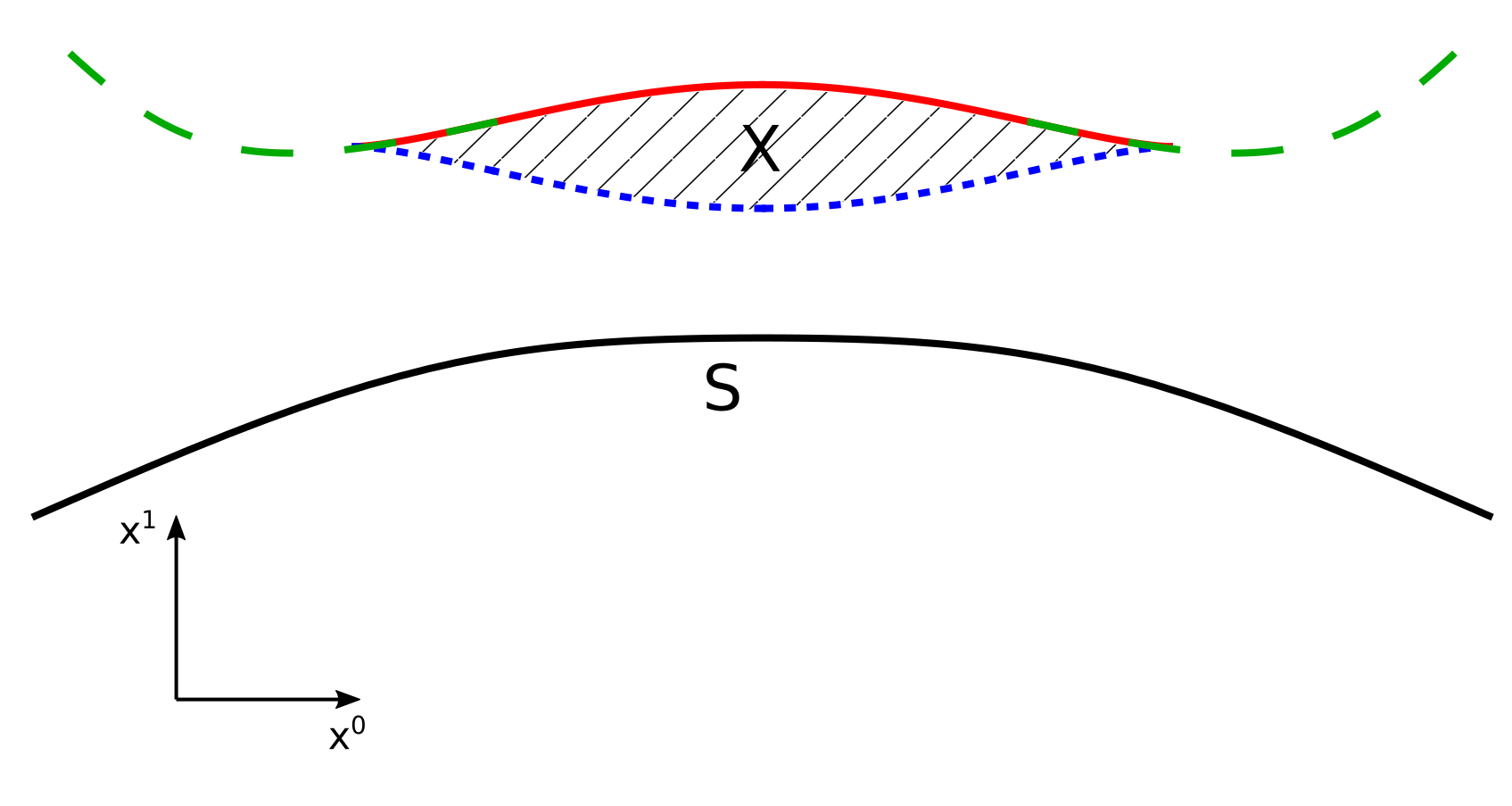}
		\caption{Plot of $U = \Psi(V)$ in Minkowski spacetime, oriented so that $x^1$ is the vertical axis. The MGHD is the region bounded to the future by the spacelike (w.r.t. $g$) solid red curve and the pair of null (w.r.t. $g$) large dashed green curves. This includes most of the region $X$. The gradient of $\Phi$ diverges on the solid red curve. The solution can be smoothly extended across the large dashed green curves, but not as a GHD of the data on $S$.
		 }
		\label{fig:superxieta}
	\end{minipage}
\end{figure}
The unique MGHD is obtained by taking $V$ to be the region defined in Fig. \ref{fig:superrhosigma}. The future boundary of $V$ is the union of a spacelike curve (a segment of the boundary of $D$) along which the gradient of $\Phi$ diverges (by Lemma \ref{lemma_sing}), and a pair of null curves across which the solution is smoothly extendible (but not as a GHD).\footnote{Note that the extendibility across the null sections of the boundary implies that the analogue of the strong cosmic censorship conjecture is false for the superluminal equation. But the behaviour is much better than in the subluminal case for which the object one needs to define to formulate this conjecture (the MGHD) is not even unique!} The corresponding picture in Minkowski spacetime is shown in Fig. \ref{fig:superxieta}. 

The reason that there is a unique MGHD in the superluminal case but not in the subluminal case was identified in Lemma \ref{sep_lemma}. In the subluminal case, different GHDs can be constructed by including points from $E$ or from $F$, or from both. But in the superluminal case, Lemma \ref{sep_lemma} implies that $F$ lies to the future of $D$ so from any point of $F$ there is a past directed timelike curve that ends on the boundary of $D$ and hence does not cross $\Sigma$. So no point of $F$ can belong to the domain of dependence of $\Sigma$.

\subsection{Higher dimensions}

It is easy to see that the pathological behaviour in the subluminal case is not restricted to two spacetime dimensions. The Born-Infeld scalar field theory in $(d+1)$-dimensional Minkowski spacetime is defined by generalizing the action (\ref{BIaction}) to $d+1$ dimensions. The two dimensional theory can be obtained trivially from the $d+1$ dimensional theory by assuming that $\Phi$ does not depend on $d-1$ of the spatial coordinates. Hence our 2d solutions can be interpreted as solutions in $d+1$ dimensions with translational invariance in $d-1$ directions. Such solutions do not decay at infinity. However, given initial data for such a solution, one could modify the data outside a ball of radius $R$ so that it becomes compactly supported. In the subluminal case, the resulting solution would be unchanged in the region inside the ingoing Minkowski lightcone emanating from the surface of this ball. Hence if $R$ is chosen large enough then the evolution of the solution inside the ball will behave as discussed above for long enough to see non-uniqueness of MGHDs. 

In the higher-dimensional superluminal case, there is a unique MGHD: we will prove below that {\it any} superluminal equation always admits a unique MGHD.

\section{Uniqueness properties of the initial value problem for quasilinear wave equations}
\label{theorems}

\subsection{Introduction}
\label{theorems_intro}

In this section we consider a quasilinear wave equation of the form
\begin{equation}\label{QuasilinearEq}
g^{\mu \nu}(u,  du) \partial_\mu \partial_\nu u = F(u, du) \;,
\end{equation}
where $u : \R^{d+1} \supseteq U \to \R$, $g$ is a smooth Lorentz metric valued function,\footnote{All the results presented in this section generalise literally unchanged to the setting of Section \ref{general}, where one does not assume that $g$ in \eqref{QuasilinearEq} is a Lorentz metric valued function, but one restricts consideration to \emph{hyperbolic solutions}, i.e., solutions of \eqref{QuasilinearEq} for which $g$ \emph{is} Lorentz metric valued. The only slight modification necessary is for the proof of the local existence result, Theorem \ref{ThmEx}. Here one can for example cut off the principal symbol of the quasilinear equation in the fashion of Remark \ref{RemCutOff}  to create a quasilinear wave equation and then apply the local existence result for quasilinear wave equations to show local existence of hyperbolic solutions for quasilinear equations with hyperbolic initial data.} $F$ is smooth with $F(0,0) =0$, and the coordinates used for defining \eqref{QuasilinearEq} are the canonical coordinates $x^\mu$ on $\R^{d+1}$. 

Let $S \subseteq \R^{d+1}$ be a connected hypersurface of $\R^{d+1}$. \emph{Initial data for \eqref{QuasilinearEq} on $S$} consists of a smooth real valued function $f_0 : S \to R$ and a smooth one form $\alpha_0$ (with values in $ T^*\R^{d+1}$) along $S$ such that  $X(f_0) = \alpha_0(X)$ holds for all vectors $X$ tangent to $S$ and such that the hypersurface $S$ is spacelike with respect to the Lorentzian metric $g(f_0, \alpha_0)$.  A \emph{globally hyperbolic development} (GHD) of initial data $(f_0, \alpha_0)$ on a hypersurface $S$ for \eqref{QuasilinearEq} consists of a smooth solution $u : U \to \R$ of \eqref{QuasilinearEq} ($U \subseteq \R^{d+1}$ being open) with $S \subseteq U$ and $u|_S = f_0$, $du|_S = \alpha_0$, and such that $U$ is globally hyperbolic with respect to the Lorentzian metric $g(u, d u)$ with Cauchy hypersurface $S$.

As we will show/recall in the following, the initial value problem for the equation \eqref{QuasilinearEq} with initial data given on a hypersurface $S$ is \emph{locally well-posed}. Here, we mean by this that the following two properties hold:
\begin{enumerate}
\item  there exists a globally hyperbolic development $u : U \to \R$ of the initial data
\item given two globally hyperbolic developments $u_1 : U_1 \to \R$ and $u_2 : U_2 \to \R$ of the same initial data, then there exists a \emph{common globally hyperbolic development} (CGHD), that is, a globally hyperbolic development $v : V \to \R$ of the initial data with $V \subseteq U_1 \cap U_2$ and $v = u_1|_V = u_2|_V$.
\end{enumerate} 
Note that the second property is only a weak version of  what one might understand under `local uniqueness', since it allows for the existence of a third globally hyperbolic development $u_3 : U_3 \to \R$ of the same initial data such that there exists an $x \in V \cap U_3$ with $u_3(x) \neq u_1(x) = u_2(x)$.\footnote{We will discuss how this might happen at the end of Section \ref{SecMUGHD}.}

The aim of this section of the paper is to investigate the uniqueness properties for solutions of quasilinear wave equations. In Section \ref{SecUniqueness} we first prove the second property of the local well-posedness statement from above and then establish the main theorem of this section: two globally hyperbolic developments of the same initial data agree on the intersection of their domains \emph{if this intersection is connected}. Section \ref{SecSupLum} then specialises to quasilinear wave equations \eqref{QuasilinearEq} with the property that 
\begin{equation}
\label{Property}
\parbox{0.75\textwidth}{there exists a vector field  $T$ on $\R^{d+1}$ such that $T$ is timelike with respect to  $g^{\mu \nu}(u, d u)$ for all $u, du$. }
\end{equation} 
In particular superluminal equations have this property. We show that for such equations the intersection of the domains of two globally hyperbolic developments of the same initial data is always connected -- and we thus obtain that any two globally hyperbolic developments agree on the intersection of their domains. The case of subluminal equations is considered in Section \ref{SecUniquenessSub}. Here, we show that if one of the two globally hyperbolic developments is also \emph{globally hyperbolic with respect to the Minkowski metric}, then again, the intersection of the domains is connected -- and we can thus apply our main theorem from Section \ref{SecUniqueness}. 

The next three sections deal with existence questions: Section \ref{SecEx} proves the first property of the above local well-posedness statement, Section \ref{SecUMGHD} establishes the existence of a unique maximal globally hyperbolic development for quasilinear wave equations with the property \eqref{Property}, and Section \ref{SecMUGHD} considers subluminal equations and shows the existence of a maximal region on which solutions are unique and which is globally hyperbolic (i.e. a MUGHD). 

The final section, Section \ref{SecFin}, present a uniqueness criterion for \emph{general} quasilinear wave equations of a very different flavour. It states that if there exists a maximal globally hyperbolic development with the property that its domain of definition always lies to just one side of its boundary, then this maximal globally hyperbolic development is the unique one. In particular this implies uniqueness of the MGHD constructed in Ref. \cite{Chr}. 

\subsection{Uniqueness results for general quasilinear wave equations} \label{SecUniqueness}

\begin{proposition}[Local uniqueness]\label{PropLocUniqueness}
Let $ u_1 : U_1 \to \R$ and $u_2 : U_2 \to \R$ be two globally hyperbolic developments for \eqref{QuasilinearEq} of the same initial data prescribed on a hypersurface $S \subseteq \R^{d+1}$. Then there exists a common globally hyperbolic development $v : V \to \R$.
\end{proposition}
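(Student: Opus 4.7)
The plan is to establish the proposition in two stages. First I would show that any two globally hyperbolic developments $u_1$ and $u_2$ of the same data must coincide on some open neighbourhood $\Omega_0 \subseteq U_1 \cap U_2$ of the initial hypersurface $S$. Second, I would carve out of $\Omega_0$ a globally hyperbolic subdomain $V$ having $S$ as a Cauchy hypersurface; the restriction $v = u_1|_V = u_2|_V$ is then automatically a common GHD.

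For the first stage, set $w := u_1 - u_2$ and subtract the two instances of \eqref{QuasilinearEq}. Writing the differences
\be
g^{\mu\nu}(u_1,du_1) - g^{\mu\nu}(u_2,du_2) \quad \text{and} \quad F(u_1,du_1) - F(u_2,du_2)
\ee
as integrals along the straight-line interpolation between $(u_2,du_2)$ and $(u_1,du_1)$ expresses them as linear combinations of $w$ and $\partial_\mu w$ with coefficients that are smooth in $x$ (they are built from $u_1,u_2$, their first derivatives, and, for the $g^{\mu\nu}$ difference, from the second derivatives of $u_2$, all of which are smooth). Hence $w$ satisfies a homogeneous linear second-order wave equation
\be
g^{\mu\nu}(u_1,du_1)\,\partial_\mu \partial_\nu w + B^\mu(x)\,\partial_\mu w + A(x)\, w = 0,
\ee
with $w|_S = 0$ and $dw|_S = 0$ by the initial data. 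A standard energy estimate for linear wave equations with respect to the Lorentz metric $g(u_1,du_1)$ -- performed on any lens-shaped domain whose past boundary is a subset of $S$ and whose lateral boundary is spacelike with respect to $g(u_1,du_1)$ -- then forces $w \equiv 0$ on that lens via Gr\"onwall's inequality.

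For the second stage, observe that for each $p \in S$ the spacelike character of $S$ with respect to $g(u_1,du_1)$ at $p$, combined with $p \in U_1 \cap U_2$, allows one to build such a lens $L_p \subseteq U_1 \cap U_2$ with $p \in L_p$ (most cleanly by passing to Gaussian normal coordinates for $g(u_1,du_1)$ based on $S$, in which $L_p$ becomes a small coordinate slab). By the previous paragraph, $u_1 = u_2$ on $L_p$. Setting $\Omega_0 := \bigcup_{p \in S} L_p$, the metric $g := g(u_1,du_1) = g(u_2,du_2)$ is then well-defined and Lorentzian on $\Omega_0$, and $S$ remains spacelike for it. Finally, let $V$ be the connected component of $\operatorname{int} D_{(\Omega_0,g)}(S)$ containing $S$, where $D_{(\Omega_0,g)}(S)$ denotes the domain of dependence of $S$ computed inside the (in general non-globally-hyperbolic) spacetime $(\Omega_0,g)$. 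It is a standard result in Lorentzian geometry that the interior of the domain of dependence of a closed achronal spacelike hypersurface inside an open set is globally hyperbolic with that hypersurface as a Cauchy surface, so $(V,g)$ is a GHD, and $v := u_1|_V = u_2|_V$ is the sought-after common globally hyperbolic development.

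The main technical obstacle is the energy estimate in paragraph two: one has to choose the lens $L_p$ small enough that the metric $g(u_1,du_1)$ is uniformly controlled there and that the coefficients $A, B^\mu$ -- which involve $\partial^2 u_2$ -- are uniformly bounded, so that the Gr\"onwall step closes. All the necessary ingredients are classical for quasilinear wave equations, so conceptually the work is in the covering-by-lenses argument and in verifying that the construction of $V$ from $\Omega_0$ genuinely yields a globally hyperbolic development whose Cauchy surface is $S$ rather than some proper subset of it.
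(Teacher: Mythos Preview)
Your proposal is correct and follows essentially the same strategy as the paper: localise around each point of $S$, invoke the standard energy estimate for the linear equation satisfied by the difference $u_1-u_2$ to obtain local uniqueness, and then patch the local regions together into a single GHD with Cauchy surface $S$.

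The only noteworthy difference is in how the final GHD is assembled. The paper chooses each local patch $DS_p$ to already be globally hyperbolic with Cauchy hypersurface $S_p\subset S$ (this comes for free from the standard local theory), and then simply takes $V=\bigcup_{p\in S} DS_p$, asserting that global hyperbolicity of the union with Cauchy surface $S$ is immediate. You instead form $\Omega_0=\bigcup_p L_p$ without insisting that each lens be globally hyperbolic, and then extract $V$ as the interior of the domain of dependence of $S$ inside $(\Omega_0,g)$. Both routes work; yours is slightly more cautious and makes the global-hyperbolicity step explicit via a standard Lorentzian-geometry fact, whereas the paper's shortcut avoids the extra domain-of-dependence construction by building global hyperbolicity into the local patches from the outset. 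Note that if you arrange your lenses $L_p$ to be Gaussian-normal slabs (as you suggest), they are already GH with Cauchy surface $L_p\cap S$, and then your $\Omega_0$ itself is globally hyperbolic with Cauchy surface $S$ --- so the domain-of-dependence step becomes redundant and you recover exactly the paper's argument.
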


\begin{proof}
For $p \in S$ let $W_p \subseteq \R^{d+1}$ be an open neighbourhood of $p$ on which there exists slice coordinates for $S$ and in which the Lorentzian metric $g(f_0, \alpha_0)$ given by the initial data is $C^0$-close to the Minkowski metric. Moreover, we require $W_p \subseteq U_1 \cap U_2$. Let $S_p$ be an open neighbourhood of $p$ in $S$ the closure of which is compactly contained in $W_p$. 
%We can now cut off the initial data such that it remains unchanged in $S_p$ and transitions smoothly to (some) trivial data within $(S \cap W_p) \setminus S_p$. 
The standard literature methods (see for example \cite{Sogge}) ensure that there is an open  neighbourhood $DS_p \subseteq W_p$ of $S_p$ with the property that any two solutions, which are defined on $DS_p$ and attain the given initial data on $S_p$, agree, and such that $DS_p$ is globally hyperbolic with Cauchy hypersurface $S_p$. It thus follows that $u_1|_{DS_p} = u_2|_{DS_p}$. We now set $V = \bigcup_{p \in S} DS_p$. It is immediate that $u_1$ and $u_2$ agree on this set and that $V$ is globally hyperbolic with Cauchy hypersurface $S$.
\end{proof}

One can now ask whether \emph{global uniqueness} holds, which is the property that if $u_1 : U_1 \to \R$ and $u_2 : U_2 \to \R$ are two globally hyperbolic developments of the same initial data, then $u_1$ and $u_2$ agree on $U_1 \cap U_2$. Note that `global' refers to the property that `the two solutions agree in \emph{all of $U_1 \cap U_2$}' -- in contrast to the local result provided by %Theorem \ref{ThmUnique}
Proposition \ref{PropLocUniqueness}, which only guarantees uniqueness in some smaller subset of $U_1 \cap U_2$.

The last author sketched an idea for a proof of global uniqueness in Section 1.4.1 of \cite{sbierski}. However, this sketch has the flaw that it tacitly assumes that given two globally hyperbolic developments $u_1 : U_1 \to \R$ and $u_2 : U_2 \to \R$ of the same initial data, that $U_1 \cap U_2$ is then connected -- which is in general not true as illustrated by the example presented in Section \ref{SecIVP} of this paper, see in particular Remark \ref{RemCutOff}. The necessity of the assumption of connectedness enters in the sketch as follows: One starts by considering the maximal globally hyperbolic region $W$ contained in $U_1 \cap U_2$ on which $u_1$ and $u_2$ agree (i.e. the \emph{maximal common globally hyperbolic development} (MCGHD)) and one would like to show that this region coincides with $U_1 \cap U_2$. Assuming $W \subsetneq U_1 \cap U_2$ one can find a boundary point of $W$ in $U_1 \cap U_2$ \emph{provided $U_1 \cap U_2$ is connected}. The argument then proceeds by constructing a spacelike slice through a suitable boundary point and  appealing to the local uniqueness result in order to conclude that $u_1$ and $u_2$ also agree on a neighbourhood of this slice and thus on an even bigger globally hyperbolic region than $W$ -- a contradiction to the maximality of $W$. This is roughly how one proves global uniqueness \emph{under the condition that $U_1 \cap U_2$ is connected}. Note that if $U_1 \cap U_2$ is disconnected, the same argument shows that the domain $W$ of the MCGHD equals the connected component of $U_1 \cap U_2$ that contains $S$.

For the Einstein equations one does not need to condition the global uniqueness statement, since one has the freedom to construct the underlying manifold -- there is no fixed background. We will explain this in the following: Given two globally hyperbolic developments $u_1$ and $u_2$ for the Einstein equations one constructs a bigger one in which both are contained (and thus proves global uniqueness) by glueing $u_1$ and $u_2$ together along the MCGHD of $u_1$ and $u_2$. However, in the case that $u_1$ and $u_2$ are two globally hyperbolic developments of a quasilinear wave equation on a fixed background such that $U_1 \cap U_2$ is disconnected, glueing them together along the MCGHD (which equals the connected component of $U_1 \cap U_2$ which contains the initial data hypersurface), would yield a solution which is no longer defined on a subset of $\R^{d+1}$, but instead on a manifold which projects down on $U_1 \cup U_2 \subseteq \R^{d+1}$ and contains the other connected components of $U_1 \cap U_2$ twice. Of course this is not allowed if we insist that solutions of \eqref{QuasilinearEq} should be defined on a subset of $\R^{d+1}$. So the key difference between the Einstein equations and a quasilinear wave equation \eqref{QuasilinearEq} is that for the former the underlying manifold is constructed along with the solution whereas for the latter, it is fixed {\it a priori}. This is the reason why one does not need to condition the global uniqueness statement for the Einstein equations.

%Note that by virtue of being a globally hyperbolic development, the MCGHD is connected.  Making the comparison with the quasilinear wave equation, the MCGHD does not equal $U_1 \cap U_2$ if $U_1 \cap U_2$ is disconnected. For the Einstein equations one proceeds by showing that the MCGHD does not have `corresponding boundary points'. Translating this again to the setting of the quasilinear wave equations, this means that the MCGHD does not have a boundary point in $\R^{d+1}$ that is contained in $U_1$ as well as in $U_2$. In the case that $U_1 \cap U_2$ is disconnected, this implies that the MCGHD corresponds to the connected component of $U_1 \cap U_2$ that contains the initial data hypersurface. In analogy with the construction from \cite{sbierski} one could then proceed by glueing $u_1 : U_1 \to \R$ and $u_2 : U_2 \to \R$ together along the MCGHD to obtain a bigger solution, which is now, however, no longer defined on a subset of $\R^{d+1}$. 

\begin{theorem}\label{ThmUnique}
Let $u_1 : U_1 \to \R$ and $u_2 : U_2 \to \R$ be two globally hyperbolic developments of \eqref{QuasilinearEq} arising from the same initial data given on a connected hypersurface $S \subseteq \R^{d+1}$. Assume that $U_1 \cap U_2$ is connected. Then $u_1$ and $u_2$ agree on $U_1 \cap U_2$.
\end{theorem}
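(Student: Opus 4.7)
The plan is to argue by contradiction using the \emph{maximal common globally hyperbolic development} (MCGHD), as sketched informally by the authors just before the theorem statement. Concretely, I would define $W \subseteq U_1 \cap U_2$ to be the union of all open sets $V$ with $S \subseteq V \subseteq U_1 \cap U_2$ such that $(V, g(u_1, du_1)|_V)$ is globally hyperbolic with Cauchy hypersurface $S$ and $u_1|_V = u_2|_V$. Proposition \ref{PropLocUniqueness} supplies a non-empty such $V$ near $S$, and a routine union argument (exploiting that the constituent sets share the same Cauchy surface $S$) shows that $W$ itself is a common globally hyperbolic development with Cauchy surface $S$ on which $u_1 = u_2$. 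The theorem is then equivalent to $W = U_1 \cap U_2$.

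Assume for contradiction that $W \subsetneq U_1 \cap U_2$. Since $U_1 \cap U_2$ is connected and $W$ is a non-empty proper open subset, its boundary within $U_1 \cap U_2$ is non-empty; pick $p$ in it. Smoothness of $u_i$, $du_i$ together with $u_1|_W = u_2|_W$ and the fact that $p \in \overline{W}$ yield $u_1(p) = u_2(p)$ and $du_1(p) = du_2(p)$, so in particular the Lorentzian metrics $g(u_1,du_1)$ and $g(u_2,du_2)$ coincide at $p$.

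The core step is to construct a smooth hypersurface $\Sigma$ through $p$ that is spacelike with respect to both metrics in a neighbourhood of $p$ and satisfies $u_1|_\Sigma = u_2|_\Sigma$ (so that automatically $du_1|_\Sigma = du_2|_\Sigma$). My strategy is to foliate $W$ by Cauchy surfaces using a smooth Cauchy time function $t$ on $W$ with $S = \{t=0\}$, and then to extract $\Sigma$ as a suitable limit of these level sets approaching $\partial W$ at $p$, chosen so that $\Sigma \subseteq \overline{W}$; continuity of the solutions then forces $u_1 = u_2$ on $\Sigma$. Once $\Sigma$ is in hand, applying Proposition \ref{PropLocUniqueness} with $\Sigma$ as the initial hypersurface produces an open common globally hyperbolic development $\tilde V \subseteq U_1 \cap U_2$ of the shared data on $\Sigma$, with $p \in \tilde V$ and $u_1|_{\tilde V} = u_2|_{\tilde V}$. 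The contradiction is completed by verifying that $W \cup \tilde V$ is globally hyperbolic with Cauchy surface $S$, violating the maximality of $W$.

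The main obstacle is the construction of $\Sigma$ at the boundary point $p$. A generic spacelike hypersurface through $p$ need not lie in $\overline{W}$, so there is no a priori reason for $u_1$ and $u_2$ to agree on it; the limiting construction via the Cauchy leaves of $W$ must be performed carefully to avoid degenerations (loss of smoothness, loss of spacelikeness for one of the two metrics, or failure of $\Sigma$ to lie in $\overline{W}$). A closely related technical point is the gluing step: one must show that every past-inextendible causal curve in $W \cup \tilde V$ meets $S$ exactly once, which reduces to checking that any such curve starting in $\tilde V \setminus W$ crosses $\Sigma \subseteq \overline{W}$ back into $W$ and then, by global hyperbolicity of $W$, continues to $S$.
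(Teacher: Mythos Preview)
Your overall architecture matches the paper's: define the MCGHD, assume it is a proper subset, find a boundary point, build a spacelike slice $\Sigma \subseteq \overline{W}$ touching $\partial W$, and apply Proposition~\ref{PropLocUniqueness} to enlarge $W$. The gap is precisely where you flag it: the construction of $\Sigma$. Your proposed method---taking a limit of Cauchy leaves $\{t = c\}$ of $W$---will in general fail for exactly the reasons you list, and there is no evident way to salvage it. The future boundary $\partial W \cap J^+(S,U_1)$ is achronal and is generated by null geodesics; as $c$ increases to its supremum the leaves $\{t=c\}$ accumulate on this null set, so the limiting object is not a smooth spacelike hypersurface but (part of) the Cauchy horizon itself. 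Conversely, any leaf with $c$ strictly below the supremum stays in the open set $W$ and contains no point of $\partial W$, so it cannot serve as $\Sigma$. More fundamentally, for an \emph{arbitrary} boundary point $p$ there need not exist any spacelike hypersurface through $p$ contained in $\overline{W}$: if the boundary is null near $p$, every spacelike hypersurface through $p$ has points strictly to the future of $\partial W$, hence outside $\overline{W}$.

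The paper circumvents both issues with two ideas you are missing. First, it does \emph{not} work with an arbitrary boundary point: it selects $p \in \partial W \cap U_1 \cap U_2$ with the property
\[
J^-_{g(u_1,du_1)}(p,U_1) \cap \partial W \cap J^+_{g(u_1,du_1)}(S,U_1) = \{p\},
\]
obtained by following a null generator of the achronal boundary to the past until it leaves $\partial W$. This is the precise sense in which $p$ is a point where the boundary is ``spacelike''. Second, and this is the key construction, the paper does not take limits of Cauchy leaves at all. Instead it picks $q \in I^+(p)$ in a convex neighbourhood so close to $p$ that $J^-(q,U_1) \cap (U_1 \setminus W) \cap J^+(S,U_1)$ is compactly contained in that neighbourhood (this uses the special choice of $p$), and then sets $\Sigma := \tau_q^{-1}(\tau_0)$, a level set of the Lorentzian time separation from $q$, where $\tau_0$ is the \emph{maximum} of $\tau_q$ over that compact set. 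Such level sets are automatically smooth and spacelike (Gauss lemma), and the maximality of $\tau_0$ forces $\Sigma \subseteq \overline{W}$ while guaranteeing that $\Sigma$ meets $\partial W$ at the maximiser. This barrier-type argument is what replaces your limiting procedure; without it, the proof does not close.
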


The proof is based on ideas found in \cite{cbgeroch}, \cite{Ring13}, and \cite{sbierski}. 

\begin{proof}
\textbf{Step 1:} \emph{We construct the maximal common globally hyperbolic development of $u_1 : U_1 \to \R$ and $u_2 : U_2 \to \R$ .}

Given two globally hyperbolic developments $u_1 : U_1 \to \R$ and $u_2 : U_2 \to \R$ of \eqref{QuasilinearEq} arising from the same initial data on $S$, we consider the set $\{v_\alpha : V_\alpha \to \R \; | \; \alpha \in A\}$ of all common globally hyperbolic developments. By Proposition \ref{PropLocUniqueness} we know that this set is non-empty. We define $v_0 : V_0 \to \R$, where $V_0 := \bigcup_{\alpha \in A} V_\alpha$ and $v_0 (x) = v_\alpha (x)$ for $x \in V_\alpha$. It is immediate that this is well-defined and that $v_0 : V_0 \to \R$ is a common globally hyperbolic development with the property that any other common globally hyperbolic development is a subset of $V_0$. We call $v_0 : V_0 \to \R$ the \emph{maximal common globally hyperbolic development}.  

We now set out to show that $V_0 = U_1 \cap U_2$, from which the theorem follows. Assume that $V_0 \subsetneq U_1 \cap U_2$. Since we assume that $U_1 \cap U_2$ is connected, there exists then a point $q \in \partial V_0 \cap U_1 \cap U_2$.\footnote{$\partial V_0$ denotes the boundary of $V_0$ in $\R^{d+1}$.} Without loss of generality we assume that $q \in J^+_{g(u_1, du_1)}(S, U_1)$.\footnote{The notation $J^+_{g(u_1, du_1)}(S, U_1)$ denotes the causal future of $S$ in $U_1$ with respect to the Lorentzian metric $g(u_1, du_1)$. The notations for the causal past $J^-$, and timelike future/past $I^{\pm}$ are analogous. We refer the reader to \cite{ONeill} for the basic notions of causality theory.}\vspace*{3mm}

\textbf{Step 2:}   \emph{We show in the following that there exists a point $p \in \partial V_0 \cap U_1 \cap U_2 $ such that
\begin{equation}\label{PropBoundaryPoint}
J_{g(u_1, du_1)}^-(p,U_1) \cap  \partial  V_0 \cap J^+_{g(u_1,du_1)}(S, U_1) = \{p\} 
\end{equation}
holds. Such a point $p$ can be thought of as a  point where the boundary is spacelike.}

In the following the causality relations are with respect to the metric $g(u_1, du_1)$. Let $q$ be as above.
If \eqref{PropBoundaryPoint} holds for $q = p$, we are done -- hence, we assume that there is a second point $r \in J^-(q, U_1) \cap \partial V_0 \cap J^+(S, U_1)$. The global hyperbolicity of $V_0$ with Cauchy hypersurface $S$ together with the openness of the timelike relation $\ll$ implies that $\partial V_0 \cap J^+(S, U_1)$ is achronal. Hence, the past directed causal curve $\gamma : [0,1] \to U_1$ with $\gamma (0) = q$ and $\gamma(1) = r$ is a null geodesic. Using the global hyperbolicity of $V_0$ it follows that $\gamma([0,1]) \subseteq \partial V_0 \cap U_1 \cap U_2$, since if there were a $t \in (0,1)$ with $\gamma(t) \in V_0$, then  we would also obtain $\gamma(1) \in V_0$ -- and if $\gamma(t) \in U_1 \setminus V_0$, then by the openness of the timelike relation $\ll$ one could find a past directed timelike curve starting from a point in $V_0$ close to $q$ that lies completely in $J^+(S, U_1)$ and ends at a point in $U_1 \setminus V_0$ close to $\gamma(t)$. Moreover, $\gamma([0,1]) \subseteq U_2$, since $\gamma(0) \in U_2$ and by the smoothness of $u_2$, $\gamma$ is also a past directed null geodesic in the globally hyperbolic $U_2$ -- hence, it cannot leave $U_2$ without first crossing $S$.

We now extend $\gamma$ maximally in $U_1$ to the past. The global hyperbolicity of $U_1$ entails that $\gamma$ has to intersect $S$, thus entering $V_0$ and leaving $\partial V_0$. We now consider $\gamma^{-1}\big(\partial V_0 \cap J^+(S, U_1)\big)$.  The argument from the last paragraph shows that this is a connected interval, and the closedness of $\partial V_0 \cap J^+(S, U_1)$ in $J^+(S, U_1)$ together with $S \subseteq V_0$ implies that $\gamma^{-1}\big(\partial V_0 \cap J^+(S, U_1)\big) =: [0,a]$ for some $a >0$. Note also that it follows from the last paragraph that $\gamma(a) \in U_2$. We claim that $p := \gamma(a) \in J^+(S, U_1)$ satisfies \eqref{PropBoundaryPoint}. 

%Since $\partial V_0 \cap U_1$ is closed in $U_1$, the set $\gamma^{-1}(\partial V_0 \cap U_1)$ is closed in the domain of $\gamma$. The global hyperbolicity of $U_1$ implies that $\gamma$ intersects $S$ and thus leaves $\partial V_0$ -- hence, $\gamma^{-1}(\partial V_0 \cap U_1)$ is also closed in $\R$. Moreover, it is connected by the argument from the last paragraph. We thus obtain $\gamma^{-1}(\partial V_0 \cap U_1) =: [0,a]$ and we claim that $p := \gamma(a)$ satisfies \eqref{PropBoundaryPoint}. First we observe however that $p \in J^+(S, U_1)$, since $S \subseteq V_0$ and thus $\gamma|_{[0,a]}$ cannot cross $S$.

Assume $p = \gamma(a)$ does not satisfy \eqref{PropBoundaryPoint}. Then there is a point $s \in J^-(p, U_1) \cap \partial V_0 \cap J^+(S, U_1)$. As before, one can connect $p$ and $s$ by a past directed null geodesic that is contained in $\partial V_0 \cap U_1 $. However, by the definition of $a$, this null geodesic cannot be the continuation of $\gamma|_{[0,a]}$. We can thus connect $q$ and $s$ by a broken null geodesic, and thus by a timelike curve in $U_1$ -- contradicting the achronality of $\partial V_0 \cap J^+(S, U_1)$.\vspace*{3mm}

\textbf{Step 3:} \emph{Let $p \in \partial V_0 \cap U_1 \cap U_2$ be as in \eqref{PropBoundaryPoint}. We claim that for every open neighbourhood $W \subseteq U_1$ of $p$ there exists a point $q \in I^+(p, W) $ such that $J^-(q, U_1) \cap (U_1 \setminus V_0) \cap J^+(S, U_1) \subseteq W$ holds.}

To show this, let $p$ be as above and assume the claim was not true. Then there exists a neighbourhood $W \subseteq U_1$ of $p$ such that for all $q \in I^+(p,W)$ there exists a point $\tilde{q} \in J^-(q, U_1) \cap (U_1 \setminus V_0) \cap J^+(S, U_1) \cap (U_1 \setminus W)$. In particular, let us choose a sequence $q_j \in I^+(p,U_1)$ and $\tilde{q}_j \in J^-(q_j, U_1) \cap (U_1 \setminus V_0) \cap J^+(S, U_1) \cap (U_1 \setminus W)$ with $q_j \in I^-(q_0, U_1)$ for all $j \in \N$ and $q_j \to p$. By the global hyperbolicity of $U_1$ we know that $J^-(q_0, U_1) \cap J^+(S, U_1)$ is compact, and thus, so is $J^-(q_0, U_1) \cap (U_1 \setminus V_0) \cap J^+(S, U_1) \cap (U_1 \setminus W)$. Hence, we can assume without loss of generality that $\tilde{q}_j \to \tilde{q}_\infty \in J^-(q_0, U_1) \cap (U_1 \setminus V_0) \cap J^+(S, U_1) \cap (U_1 \setminus W)$. Since the causality relation $\leq$ is closed on globally hyperbolic Lorentzian manifolds, we obtain $\tilde{q}_\infty \leq p$. Moreover, since $\tilde{q}_\infty \in (U_1 \setminus W)$, we clearly have $\tilde{q}_\infty < p$. By \eqref{PropBoundaryPoint} we cannot have $\tilde{q}_\infty \in \partial V_0$, thus we have $\tilde{q}_\infty \in J^-(p,U_1) \cap (U_1 \setminus \overline{V}_0) \cap J^+(S, U_1)$. This, however, contradicts the global hyperbolicity of $V_0$, since, by the openness of the timelike connectedness relation $\ll$, we can find a past directed timelike curve starting at a point contained in $V_0$ close to $p$ and ending at a point in $U_1 \setminus V_0$ without crossing the Cauchy hypersurface $S$.\vspace*{3mm}

\textbf{Step 4:} \emph{We construct a spacelike hypersurface $\Sigma \subseteq \overline{V}_0 \cap U_1 \cap U_2$ that contains at least one point of $\partial V_0 \cap U_1 \cap U_2$.}

Let $p \in \partial V_0 \cap U_1 \cap U_2$ be as in \eqref{PropBoundaryPoint} and consider a convex neighbourhood $W \subseteq I^+(S, U_1)\cap U_2 $ of $p$. By the previous step we can find a point $q \in I^+(p,W)$ such that $J^-(q, U_1) \cap (U_1 \setminus V_0) \cap J^+(S, U_1) \subseteq W$ holds. We denote by $\tau_q : W \to [0, \infty)$ the (past) time separation from $q$ in $W$, i.e., for $r$ in $W$ we have
\begin{equation*}
\tau_q (r) = \sup \{ L(\gamma) \; | \; \gamma \textnormal{ is a past directed timelike curve in $W$ from $q$ to $r$} \}\;,
\end{equation*}
where $L(\gamma)$ denotes the Lorentzian length of $\gamma$. If $r \notin I^-(q, W)$, then we set $\tau_q(r) = 0$. It follows from \cite[Chapter 5, 34. Proposition]{ONeill} that $\tau_q$ restricted to $I^-(q,W)$ is given by
\begin{equation*}
\tau_q(r) = \sqrt{-g|_q\big(\exp^{-1}_q(r), \exp^{-1}_q(r)\big)}\;,
\end{equation*}
hence, $\tau_q$ is smooth on $I^-(q,W)$ and continuous on $W$. Since $J^-(q, U_1) \cap (U_1 \setminus V_0) \cap J^+(S, U_1)$ is compactly contained in $W$, there exists an $r_0 \in J^-(q, U_1) \cap (U_1 \setminus V_0) \cap J^+(S, U_1)$ with
\begin{equation}\label{DefMax}
\tau_q(r_0) = \max \{ \tau_q(r) \; | \; r \in J^-(q, U_1) \cap (U_1 \setminus V_0) \cap J^+(S, U_1)\} =: \tau_0\;.
\end{equation}
Clearly, we have $\tau_0 >0$. We set $\Sigma := \tau_q^{-1}(\tau_0) \subseteq W \subseteq U_1 \cap U_2$. It follows from \cite[Chapter 5, 3. Corollary]{ONeill} that $\Sigma$ is a smooth spacelike hypersurface. Moreover, we have $r_0 \in J^-(q, U_1) \cap \partial V_0 \cap J^+(S, U_1)$, since if we had $r_0 \in J^-(q, U_1) \cap (U_1 \setminus \overline{V}_0) \cap J^+(S, U_1)$, we could extend the unique timelike geodesic from $q$ to $r_0$ slightly such that it still remains in $J^-(q, U_1) \cap (U_1 \setminus V_0) \cap J^+(S, U_1)$, contradicting \eqref{DefMax}. Hence, $\Sigma$ contains at least one point of $\partial V_0 \cap U_1 \cap U_2$. Since we have chosen $W$ to be contained in the future of $S$ in $U_1$, we in particular have $\Sigma \subseteq J^-(q, U_1) \cap J^+(S, U_1)$. Thus, the same argument as before shows $\Sigma \subseteq \overline{V}_0$.\vspace*{3mm}

\textbf{Step 5:} Since $u_1$ and $u_2$ agree on $V_0$, by continuity they (and their derivatives) also agree on $\Sigma \subseteq \overline{V}_0 \cap U_1 \cap U_2$. Consider now a point in $\Sigma \cap \partial V_0 \cap U_1 \cap U_2$ and take a simply connected neighbourhood $W \subseteq U_1 \cap U_2$ thereof such that $\Sigma_W := \Sigma \cap W$ is a closed hypersurface in $W$. By \cite[Chapter 14, 46.\ Corollary]{ONeill}, $\Sigma_W$ is acausal in $W$.

Let $D_1\Sigma_W \subseteq W \subseteq U_1$ denote the domain of dependence of $\Sigma_W$ in $ W \subseteq U_1$ and $D_2\Sigma_W$ the domain of dependence  of $\Sigma_W$ in $W \subseteq U_2$. $u_1|_{D_1\Sigma_W}$ and $u_2|_{D_2\Sigma_W}$ are both globally hyperbolic developments of the same initial data on $\Sigma_W$, and thus by Proposition \ref{PropLocUniqueness} they agree in some small globally hyperbolic neighbourhood $O \subseteq D_1\Sigma_W \cap D_2\Sigma_W$ of $\Sigma_W$. Note that $O$ contains at least one point of $\partial V_0 \cap U_1 \cap U_2$. Moreover, it is easy to see that $V_0 \cup O$ is globally hyperbolic with Cauchy hypersurface $S$: an inextendible causal curve in $O$ has to intersect $\Sigma_W$ and thus, to the past, enter $I^-(\Sigma_W,O) \subseteq V_0$, where the last inclusion follows from $\Sigma_W \subseteq \overline{V}_0 \cap U_1 \cap U_2$ and the global hyperbolicity of $V_0$.  This, however, contradicts the maximality of $V_0$.
\end{proof}

\begin{remark}\label{RemTime}
Let us remark that the proof in particular shows that under the assumptions of Theorem \ref{ThmUnique} the intersection $U_1 \cap U_2$ is the maximal common globally hyperbolic development of $u_1 : U_1 \to \R$ and $u_2 : U_2 \to \R$. If we fix a choice of normal of the initial data hypersurface $S$ and stipulate that it is future directed in $U_1$ as well as in $U_2$, then it follows in particular that a point $x \in U_1 \cap U_2$, which lies to the future  of $S$ in $U_1$, also lies to the future of $S$ in $U_2$. Similarly for the past.
\end{remark}

The following is an immediate consequence of the previous theorem. It shows that global uniqueness can only be violated for quasilinear wave equations in a specific way. 

\begin{corollary} \label{CorDisCon}
Let $u_1 : U_1 \to \R$ and $u_2 : U_2 \to \R$ be two globally hyperbolic developments of \eqref{QuasilinearEq} arising from the same initial data on a connected hypersurface $S \subseteq \R^{d+1}$. If there exists an $x \in U_1 \cap U_2$ with $u_1(x) \neq u_2(x)$, then $U_1 \cap U_2$ is not connected.
\end{corollary}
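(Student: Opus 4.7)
The corollary is simply the contrapositive of Theorem \ref{ThmUnique}, so the plan is essentially a one-line argument. I would proceed by contradiction: assume $U_1 \cap U_2$ is connected. Since $u_1 : U_1 \to \R$ and $u_2 : U_2 \to \R$ are globally hyperbolic developments arising from the same initial data on the connected hypersurface $S$, Theorem \ref{ThmUnique} applies and yields $u_1 = u_2$ on all of $U_1 \cap U_2$. This directly contradicts the hypothesis that there exists $x \in U_1 \cap U_2$ with $u_1(x) \neq u_2(x)$.

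Since the only step is invoking the theorem, there is no real obstacle: the whole content of the corollary has already been absorbed into Theorem \ref{ThmUnique}. The purpose of stating it as a separate corollary is interpretive rather than technical: it highlights the structural fact that non-uniqueness of globally hyperbolic developments on a fixed background can \emph{only} manifest itself through a topological pathology of the intersection $U_1 \cap U_2$, namely its disconnectedness. In particular, this explains why the Born--Infeld example of Section \ref{SecIVP} had to exhibit a disconnected intersection of the two GHD domains (one component containing $S$, a second component lying inside the region $X$).

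Since nothing further needs to be proved, the write-up amounts to a single sentence noting that the statement is the contrapositive of Theorem \ref{ThmUnique}. One might also remark, for emphasis, that combined with the connectedness results of Section \ref{SecSupLum} (for equations admitting a globally timelike vector field, in particular superluminal equations), this corollary immediately reproves that for such equations any two GHDs of the same data must agree on the overlap, so the pathology illustrated by the subluminal Born--Infeld example cannot arise.
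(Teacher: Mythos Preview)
Your proposal is correct and matches the paper's treatment: the paper simply states that the corollary ``is an immediate consequence of the previous theorem'' without giving a separate proof, and your contrapositive argument is exactly that immediate consequence. The additional interpretive remarks you include are in line with the paper's surrounding discussion.
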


In particular, we recover that globally defined solutions are unique:
\begin{corollary}
Let $u_1 : \R^{d+1} \to \R$ be a globally defined globally hyperbolic development of \eqref{QuasilinearEq} arising from some initial data on a connected hypersurface $S \subseteq \R^{d+1}$. Let $u_2 : U_2 \to \R$ be another globally hyperbolic development of \eqref{QuasilinearEq} of the same initial data. Then $u_1|_{U_2} = u_2$.
\end{corollary}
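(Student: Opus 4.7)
The plan is to deduce this directly from Theorem \ref{ThmUnique} applied with $U_1 = \R^{d+1}$. The intersection $U_1 \cap U_2$ is then simply $U_2$, so the whole corollary reduces to verifying that $U_2$ is connected, which allows us to invoke the theorem and conclude $u_1|_{U_2} = u_2$.

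To establish connectedness of $U_2$, I would use the fact that $(U_2, g(u_2, du_2))$ is globally hyperbolic with Cauchy hypersurface $S$. By the standard Geroch-type splitting theorem for globally hyperbolic spacetimes (which is also implicitly used throughout Section \ref{theorems}), $U_2$ is homeomorphic to $\R \times S$. Since $S$ is assumed to be connected, the product $\R \times S$ is connected, and hence so is $U_2$.

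Having verified that $U_1 \cap U_2 = U_2$ is connected, Theorem \ref{ThmUnique} directly yields $u_1 = u_2$ on $U_1 \cap U_2 = U_2$, which is exactly the claim $u_1|_{U_2} = u_2$.

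I do not anticipate any genuine obstacle here: this corollary is an immediate specialization of Theorem \ref{ThmUnique} to the case where one of the developments is defined on all of $\R^{d+1}$, the only nontrivial observation being that global hyperbolicity with a connected Cauchy surface forces the ambient domain to be connected, so the connectedness hypothesis of Theorem \ref{ThmUnique} is automatically satisfied.
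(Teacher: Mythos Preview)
Your proposal is correct and matches the paper's approach: the paper states this corollary without proof, treating it as an immediate consequence of Theorem \ref{ThmUnique} once one observes that $U_1 \cap U_2 = U_2$ is connected. Your verification of connectedness via the Geroch splitting $U_2 \cong \R \times S$ is a clean way to fill in that one detail.
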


In the next two sections we consider two globally hyperbolic developments $u_1 : U_1 \to \R$ and $u_2 : U_2 \to \R$ of the same initial data and discuss criteria that ensure that $U_1 \cap U_2$ is connected. Here, the choice of the initial data hypersurface $S$ plays an important role. This can already be seen from the special case of the linear wave equation in Minkowski space: consider a spacelike \emph{but not achronal}  hypersurface that winds up around the $x^0$-axis in $\R\times (B_2^d(0) \setminus B_1^d(0)) \subseteq \R^{d+1}$. Prescribing generic initial data on this hypersurface, the extent of the future development restricts the extent of the past development. Given two globally hyperbolic developments, their intersection is in general not connected and global uniqueness does not hold. However, it is easy to show (see also Section \ref{SecSupLum}) that for spacelike initial data hypersurfaces which are moreover achronal, this pathology for the linear wave equation in Minkowski space cannot occur. This example shows that any result demonstrating connectedness of $U_1 \cap U_2$ for more general quasilinear equations will require some additional assumptions on the initial surface $S$ analogous to the achronality assumption in Minkowski spacetime. 

\subsection{Uniqueness results for superluminal quasilinear wave equations}
\label{SecSupLum}

In the following we consider quasilinear wave equations \eqref{QuasilinearEq} that enjoy property \eqref{Property}, i.e., that there exists a vector field  $T$ on $\R^{d+1}$ such that $T$ is timelike with respect to  $g^{\mu \nu}(u, d u)$ for all $u, d u$. In particular, superluminal equations enjoy this property, since one can take $T = \partial/\partial x^0$ where $x^\mu$ are inertial frame coordinates.  We will show that for such equations the complication of $U_1 \cap U_2$ being disconnected cannot arise, as long as the initial data is prescribed on a hypersurface $S$ with the property that every maximal integral curve of $T$ intersects $S$ at most once.\footnote{For a superluminal equation with $T=\partial/\partial x^0$, any $S$ which is a Cauchy surface for Minkowski spacetime has this property. Of course $S$ also has to obey the assumptions discussed at the beginning of section \ref{theorems_intro} e.g. $S$ has to be spacelike w.r.t. $g(u,du)$.}

\begin{lemma} \label{LemConnected}
Assume that there exists a vector field  $T$ on $\R^{d+1}$ such that $T$ is timelike with respect to  $g^{\mu \nu}(u, d u)$ for all $u, d u$, where $g$ is as in \eqref{QuasilinearEq}. Let $u_1 : U_1 \to \R$ and $u_2 : U_2 \to \R$ be two globally hyperbolic developments of \eqref{QuasilinearEq} arising from the same initial data on a connected hypersurface $S$ which has the property that every maximal integral curve of $T$ intersects $S$ at most once.  Then $U_1 \cap U_2$ is connected.
\end{lemma}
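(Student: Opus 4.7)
The plan is to show that every point $p \in U_1 \cap U_2$ can be joined to a point of $S$ by a path lying in $U_1 \cap U_2$; since $S \subseteq U_1 \cap U_2$ is connected (and hence path-connected, being a smooth hypersurface), this yields connectedness of $U_1 \cap U_2$.

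The key device is the global integral curve of $T$ through $p$. Let $\gamma_T : I \to \R^{d+1}$ denote the maximal integral curve of $T$ through $p$ (well-defined since $T$ is a smooth vector field on $\R^{d+1}$), and for $i = 1, 2$ let $I_i \subseteq I$ be the connected component of $\gamma_T^{-1}(U_i)$ containing the parameter value of $p$. Then $\gamma_T|_{I_i}$ is an integral curve of $T$ in $U_i$ that cannot be extended within $U_i$: any $T$-integral extension in $U_i$ would still lie on $\gamma_T$, contradicting either maximality of $I_i$ as a connected component or its containment in $U_i$. Since $T$ is timelike with respect to $g(u_i, du_i)$ by hypothesis \eqref{Property}, $\gamma_T|_{I_i}$ is an inextendible timelike curve in the globally hyperbolic spacetime $(U_i, g(u_i, du_i))$ with Cauchy surface $S$. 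By the defining property of a Cauchy hypersurface, $\gamma_T|_{I_i}$ meets $S$ in exactly one point, call it $s_i$.

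Now both $s_1$ and $s_2$ lie on the single global maximal integral curve $\gamma_T$ and on $S$. By the assumption that every maximal integral curve of $T$ in $\R^{d+1}$ meets $S$ at most once, $s_1 = s_2 =: s$. The connected intervals $I_1, I_2 \subseteq I$ both contain the parameter value of $p$ and the parameter value of $s$, and being connected subsets of the interval $I$ they both contain the closed subinterval joining these two parameter values. Consequently the arc of $\gamma_T$ from $p$ to $s$ lies in $U_1 \cap U_2$ and provides the desired path from $p$ to $S$.

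The one subtlety worth flagging is the distinction between maximal integral curves of $T$ on $\R^{d+1}$ (which is what the hypothesis on $S$ refers to) and integral curves of $T$ that are inextendible \emph{within} $U_i$ (which is what one needs in order to invoke the Cauchy property of $S$ in $U_i$). Bridging this gap via connected components of $\gamma_T \cap U_i$ is the only nonroutine point; once this identification is made, the at-most-one-intersection assumption forces the $U_1$-branch and $U_2$-branch through $p$ to hit $S$ at the same point, and the overlap of the two branches supplies the connecting path. No further machinery (such as the achronal-boundary arguments from the proof of Theorem \ref{ThmUnique}) is required for this lemma.
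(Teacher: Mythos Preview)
Your proof is correct and follows essentially the same approach as the paper: take the maximal integral curve of $T$ through a point $p\in U_1\cap U_2$, use timelikeness of $T$ and global hyperbolicity of each $U_i$ to force an intersection with $S$, and use the at-most-once hypothesis to conclude that the arc from $p$ to $S$ lies in both $U_i$. Your version is in fact slightly more careful than the paper's, which writes ``$\gamma\cap U_i$'' without explicitly passing to the connected component through $p$; your handling of this via $I_i$ and the remark about inextendibility within $U_i$ versus maximality in $\R^{d+1}$ makes the argument watertight.
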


\begin{proof}
%Recall that $S$ is a Cauchy hypersurface of Minkowski space $\R^{d+1}$. It thus separates $\R^{d+1}$ into two regions $S^+$ and $S^-$. Moreover, the initial data is such that $S$ is spacelike with respect to the metric $g$. We choose the timelike normal $n_S$ of $S$ that points into $S^+$. We claim that then the following holds: given a globally hyperbolic development $u : U \to \R$ of the initial data and let $x \in U \cap S^+$. We choose the time orientation on $U$ such that $n_S$ is future directed. Then $x \in I_{g(u,du)}^+(S, U)$.\footnote{It is this property that lead us to considering $S$ as above.} Similarly for $x \in U \cap S^-$.

%To see this, let $x \in U \cap S^+$ and let $\gamma : I \to U$ be an inextendible timelike curve in $U$ through $x$. Since $S$ is a Cauchy hypersurface of $U$, $\gamma$ intersects $S$ exactly once. Let $a,b \in I$ with $\gamma(a) \in S$ and $\gamma(b) = x$. Without loss of generality, assume $a <b$. We have $\gamma(s) \in S^+$ for $a < s \ni I$ and $\dot{\gamma}(a)$ is timelike, from which it follows that $n_S$ and $\dot{\gamma}(a)$ lie in the same connectedness component of the timelike cone at $\gamma(a)$. Hence, $\gamma$ is future directed and we have $x \in I^+(S, U)$.

Let $u_1 : U_1 \to \R$, $u_2 : U_2 \to \R$ be two globally hyperbolic developments arising from the same initial data on $S$ and let $x \in U_1 \cap U_2$. Let $\gamma$ be the maximal integral curve of $T$ through $x$. By assumption, $\gamma$ intersects $S$ at most once. Since $\gamma \cap U_1$ and $\gamma \cap U_2$ are timelike curves in $U_1$, $U_2$, respectively, and $U_1$, $U_2$ are globally hyperbolic with Cauchy hypersurface $S$, it follows that $\gamma$ intersects $S$ exactly once and that the portion of $\gamma$ from $x$ to $\gamma \cap S$ is contained in $U_1$ as well as in $U_2$. This shows the connectedness of $U_1 \cap U_2$.

%We show that $x$ can be connected to $S$ by a curve that lies in $U_1 \cap U_2$, from which the connectedness of $U_1 \cap U_2$ follows. If $x \in S$ there is nothing to show, so let us assume $x \in S^+ \cap U_1 \cap U_2$ (the case $x \in S^-\cap U_1 \cap U_2$ is similar). We assume without loss of generality that $T$ is past directed with respect to $U_1$ and $U_2$. By the above we have that $x \in I^+_{g(u_1, du_1)}(S, U_1)$ and $x \in I^+_{g(u_2, du_2)}(S, U_2)$. Let $\gamma$ be the integral curve of $T$ starting at $x$. In $U_1$ and $U_2$ this corresponds to a past directed timelike curve that has to intersect $S$. Hence, the segment of this integral curve from $x$ to $S$ has to be contained in both $U_1$ and $U_2$.
\end{proof}

Let us remark, that one can replace in the above lemma the assumption that  $S$ is a connected hypersurface such that every maximal integral curve of $T$ intersects $S$ at most once, with the assumption that $S$ is a hypersurface that separates $\R^{d+1}$ into two components. We leave the small modification of the proof to the interested reader.

\begin{corollary} \label{CorSupLum}
Assume that there exists a vector field  $T$ on $\R^{d+1}$ such that $T$ is timelike with respect to  $g^{\mu \nu}(u, d u)$ for all $u, d u$, where $g$ is as in \eqref{QuasilinearEq} and that initial data is posed on a connected hypersurface $S$ which has the property that every maximal integral curve of $T$ intersects $S$ at most once. 

Given two globally hyperbolic developments $u_1 : U_1 \to \R$ and $u_2 : U_2 \to \R$, we then have $u_1(x) = u_2(x)$ for all $x \in U_1 \cap U_2$.
\end{corollary}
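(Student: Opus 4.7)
The plan is to observe that this corollary is essentially a direct consequence of combining the two most recent prior results in the paper, so no new argument is really required. The only task is to check that the hypotheses of those results line up.

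First, I would invoke Lemma \ref{LemConnected}. Its hypotheses are precisely those of the corollary: the vector field $T$ is timelike with respect to $g^{\mu\nu}(u,du)$ for all $(u,du)$, and the initial data hypersurface $S$ is connected with every maximal integral curve of $T$ meeting $S$ at most once. The lemma therefore asserts that $U_1 \cap U_2$ is connected.

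Next, I would apply Theorem \ref{ThmUnique}. Its hypotheses are that $u_1 : U_1 \to \R$ and $u_2 : U_2 \to \R$ are globally hyperbolic developments of the same initial data on a connected hypersurface $S \subseteq \R^{d+1}$ and that $U_1 \cap U_2$ is connected. The first part is given; the second has just been established by Lemma \ref{LemConnected}. The conclusion of Theorem \ref{ThmUnique} is exactly the statement we want: $u_1(x) = u_2(x)$ for all $x \in U_1 \cap U_2$.

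Since the corollary is just the composition of these two implications, there is no real obstacle and no nontrivial step; the only thing to verify when writing out the proof formally is that the hypothesis on $S$ in the corollary matches the one used by Lemma \ref{LemConnected} (it does, verbatim), and that the connectedness of $S$ required by Theorem \ref{ThmUnique} is indeed assumed (it is, through the phrasing ``posed on a connected hypersurface $S$''). Thus the written proof can be a single sentence citing both results.
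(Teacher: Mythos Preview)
Your proposal is correct and matches the paper's proof exactly: the paper's proof is the single sentence ``This follows directly from Lemma \ref{LemConnected} and Theorem \ref{ThmUnique}.''
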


\begin{proof}
This follows directly from Lemma \ref{LemConnected} and Theorem \ref{ThmUnique}.
\end{proof}

\subsection{Uniqueness results for subluminal quasilinear wave equations}
\label{SecUniquenessSub}

We recall that a quasilinear wave equation of the form \eqref{QuasilinearEq} is called \emph{subluminal} iff the causal cone of $g(u,du)$ is contained inside the causal cone of the Minkowski metric $m = \mathrm{diag}(-1, 1, \ldots, 1)$. 
As shown in Section \ref{SecIVP} of this paper, and in particular see Remark \ref{RemCutOff}, in general global uniqueness does not hold for subluminal quasilinear wave equations -- even if the initial data is posed on the well-behaved hypersurface $\{x^0 =0\}$. However, as we shall show below, developments are unique in regions that are globally hyperbolic with respect to the Minkowski metric. Recall the terminology introduced in Section \ref{SecIVP}: we say that a GHD of a subluminal quasilinear wave equation is a \emph{$m$-GHD} iff it is also globally hyperbolic with respect to the Minkowski metric with Cauchy hypersurface $S$. As usual, $S$ denotes here the initial data hypersurface.

\begin{lemma}
Let $u_1 : U_1 \to \R$ and $u_2 : U_2 \to \R$ be two GHDs of a subluminal quasilinear wave equation \eqref{QuasilinearEq} arising from the same initial data given on a connected hypersurface $S$ that is achronal with respect to the Minkowski metric $m$. Assume, moreover, that $u_1 : U_1 \to \R$ is a $m$-GHD. Then $U_1 \cap U_2$ is connected.
\end{lemma}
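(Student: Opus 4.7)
The plan is, for an arbitrary $x \in U_1 \cap U_2$, to construct a continuous path in $U_1 \cap U_2$ from $x$ to a point of $S$; since $S$ is connected and contained in $U_1 \cap U_2$, this will imply connectedness. If $x \in S$ the claim is trivial, so I assume $x \notin S$ and, possibly after reversing time orientation, that $x$ lies to the future of $S$ in $U_2$ with respect to $g(u_2,du_2)$. Global hyperbolicity of $U_2$ with Cauchy hypersurface $S$ then produces a past-directed $g(u_2,du_2)$-timelike curve $\gamma:[0,1]\to U_2$ with $\gamma(0)=x$ and $\gamma(1)\in S$. The entire task reduces to showing $\gamma([0,1])\subseteq U_1$.

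Subluminality promotes $\gamma$ to an $m$-timelike curve. Moreover, since $x^0$ is simultaneously a time function for $m$ and for every admissible $g$ in the subluminal setting (Section \ref{sec:subsuper}), $g(u_2,du_2)$-past coincides with $m$-past along $\gamma$ (both correspond to decreasing $x^0$). The $m$-achronality of $S$ then forces $\gamma$ to meet $S$ only at $t=1$, and in particular $x\in I^+_m(\gamma(1))$ in $\R^{d+1}$.

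Assume for contradiction that $\gamma$ leaves $U_1$ and set $a:=\inf\{t\in[0,1]:\gamma(t)\notin U_1\}$. Openness of $U_1$ together with $\gamma(0),\gamma(1)\in U_1$ forces $a\in(0,1)$, $\gamma([0,a))\subseteq U_1$, and $\gamma(a)\in\partial U_1\setminus U_1$. I then extend $\gamma|_{[0,a)}$ through $x$ to a continuous $m$-timelike curve $\tilde\gamma:(b,a)\to U_1$ that is inextendible in $U_1$: one prepends to $\gamma|_{[0,a)}$ a short $m$-timelike segment leaving $x$ in the $m$-future direction (for instance along $\partial/\partial x^0$) and then extends it maximally while remaining in $U_1$. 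Inextendibility at $t\to a^-$ is automatic because $\gamma(a)\notin U_1$, and at $t\to b^+$ it holds by maximality. Because $U_1$ is an $m$-GHD with Cauchy hypersurface $S$, the inextendible $m$-causal curve $\tilde\gamma$ must meet $S$ exactly once.

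The contradiction is that $\tilde\gamma$ meets $S$ nowhere: on $[0,a)$ this follows from $\gamma$ meeting $S$ only at $t=1>a$, while on $(b,0)$ any hypothetical intersection $\tilde\gamma(c)\in S$ would lie in $I^+_m(x)\subseteq I^+_m(\gamma(1))$ by transitivity of the chronological relation, yielding two distinct $m$-chronologically related points of $S$ and contradicting $m$-achronality. The main subtlety I expect to have to handle carefully is the orientation bookkeeping in this last step: the extension $\tilde\gamma$ must be genuinely $m$-timelike (not just $m$-causal) so that the transitivity argument invokes the chronological future $I^+_m$ which is what $m$-achronality rules out, and the coincidence of the $g(u_2,du_2)$-past and $m$-past directions along $\gamma$ must be tracked, both of which are guaranteed by subluminality together with the shared global time function $x^0$.
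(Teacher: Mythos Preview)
Your proof is correct and follows essentially the same approach as the paper: take a $g(u_2,du_2)$-timelike curve in $U_2$ from $x$ to $S$, observe that subluminality makes it $m$-timelike with the correct $m$-orientation, and then use the $m$-global hyperbolicity of $U_1$ to conclude it cannot leave $U_1$ before reaching $S$. The only cosmetic difference is that the paper invokes past-inextendibility in $U_1$ directly (since $x\in D^+_m(S,U_1)$), whereas you take the small detour of extending to a two-sided $m$-inextendible curve and deriving a contradiction with $m$-achronality of $S$; both routes encode the same idea.
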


\begin{proof}
Let $x \in U_1 \cap U_2$ and assume without loss of generality that $x \in I^+_{g(u_1,du_1)}(S, U_1)$. We claim that this implies $x \in I^+_{g(u_2,du_2)}(S, U_2)$. To see this, assume $x \in I^-_{g(u_2, du_2)}(S, U_2)$. Hence, there exists a future directed timelike curve from $S$ to $x$ in $U_1$ and a future directed timelike curve from $x$ to $S$ in $U_2$. They are both future directed timelike with respect to the Minkowski metric. Concatenating the two curves gives a contradiction to the achronality of $S$ with respect to $m$. This shows $x \in I^+_{g(u_2,du_2)}(S, U_2)$. 
 
Let $\gamma$ be a curve in $U_2$ that starts at $x$ and is timelike, past directed, and past inextendible w.r.t. $g(u_2,du_2)$. It thus intersects $S$.  However, $\gamma$ is also a past directed timelike curve with respect to $m$, and the global hyperbolicity of $U_1$ with respect to $m$ implies that $\gamma$ cannot leave $U_1$ without first intersecting $S$. Thus, the segment of $\gamma$ from $x$ to $S$ is contained in $U_1 \cap U_2$. This shows the connectedness of $U_1 \cap U_2$.
\end{proof}

Together with Theorem \ref{ThmUnique} the above lemma yields

\begin{corollary}
\label{CorUniqueSub}
Let $u_1 : U_1 \to \R$ and $u_2 : U_2 \to \R$ be two GHDs of a subluminal quasilinear wave equation \eqref{QuasilinearEq} arising from the same initial data given on a connected hypersurface $S$ that is achronal with respect to the Minkowski metric $m$. Assume, moreover, that $u_1 : U_1 \to \R$ is a $m$-GHD. Then $u_1 = u_2$ on $U_1 \cap U_2$.
\end{corollary}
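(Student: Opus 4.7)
The plan is to chain together the two ingredients that have just been set up. The preceding lemma established that, under exactly the hypotheses of the corollary (both $u_i$ are GHDs of the same data on a connected, $m$-achronal hypersurface $S$, and $u_1$ is in addition an $m$-GHD), the intersection $U_1 \cap U_2$ is connected. Theorem~\ref{ThmUnique} then says that any two GHDs arising from the same data on a connected $S$ agree throughout $U_1 \cap U_2$ whenever $U_1 \cap U_2$ is connected. Since every assumption of Theorem~\ref{ThmUnique} is satisfied, one simply concatenates the two results to conclude $u_1 = u_2$ on $U_1 \cap U_2$.

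Accordingly, the proof I would write is a single paragraph: invoke the previous lemma to obtain connectedness of $U_1 \cap U_2$, invoke Theorem~\ref{ThmUnique} to obtain agreement on this connected set, and observe that no further analysis is needed because subluminality and the $m$-GHD hypothesis only entered in the proof of connectedness. There is essentially no obstacle left to overcome at this stage -- all the work was already done in the lemma (where $m$-achronality and global hyperbolicity with respect to $m$ were used to show that any point of $U_1 \cap U_2$ can be joined to $S$ by a $g(u_2,du_2)$-timelike curve that is trapped inside $U_1$) and in Theorem~\ref{ThmUnique} (where the careful spacelike-boundary-point argument handled connected intersections in full generality).

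If I were pressed to identify a potential subtlety, it would simply be checking that the hypotheses of Theorem~\ref{ThmUnique} are genuinely in force: the theorem requires a connected initial hypersurface, which is part of the present hypotheses, and it requires the ambient setting of a quasilinear equation of the form \eqref{QuasilinearEq}, which subluminal equations trivially satisfy. Both checks are immediate, so the corollary follows without further work.
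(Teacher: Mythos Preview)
Your proposal is correct and matches the paper's approach exactly: the paper states that the corollary follows directly from combining the preceding lemma (connectedness of $U_1 \cap U_2$) with Theorem~\ref{ThmUnique}. There is nothing to add.
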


\begin{remark}
Let us remark that better bounds on the light cones of $g^{\mu \nu}(u, du)$ translate into an improvement of the uniqueness results. Above, we have only made use of the trivial Minkowski bound on the light cones for subluminal equations. If, for example, for a specific subluminal equation one can improve the a priori bound on the light cones of $g^{\mu \nu}(u, du)$ for certain initial data, then one can also improve the uniqueness result for these initial data.
\end{remark}

\subsection{Local existence for general quasilinear wave equations} \label{SecEx}

This section provides the other half of the local well-posedness statement for quasilinear wave equations with data on \emph{general} hypersurfaces: the local existence result. 

\begin{theorem}[Local existence] \label{ThmEx}
Given initial data for a quasilinear wave equation \eqref{QuasilinearEq}, there exists a globally hyperbolic development.
\end{theorem}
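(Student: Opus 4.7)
The plan is to patch together local solutions supplied by standard quasilinear hyperbolic Cauchy theory, and then cut down to a globally hyperbolic piece.

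First, for each $p \in S$, I would choose slice coordinates $(x^0,x^1,\ldots,x^d)$ on a neighborhood $W_p \subseteq \R^{d+1}$ of $p$ in which $S \cap W_p = \{x^0=0\} \cap W_p$. In these coordinates \eqref{QuasilinearEq} becomes a standard quasilinear wave equation on $\{x^0=0\}$ with data $(f_0,\alpha_0)$, and the slice is spacelike with respect to $g(f_0,\alpha_0)$. The same standard local well-posedness theorem invoked in the proof of Proposition~\ref{PropLocUniqueness} (e.g.\ \cite{Sogge}) then produces a smooth local solution $u_p : V_p \to \R$ on a smaller open set $V_p \subseteq W_p$ that is globally hyperbolic with respect to $g(u_p,du_p)$ with Cauchy hypersurface $S_p := V_p \cap S$, an open neighborhood of $p$ in $S$ whose closure is compactly contained in $W_p$.

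Second, I would show the local solutions are consistent on overlaps and patch them. For $p,q \in S$ with $V_p \cap V_q \neq \emptyset$, shrinking $V_p, V_q$ if necessary one may arrange that the overlap is connected and that both $u_p$ and $u_q$ restrict to globally hyperbolic developments of the common initial data $(f_0,\alpha_0)|_{S_p \cap S_q}$ on a connected globally hyperbolic subregion of $V_p \cap V_q$ containing $S_p \cap S_q$. Theorem~\ref{ThmUnique} (or, equivalently, the uniqueness half of the standard Cauchy theorem applied in slice coordinates) then forces $u_p = u_q$ wherever both are defined. Setting $U_0 := \bigcup_{p \in S} V_p$ and $u_0 : U_0 \to \R$ by $u_0|_{V_p} := u_p$ yields a well-defined smooth solution of \eqref{QuasilinearEq} on the open neighborhood $U_0$ of $S$, attaining the prescribed initial data.

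Third, I would extract a genuine GHD. The union $U_0$ need not be globally hyperbolic with Cauchy surface $S$, so I would define
\[
 U := D_{g(u_0,du_0)}(S) \subseteq U_0, \qquad u := u_0|_U.
\]
By construction $U$ is an open set containing $S$ on which $g(u,du)$ renders $S$ a Cauchy hypersurface, and $u$ is a smooth solution of \eqref{QuasilinearEq} with $u|_S = f_0$ and $du|_S = \alpha_0$. Hence $(U,u)$ is a globally hyperbolic development.

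The main obstacle is the patching in Step~2: one must simultaneously shrink the $V_p$ enough to make their pairwise intersections connected and contained in regions on which local uniqueness applies, yet keep them large enough that $U_0$ is an open neighborhood of all of $S$ and that $D(S)$ fills out a real open neighborhood of $S$. A clean way to achieve both is to take each $V_p$ to be the domain of dependence of a small slice neighborhood $S_p$ inside $W_p$ for the metric $g(u_p,du_p)$, which gives pieces whose overlaps are automatically well-behaved and whose union meets $S$ in all of $S$.
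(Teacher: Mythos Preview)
Your outline is close to the paper's, but the patching step (your Step~2) has a real gap that your closing paragraph does not actually close. You write that by ``shrinking $V_p,V_q$ if necessary one may arrange that the overlap is connected'', and later that taking each $V_p$ to be a domain of dependence makes the overlaps ``automatically well-behaved''. Neither claim is justified: the metrics $g(u_p,du_p)$ and $g(u_q,du_q)$ are \emph{a priori} different on the overlap, so there is no reason the intersection of two such domain-of-dependence diamonds should be connected, and shrinking each $V_p$ to fix one pair may destroy connectivity for another pair or fail to cover $S$. You correctly identify this as the main obstacle, but you have not supplied the mechanism that overcomes it.

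The paper's device is to introduce a single auxiliary vector field: extend the $g(f_0,\alpha_0)$-timelike unit normal $N$ of $S$ smoothly off $S$, and shrink each local development $DS_p$ so that $N$ remains timelike with respect to $g(u_p,du_p)$ throughout $DS_p$. Then for any $x\in DS_p\cap DS_q$ the integral curve of $N$ through $x$ is timelike in \emph{both} developments, hence must run back to $S_p\cap S_q$ while staying in both. This gives the connectivity needed to invoke Theorem~\ref{ThmUnique}. The paper does not even assume $DS_p\cap DS_q$ is connected: it works on each connected component $A$, uses the $N$-curves to show $A$ meets $S$, takes domains of dependence of a component $S_A$ of $A\cap S$ with respect to each metric, shows their intersection is connected (again via $N$), applies Theorem~\ref{ThmUnique} there, and then argues by contradiction that this common region exhausts $A$. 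Your Step~3 (passing to $D(S)$ at the end) is a harmless addition, but it does not substitute for this argument: you first need the solutions to agree on the overlaps before you have a well-defined $u_0$ whose domain of dependence you can take.
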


Moreover, this result is needed for the \emph{existence} results of a unique maximal GHD for superluminal quasilinear wave equations and of a maximal unique GHD for subluminal quasilinear wave equations. 

Note that Theorem \ref{ThmEx} with data on the hypersurface $\{t=0\}$ is a standard literature result. We prove Theorem \ref{ThmEx} by using the standard literature result to construct solutions in local coordinate neighbourhoods around points of the general initial data hypersurface and then patching them together. Note that this has to be carried out carefully to ensure that different local solutions agree on the intersection of their domains. Here we make use of Theorem \ref{ThmUnique} to guarantee uniqueness if the intersection of their domains is connected.

%For the proof of Theorem \ref{ThmEx} we cover the initial data hypersurface by coordinate patches, localise the initial data, and apply the standard local existence results for quasilinear wave equations with initial data on a $\{t = 0\}$ hypersurface to obtain developments in those patches. By choosing those patches suitably we can ensure that their mutual intersection is connected, so that, roughly, by Theorem \ref{ThmUnique} the patches fit together to give a globally hyperbolic development of the (global) initial data.

\begin{proof}
Given the initial data $f_0, \alpha_0$ on the hypersurface $S$ (as discussed in Section \ref{theorems_intro}) we choose a timelike normal $N$ along $S$ and extend it smoothly off $S$ to yield a vector field which we also denote with $N$. There exists an open neighbourhood $D$ of $\{0\} \times S$ in $\R \times S$ and an open neighbourhood $T \subseteq \R^{d+1}$ of $S$ such that the flow $\Phi$ of $N$ is a diffeomorphism from $D$ onto $T$. For $p \in S$ let $W_p \subseteq T \subseteq \R^{d+1}$ be an open neighbourhood of $p$ on which there exists slice coordinates in which the Lorentzian metric $g(f_0, \alpha_0)$ determined by the initial data is $C^0$-close to the Minkowski metric. Let $S_p \subseteq S$ be a neighbourhood of $p$ in $S$ with closure that is compactly contained in $W_p$. The standard energy methods in the literature (see for example \cite{Sogge}) yield that there exists a globally hyperbolic development $u_p : DS_p \to \R$ for \eqref{QuasilinearEq} of the initial data on $S_p$, where $DS_p \subseteq W_p$. Moreover, by choosing $DS_p$ smaller if necessary, we can assume that $N$ is timelike on $DS_p$. We now claim that for all $p, q \in S$ we have $u_p = u_q$ on $DS_p \cap DS_q$.

To show this, assume that $DS_p \cap DS_q \neq \emptyset$ and let $A$ be a connected component of $DS_p \cap DS_q $. Consider an $x \in A$. The integral curve of $N$ through $x$ is a timelike curve in $DS_p$ as well as in $DS_q$, and thus it has to intersect $S_p \cap S_q$ and, moreover, its segment from $x$ to $S_p \cap S_q$ is contained in $DS_p \cap DS_q$. This shows that $A \cap (S_p \cap S_q)$ is non-empty. It will follow a posteriori that $A \cap (S_p \cap S_q)$ is connected, but for the time being let $S_A$ be a connected component of $A \cap (S_p \cap S_q)$. We denote with $D_pS_A, D_qS_A$ the domain of dependence of $S_A$ in $A$ with respect to the Lorentzian metric arising from $u_p$ and $u_q$, respectively. Since by the above argument involving the timelike integral curves of $N$, the intersection $D_p S_A \cap D_q S_A$ is connected, Theorem \ref{ThmUnique} implies that we have $u_p = u_q$ on $D_p S_A \cap D_q S_A$. 

Assume now that $D_p S_A \cap D_q S_A \subsetneq A$. Since $A$ is connected, there exists an $r \in \partial (D_p S_A \cap D_q S_A) \cap A$. Without loss of generality we assume that $r$ lies to the future of $S_A$. Let $\gamma$ be any past directed and past inextendible causal curve in $A$ with respect to the metric arising from $u_p$ that starts at $r$. The global hyperbolicity of $D_pS_A \cap D_q S_A$ (see Remark \ref{RemTime}) implies
\begin{equation}
\label{AuxEq}
J^+_{g(u_p, du_p)}(S_A, A) \cap \mathrm{Im}(\gamma) \subseteq \overline{D_pS_A \cap D_qS_A} \;,
\end{equation}
and hence the part of $\gamma$ to the causal future of $S_A$ is also a past directed causal curve in $A$ with respect to $u_q$. The global hyperbolicity of $DS_p$ and $D S_q$ shows that $\gamma$ has to intersect $S_p \cap S_q$, and by \eqref{AuxEq}, $\gamma$ in fact intersects $S_A$. This, however, gives the contradiction $r \in D_pS_A \cap D_qS_A$ by definition of the domain of dependence.  We thus conclude that $D_p S_A \cap D_q S_A = A$. Moreover, it now follows that $u_p = u_q$ holds on $DS_p \cap DS_q$. Hence, we can finish the proof by constructing a GHD $u : U \to \R$ of the given initial data on $S$ by setting $U = \bigcup_{p \in S} DS_p$ and $u(x) = u_p(x)$ for $x \in DS_p$.
\end{proof}

\subsection{The existence of a unique maximal GHD for superluminal quasilinear wave equations} \label{SecUMGHD}

\begin{theorem} \label{ThmUMGHD}
Assume that there exists a vector field  $T$ on $\R^{d+1}$ such that $T$ is timelike with respect to  $g^{\mu \nu}(u, d u)$ for all $u, d u$, where $g$ is as in \eqref{QuasilinearEq} and that initial data is posed on a connected hypersurface $S$ which has the property that every maximal integral curve of $T$ intersects $S$ at most once.

Given such initial data, there then exists a \emph{unique maximal globally hyperbolic development} $u_{\max} : U_{\max} \to \R$, that is, a globally hyperbolic development  $u_{\max} : U_{\max} \to \R$ with the property that for any other globally hyperbolic development $u : U \to \R$ of the same initial data we have $U \subseteq U_{\max}$ and $u_{\max}|_U = u$.
\end{theorem}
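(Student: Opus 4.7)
The plan is to define $U_{\max} := \bigcup_\alpha U_\alpha$ as the union of the domains of \emph{all} GHDs $(U_\alpha, u_\alpha)$ of the given initial data, and to set $u_{\max}$ equal to $u_\alpha$ on each $U_\alpha$. By Theorem \ref{ThmEx} the family is non-empty, and Corollary \ref{CorSupLum} guarantees that any two members agree on the intersection of their domains, so $u_{\max}$ is well-defined; smoothness, the fact that it solves \eqref{QuasilinearEq}, and attainment of the correct initial data on $S$ are inherited locally from each $u_\alpha$. Once we verify that $(U_{\max}, g(u_{\max}, du_{\max}))$ is a GHD (i.e.\ globally hyperbolic with Cauchy hypersurface $S$), maximality and uniqueness are automatic: any other GHD $(V,v)$ is one of the $U_\alpha$, hence $V \subseteq U_{\max}$ and $v = u_{\max}|_V$.

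The core task is therefore to check global hyperbolicity of $U_{\max}$. I would first use $T$ to impose a fibered structure: for each $x \in U_{\max}$, choose any $U_\alpha \ni x$; the $T$-integral curve through $x$ is timelike in $U_\alpha$ and, by global hyperbolicity of $U_\alpha$, meets $S$ at a unique point $\pi(x) \in S$ with the entire segment from $x$ to $\pi(x)$ contained in $U_\alpha$. The hypothesis that every maximal $T$-integral curve in $\R^{d+1}$ meets $S$ at most once makes $\pi(x)$ independent of the chosen $\alpha$; defining $\tau(x)$ to be the signed $T$-flow time from $\pi(x)$ to $x$ then yields a smooth diffeomorphism $(\tau,\pi)$ from $U_{\max}$ onto an open $T$-tube $\{(t,y) \in \R \times S : t \in I_y\}$, where each $I_y$ is an open interval containing $0$. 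By Remark \ref{RemTime}, the sets $U_+, U_-$ (points strictly to the future respectively past of $S$ in any, hence every, GHD containing them) are open, disjoint, and satisfy $U_{\max} = U_+ \sqcup S \sqcup U_-$ with $\partial U_\pm \cap U_{\max} = S$.

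For achronality of $S$, a future-directed timelike curve $\gamma$ in $U_{\max}$ from $p_1 \in S$ to $p_2 \in S$ would have $\gamma(t_1+\epsilon) \in U_+$ and $\gamma(t_2-\epsilon) \in U_-$, so $t^\star := \inf\{t : \gamma(t) \in U_-\}$ satisfies $\gamma(t^\star) \in S$ by the partition above; but future-directedness at $\gamma(t^\star)$ forces $\gamma(t^\star+\epsilon) \in U_+$ for all small $\epsilon>0$, contradicting the existence of $\gamma(t^\star+\epsilon_n) \in U_-$ arbitrarily close. The step I expect to be the main obstacle is showing that every past-inextendible future-directed timelike curve $\gamma \subseteq U_{\max}$ (which, if it misses $S$, lies entirely in $U_+$ or $U_-$, say $U_+$) actually meets $S$. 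The natural approach is to pick $\gamma(t_0) \in U_{\alpha_0}$ and maximally extend $\gamma$ backwards inside $U_{\alpha_0}$: if this exhausts $\gamma$ to the past, global hyperbolicity of $U_{\alpha_0}$ forces $\gamma \cap S \neq \emptyset$, a contradiction; otherwise $\gamma$ exits at a point in some $U_{\alpha_1}$, and one iterates. To make this bootstrap terminate I would recast the construction via Zorn's lemma on the poset of GHDs ordered by extension, where the decisive lemma is that any \emph{chain} of GHDs has the union as a GHD upper bound — for a chain, any compact curve segment lies in a single member, so the achronality argument carries through verbatim and the backwards-extension procedure produces a strictly increasing sequence of members of the chain whose sup (again in the chain's union) is the required GHD. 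Zorn then produces a maximal GHD, and Corollary \ref{CorSupLum} together with maximality identifies it with $U_{\max}$.
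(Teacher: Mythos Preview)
Your overall architecture---take $U_{\max}$ to be the union of all GHDs, use Corollary \ref{CorSupLum} for well-definedness, and then verify global hyperbolicity---is exactly what the paper does. Your $T$-fibration and the partition $U_{\max}=U_+\sqcup S\sqcup U_-$ give a clean route to achronality of $S$; the paper instead proves ``at most one intersection'' by a connectedness argument on the set $J=\{t>\tau_0:\gamma([\tau_0,t])\subseteq U_\alpha\text{ for some }\alpha\}$, showing $J$ is open (trivially) and closed (via Remark \ref{RemTime}), hence all of $(\tau_0,b)$. Both work.

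The genuine gap is that you have misidentified the hard step. Showing that an inextendible timelike curve $\gamma:(a,b)\to U_{\max}$ meets $S$ is in fact a one-liner: pick any $t_0$ with $\gamma(t_0)\in U_{\alpha_0}$, let $(a_0,a_1)\subseteq(a,b)$ be the maximal interval on which $\gamma$ stays in $U_{\alpha_0}$; then $\gamma|_{(a_0,a_1)}$ is inextendible in $U_{\alpha_0}$ (at each end it either leaves $U_{\alpha_0}$ or coincides with an end of $\gamma$ itself), so it must cross the Cauchy surface $S$. No bootstrap and no Zorn are needed. More importantly, your Zorn detour does not close: Zorn hands you a maximal element $M$ of the extension poset, but ``Corollary \ref{CorSupLum} together with maximality'' does \emph{not} identify $M$ with $U_{\max}$. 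For that you would need, given an arbitrary GHD $V\not\subseteq M$, that $M\cup V$ is again a GHD, which is precisely the global-hyperbolicity-of-a-union question you were trying to avoid (and $\{M,V\}$ is not a chain, so your chain lemma does not apply). Drop the Zorn paragraph, insert the one-line ``hits $S$'' argument above, keep your $U_\pm$ argument for ``hits $S$ at most once'', and the proof is complete.
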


\begin{proof}
We consider the set $\{(u_\alpha, U_\alpha) \; | \; \alpha \in A\}$ of all globally hyperbolic developments $u_\alpha : U_\alpha \to \R$ arising from the given initial data on $S$ as above. Note that this is a set and, moreover, it is non-empty by Theorem \ref{ThmEx}. We now define $U_{\max} := \bigcup_{\alpha \in A} U_\alpha$ and $u_{\max} : U \to \R$ by $u_{\max}(x) = u_\alpha(x)$ for $\alpha \in A$ with $x \in U_\alpha$. Note that the latter is well-defined by Corollary \ref{CorSupLum}. In order to see that $u_{\max} : U_{\max} \to \R$ is a \emph{globally hyperbolic} development of \eqref{QuasilinearEq} arising from the given initial data, consider an inextendible timelike curve $\gamma : (a,b) \to U_{\max}$, where $-\infty \leq a < b \leq \infty$, and let $a < t_0 < b$. We have $\gamma(t_0) \in U_{\alpha_0}$ for some $\alpha_0 \in A$. Let $(a_0, a_1) \subseteq I$ be the maximal interval containing $t_0$ such that $\gamma|_{(a_0,a_1)}$ maps into $U_{\alpha_0}$. Since $\gamma|_{(a_0,a_1)}$ is an inextendible timelike curve in $U_{\alpha_0}$, there exists a $\tau_0 \in (a_0, a_1)$ with $\gamma(\tau_0) \in S$. Thus, it remains to show that $\gamma$ does not intersect $S$ more than once. Without loss of generality we assume that $\gamma_{(a_0, a_1)}$ is future directed in $U_{\alpha_0}$. We consider
\begin{equation*}
J = \{ t \in (\tau_0, b) \; | \; \exists \alpha \in A \textnormal{ with } \gamma \big( [ \tau_0, t]\big) \subseteq U_\alpha\} \;.
\end{equation*}
We already know that $J$ is non-empty. Moreover, $J$ is clearly open, since each $U_\alpha$ is open. Let $t_n \in J$ be a sequence with $t_n \to t_\infty \in (\tau_0, b)$ as $n \to \infty$, and let $\alpha_\infty \in A$ be such that $\gamma(t_\infty) \in U_{\alpha_\infty}$. By the openness of $U_{\alpha_\infty}$ there is $n_0 \in \N$ with $\gamma(t_{n_0}) \in U_{\alpha_\infty}$. Since $t_{n_0} \in J$, there exists $\alpha_{n_0} \in A$ with $\gamma \big( [ \tau_0, t_{n_0}]\big) \subseteq U_{\alpha_{n_0}}$. It now follows from Remark \ref{RemTime} that $\gamma(t_{n_0})$ must also lie to the future of $S$ in $U_{\alpha_\infty}$. Hence, $S$ being a Cauchy hypersurface of $U_{\alpha_\infty}$ implies that $t_\infty \in J$. It thus follows that $J = (\tau_0, b)$. We conclude that $\gamma$ cannot intersect $S$ again to the future of $\tau_0$. The analogous argument shows that it can neither intersect $S$ again to the past of $\tau_0$. We thus conclude that $U_{\max}$ is globally hyperbolic with Cauchy hypersurface $S$.

Finally, it is clear that any other globally hyperbolic development of the same initial data is contained in $U_{\max}$.
\end{proof}

\begin{remark} \label{RemMGHD}
We note that the above construction of a unique maximal globally hyperbolic development is always possible provided the property of global uniqueness holds.
\end{remark}

\subsection{The existence of a maximal unique GHD for subluminal quasilinear wave equations} \label{SecMUGHD}

As mentioned before, for subluminal quasilinear wave equations there does not generally exist a unique maximal globally hyperbolic development. In this section we show existence of a globally hyperbolic development on the domain of which the solution is uniquely defined and which is maximal among all GHDs that have this property. But first we establish some terminology: We consider a subluminal quasilinear wave equation of the form \eqref{QuasilinearEq} and consider initial data prescribed on a connected hypersurface $S$ that is \emph{acausal} with respect to the Minkowski metric $m$, i.e., there does not exist a pair of points on $S$ that can be connected by a causal curve within the Minkowski spacetime. We call a GHD $u_1 : U_1 \to \R$ a \emph{unique globally hyperbolic development} (UGHD) iff for all other GHDs $u_2 : U_2 \to \R$ we have $u_1 = u_2$ on $U_1 \cap U_2$. We note that any $m$-GHD is a UGHD by Corollary \ref{CorUniqueSub}.

\begin{theorem} \label{ThmMUGHD}
Consider a subluminal quasilinear wave equation of the form \eqref{QuasilinearEq}.
Given initial data on a connected hypersurface $S$ that is acausal with respect to the Minkowski metric there exists a UGHD $u : U \to \R$ with the property that the domain of any other UGHD is contained in $U$. The UGHD $u : U \to \R$ is called the \emph{maximal unique globally hyperbolic development} (MUGHD).
\end{theorem}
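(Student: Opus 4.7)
The plan is to construct the MUGHD as the union of the domains of all UGHDs and then verify that the resulting development is itself a UGHD that is globally hyperbolic with Cauchy hypersurface $S$. Let $\{u_\alpha : U_\alpha \to \R\}_{\alpha \in A}$ be the collection of all UGHDs of the given initial data, set $U := \bigcup_{\alpha \in A} U_\alpha$, and define $u : U \to \R$ by $u(x) := u_\alpha(x)$ whenever $x \in U_\alpha$. This is well-defined: whenever $x \in U_\alpha \cap U_\beta$, the UGHD property of $u_\alpha$ applied to the GHD $u_\beta$ yields $u_\alpha(x) = u_\beta(x)$. The collection is non-empty because Theorem \ref{ThmEx}, combined with restriction to a sufficiently small tubular neighbourhood of $S$, produces an $m$-GHD, which is a UGHD by Corollary \ref{CorUniqueSub}.

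Smoothness of $u$, the equation \eqref{QuasilinearEq}, and the initial conditions $u|_S = f_0$, $du|_S = \alpha_0$ follow immediately from the corresponding local properties of the $u_\alpha$. Once we have verified that $(U, g(u,du))$ is globally hyperbolic with Cauchy hypersurface $S$, the UGHD property of the construction is straightforward: for any GHD $u' : U' \to \R$ and any $x \in U \cap U'$, pick $\alpha$ with $x \in U_\alpha$; then the UGHD property of $u_\alpha$ forces $u(x) = u_\alpha(x) = u'(x)$. Maximality is automatic, since any other UGHD belongs by definition to the family $\{u_\alpha\}$ and so its domain is a subset of $U$.

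The principal technical obstacle is thus showing that $U$ is globally hyperbolic with Cauchy hypersurface $S$. Let $\gamma$ be an inextendible $g(u,du)$-timelike curve in $U$; the task is to show it meets $S$ exactly once. The \emph{at most once} half is where the subluminal hypothesis is decisive: subluminality forces $\gamma$ to be $m$-timelike as well, so two crossings of $S$ would produce a future-directed $m$-timelike segment with both endpoints on $S$, contradicting the $m$-acausality of $S$. For the \emph{at least once} half, I would pick any parameter $t_0$ in the domain of $\gamma$, choose $\alpha_0$ with $\gamma(t_0) \in U_{\alpha_0}$, and let $(a_0, a_1)$ be the connected component of $\gamma^{-1}(U_{\alpha_0})$ containing $t_0$. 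Openness of $U_{\alpha_0}$ combined with maximality of this component implies that $\gamma|_{(a_0, a_1)}$ is inextendible as a curve in $U_{\alpha_0}$ (either the parameter reaches an endpoint of $\gamma$, or $\gamma$ runs out to $\partial U_{\alpha_0}$). Global hyperbolicity of $U_{\alpha_0}$ with Cauchy surface $S$ then gives the required crossing. In contrast to the superluminal argument of Theorem \ref{ThmUMGHD}, no delicate continuity-of-time-orientation argument via Remark \ref{RemTime} is needed here, because the subluminal/$m$-acausal combination directly rules out double crossings regardless of which $U_\alpha$'s the curve passes through.
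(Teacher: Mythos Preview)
Your proof is correct and follows the same overall architecture as the paper's: take the union of all UGHDs, check well-definedness via the UGHD property, verify global hyperbolicity, and read off the UGHD property and maximality. The paper handles non-emptiness slightly differently --- rather than restricting to a tubular neighbourhood, it takes the GHD provided by Theorem~\ref{ThmEx} and passes to the $m$-domain of dependence of $S$ therein (invoking \cite[Chapter 14, 38.\ Theorem and 43.\ Lemma]{ONeill}), which directly yields an $m$-GHD without having to argue that a small tubular neighbourhood is $m$-globally hyperbolic with Cauchy surface $S$.

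The one substantive difference is in the \emph{at most once} half of global hyperbolicity. The paper simply refers back to the argument in the proof of Theorem~\ref{ThmUMGHD}, which uses an open--closed argument on the parameter interval together with time-orientation consistency across overlapping $U_\alpha$'s via Remark~\ref{RemTime}. Your observation that subluminality forces $\gamma$ to be $m$-timelike, so that two crossings of $S$ would immediately contradict the assumed $m$-acausality of $S$, is both correct and genuinely simpler: it exploits the extra structure of the subluminal setting to bypass the time-orientation bookkeeping entirely.
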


\begin{proof}
We consider the set $\{u_\alpha : U_\alpha \to \R \; | \; \alpha \in A\}$ of all UGHDs of the given initial data. Note that this set is non-empty: by Theorem \ref{ThmEx} there exists a GHD $u_1 : U_1 \to \R$ and we can now consider the domain of dependence of $S$ in $U_1$ with respect to the Minkowski metric. By \cite[Chapter 14, 38.\ Theorem and 43.\ Lemma]{ONeill} this gives rise to a $m$-GHD. By Corollary \ref{CorUniqueSub} this is a UGHD.

We now set $U := \bigcup_{\alpha \in A} U_\alpha$ and $u(x) := u_\alpha(x)$ for $x \in U_\alpha$. The latter is well-defined since each $u_\alpha : U_\alpha \to \R$ is a UGHD. The same argument as in the proof of Theorem \ref{ThmUMGHD} shows that $u : U \to \R$ is a \emph{globally hyperbolic} development. To show that it is a UGHD, let $u_2 : U_2 \to \R$ be a GHD  and consider $x \in U \cap U_2$. There exists an $\alpha \in A$ with $x \in U_\alpha$, and since $u_\alpha : U_\alpha \to \R$ is a UGHD it follows that $u_2(x) = u_\alpha(x) = u(x)$. Finally, it is clear by construction that the domain of any other UGHD is contained in $U$.
\end{proof}

\begin{theorem}
Consider a subluminal quasilinear wave equation of the form \eqref{QuasilinearEq}.
Given initial data on a connected hypersurface $S$ that is acausal with respect to the Minkowski metric there exists a unique $m$-MGHD, i.e., a $m$-GHD $u_{\max}: U_{\max} \to \R$  with the property that for any other $m$-GHD $u: U \to \R$ of the same initial data we have $U \subseteq U_{\max}$ and $u_{\max}|_U = u$. 
\end{theorem}

\begin{proof}
One considers the set of all $m$-GHDs of the given initial data. The beginning of the proof of Theorem \ref{ThmMUGHD} shows that this set is non-empty. Using Corollary \ref{CorUniqueSub}, which provides a global uniqueness statement for $m$-GHDs, Remark \ref{RemMGHD} shows that one can now proceed as in the proof of Theorem \ref{ThmUMGHD} to construct the maximal element in the above set of all $m$-GHDs.
\end{proof}

We summarise that given a GHD for a \emph{superluminal} equation, one knows that it is contained in the unique maximal GHD. For \emph{subluminal} equations, there are in general GHDs which are not contained in the maximal UGHD. However, given a $m$-GHD, it \emph{is} contained in the maximal UGHD. In particular the $m$-MGHD is contained in the MUGHD, but in general the latter is strictly bigger.

Let us also remark that we expect that the analogue of Theorem  \ref{ThmMUGHD} does not hold for more general quasilinear wave equations, i.e., ones which are neither subluminal nor superluminal. Indeed, even more strongly, we formulate the following 
\begin{conjecture} \label{ConUni}
%For general quasilinear wave equations of the form \eqref{QuasilinearEq} and for general initial data there does not exist \emph{any} UGHD.
There are quasilinear wave equations of the form \eqref{QuasilinearEq} for which there exists initial data such that there does not exist \emph{any} UGHD.
\end{conjecture}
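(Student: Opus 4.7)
The plan is to prove the conjecture by constructing an explicit quasilinear equation and initial data rather than arguing abstractly, since a general non-existence argument seems unlikely to succeed: by Proposition \ref{PropLocUniqueness} together with Theorem \ref{ThmUnique}, any GHD whose domain is a sufficiently thin tubular neighborhood of $S$ tends already to be a UGHD, because its intersection with every other GHD is connected and agreement then follows on all of the intersection. The starting point is the subluminal Born-Infeld analysis of Section \ref{SecIVP}: there the MUGHD $\Psi(U')$ is a UGHD precisely because one can ``stop early'' at the singular set $\mathcal{C}$ and thus avoid the ambiguous region $X$ altogether. The strategy is to design an equation in which no such escape is possible.

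First I would look for a $1+1$-dimensional equation of the form \eqref{QuasilinearEq} that is neither subluminal nor superluminal — for instance, by modifying the Born-Infeld coefficient $c$ to be a smooth function $c(u)$ that is positive for $u>0$ and negative for $u<0$, so that the causal cone of $g(u,du)$ tilts in incompatible directions depending on the sign of $u$. In such an equation the future boundary of a GHD, and in particular the location of the branching singularities, depends on the branch selected at earlier loci, and different global branch choices can produce GHDs whose domains bulge out in genuinely different directions. Next, I would engineer initial data supporting a sequence of BI-type branching singular loci whose projections along $S$ are unbounded — or, more ambitiously, dense — so that any candidate UGHD $u_1 : U_1 \to \R$, however thin, meets a neighborhood of some branching locus where another GHD disagrees. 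Finally, for any such candidate one would exhibit a competing GHD $u_2 : U_2 \to \R$ whose domain ``wraps back'' from above to re-enter $U_1$, giving a component of $U_1 \cap U_2$ not containing $S$ on which $u_1 \neq u_2$; by the contrapositive of the UGHD property this shows $u_1$ is not a UGHD.

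The hard part will be defeating the thin-strip UGHD possibility, as explained above. This seems to require $S$ to be non-compact, so that the competing GHDs $U_2$ have enough room to be both globally hyperbolic with Cauchy surface $S$ and to re-approach $S$ at a distant point after an excursion into a region where a different branching choice has been made. Arranging such ``wrap-around'' $U_2$ while preserving global hyperbolicity is delicate, because every inextendible causal curve in $U_2$ must still meet $S$ exactly once; the variable tilt of the cone of $g(u,du)$ is what creates the latitude to do this. Verifying that the competing $U_2$ really are globally hyperbolic, that the disagreement genuinely survives on a non-$S$ component of $U_1\cap U_2$, and that this can be done \emph{uniformly} in the choice of candidate $u_1$ — so that no UGHD exists — is the central technical obstacle, and is presumably why the statement is formulated only as a conjecture.
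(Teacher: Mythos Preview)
The statement you are addressing is a \emph{conjecture}, and the paper does not prove it. Immediately after stating Conjecture~\ref{ConUni} the authors write only that it ``is based on the following scenario which we think might happen'': an infinite family of GHDs whose domains bend back towards $S$ and approach it arbitrarily closely, so that no tubular neighbourhood of $S$ is thin enough to avoid all of them. That is the entire content the paper offers --- a heuristic picture (their Figure~\ref{FigCon}), not an argument.

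Your proposal is therefore not to be compared against a proof but against this heuristic. On that score your instincts are well aligned: you correctly identify that the obstruction to the conjecture is the ``thin-strip'' UGHD coming from Proposition~\ref{PropLocUniqueness} and Theorem~\ref{ThmUnique}, and your ``wrap-around'' mechanism is exactly the paper's picture of GHDs bending back to re-enter an arbitrarily small neighbourhood of $S$. You also correctly note that the equation must be neither subluminal nor superluminal, since Theorem~\ref{ThmMUGHD} guarantees a UGHD in the subluminal case and Corollary~\ref{CorSupLum} (via Theorem~\ref{ThmUMGHD}) does so in the superluminal case. Your concrete suggestion of a sign-changing coupling $c(u)$ and a sequence of branching loci accumulating along $S$ goes further than anything the paper attempts.

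That said, what you have written is a research programme, not a proof, and you acknowledge as much in your final sentence. The genuinely hard step --- exhibiting, for \emph{every} candidate GHD $u_1:U_1\to\R$ however thin, a competing GHD $u_2:U_2\to\R$ that is globally hyperbolic with Cauchy surface $S$, re-enters $U_1$ in a component disjoint from $S$, and disagrees there --- is not carried out, and the paper does not carry it out either. So your proposal neither succeeds nor fails relative to the paper: both stop at the same point, with the paper being considerably less specific about a possible construction.
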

This conjecture is based on the following scenario which we think might happen: there exists a quasilinear wave equation of the form \eqref{QuasilinearEq} and initial data such that there exists an infinite family of GHDs the domains of which bend round back towards the initial data hypersurface $S$ and approach it arbitrarily closely, as shown in Figure \ref{FigCon}. This would imply that there is no neighbourhood of $S$ on which the solution is uniquely defined. In particular, this would establish the sharpness of the local uniqueness statement of Proposition \ref{PropLocUniqueness}.

\begin{figure}[h]
  \centering
  \includegraphics[width=8cm]{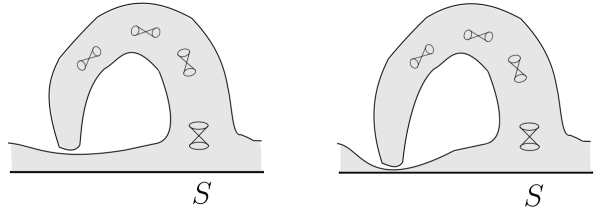}
      \caption{A possible mechanism for a resolution of Conjecture \ref{ConUni}. The Figure shows the light cones of $g(u,du)$.} \label{FigCon}
\end{figure}

\subsection{A uniqueness criterion for general quasilinear wave equations at the level of MGHDs}
\label{SecFin}

In this section we consider a general quasilinear wave equation of the form \eqref{QuasilinearEq}. Recall that a GHD $u_1 : U_1 \to \R$ of given initial data posed on a hypersurface $S$ is called a \emph{maximal globally hyperbolic development} (MGHD) iff there does not exist a GHD $u_2 : U_2 \to \R$  of the same initial data with $U_1 \subsetneq U_2$. Note that by Theorem \ref{ThmUnique} any such GHD $u_2 : U_2 \to \R$ would agree with $u_1$ on $U_1$, and thus it would correspond to an extension of $u_1 : U_1 \to \R$. In other words, a MGHD is a GHD that cannot be extended as a GHD. 

The example from Section \ref{SecIVP} shows that in general there can exist infinitely many MGHDs for given initial data. Consider now two such MGHDs $u_1 : U_1 \to \R$ and $u_2 : U_2 \to \R$ arising in the example of Section \ref{SecIVP}. Then $U_1 \cap U_2$ is disconnected. Let $A$ denote the connected component containing $S$. Consider a point $x \in U_1 \cap U_2$ which does not lie in $A$. 
%\emph{The phenomenon of non-uniqueness, i.e., that $u_1(x)$ does not equal $u_2(x)$, arises, because the `path of evolution' the first solution takes from $A$ to reach $x$ is blocked for the second solution by already being defined in that very region.} 
\emph{The phenomenon of non-uniqueness, i.e., that $u_1(x)$ does not equal $u_2(x)$, arises, because the `path of evolution' the second solution takes from $A$ to reach $x$ is blocked because the first solution is already defined in that very region.} In the example of Section \ref{SecIVP}, this behaviour arises because $U_1$ (say) lies ``on both sides of its boundary". The following theorem makes this precise and shows that this is the only mechanism at the level of MGHDs that leads to non-uniqueness for general quasilinear wave equations. It states that given an MGHD with the property that its domain of definition always lies to \emph{just one side} of its boundary, i.e., the domain of definition cannot block evolution elsewhere, then it is the unique MGHD.

\begin{theorem} \label{ThmNotBlocking}
Let $u_1 : U_1 \to \R$ be a MGHD of given initial data for a quasilinear wave equation of the form \eqref{QuasilinearEq} and assume that 
\begin{equation}
\label{PropertyBoundary}
\parbox{0.75\textwidth}{for every $p \in (\partial U_1 \setminus \partial S)$ there exists a neighbourhood $V$ of $p$ together with a chart $\psi : V \to (-\varepsilon, \varepsilon)^{d+1}$, $\varepsilon >0$, and a continuous function $f : (-\varepsilon, \varepsilon)^d \to (-\varepsilon, \varepsilon)$ such that $\psi^{-1}(\mathrm{graph} f) = \partial U_1 \cap V$, all points below $\mathrm{graph} f$ in $(-\varepsilon, \varepsilon)^{d+1}$ are mapped into $U_1$ and all points above $\mathrm{graph} f$ in $(-\varepsilon, \varepsilon)^{d+1}$ are mapped into $\R^{d+1} \setminus U_1$.}
\end{equation} 
Then $u_1 : U_1 \to \R$ is the unique MGHD, i.e., any other GHD $u : U \to \R$ satisfies  $U \subseteq U_1$ and thus also $u_1|_U = u$.
\end{theorem} 

Note that in order to apply this theorem to a concrete example one has to first construct a/the whole MGHD and is only then able to infer {\it a posteriori} that the evolution was indeed unique.

\begin{proof}
Let $u_1 : U_1 \to \R$ be a MGHD of given initial data such that \eqref{PropertyBoundary} is satisfied.
Let $u_2 : U_2 \to \R$ be a second GHD of the same initial data and, to obtain a contradiction, we assume that 
%$U_2 \subsetneq U_1$
$U_2 \nsubseteq U_1$. Let us denote the connected component of $U_1 \cap U_2$ that contains the initial data hypersurface $S$ with $A$. A point in the boundary of $\partial A$ cannot be contained in $U_1$ as well as in $U_2$ by definition of $A$.
Since we have 
%$U_2 \subsetneq U_1$ 
$U_2 \nsubseteq U_1$
it follows that $\partial A \cap U_2$ is non-empty and contained in the complement of $U_1$. Thus, we obtain
\begin{equation} \label{BoundaryNE}
\emptyset \neq \partial A \cap U_2 \subseteq \partial U_1 \;.
\end{equation}
Hence, we have exhibited a part of the boundary of the MGHD $u_1 : U_1 \to \R$ to which the solution extends smoothly (from $A$). The idea is now to use property \eqref{PropertyBoundary} to show that one can actually extend $u_1$ across this boundary to obtain a bigger GHD -- thus violating the maximality of $u_1$.\footnote{In general, i.e., if property \eqref{PropertyBoundary} is not satisfied, this might not be possible since there is no free space on the other side of the boundary to construct an extension.} The construction is similar to the on in the proof of Theorem \ref{ThmUnique}.

A slight variation of Remark \ref{RemTime} shows that the set $A$ is the MCGHD of $u_1 : U_1 \to \R$ and $u_2 : U_2 \to \R$. In particular, $A$ is globally hyperbolic with Cauchy surface $S$. By \eqref{BoundaryNE}, let $q \in \partial A \cap U_2$ and assume without loss of generality that $q \in J^+_{g(u_2,du_2)}(S,U_2)$. We are going to show that there exists a point $p \in \partial A \cap U_2 \cap  J^+_{g(u_2,du_2)}(S,U_2)$ with
\begin{equation}
\label{SpacelikeP}
J^-_{g(u_2,du_2)}(p, U_2) \cap \partial A \cap  J^+_{g(u_2,du_2)}(S,U_2) = \{p\} \;.
\end{equation}
The proof of this is analogous to Step 2 in the proof of Theorem \ref{ThmUnique} and is only sketched in the following. Assume \eqref{SpacelikeP} does not hold for $p = q$. Then there exists another point $r \in J^-_{g(u_2,du_2)}(p, U_2) \cap \partial A \cap  J^+_{g(u_2,du_2)}(S,U_2)$. The global hyperbolicity of $A$ implies the achronality of $\partial A \cap J^+_{g(u_2,du_2)}(S,U_2)$. Hence, the past directed causal curve connecting $q$ with $r$ is a null geodesic which lies in $\partial A \cap   J^+_{g(u_2,du_2)}(S,U_2)$. We now extend this null geodesic maximally to the past and consider the point $p$ where it leaves $\partial A \cap  J^+_{g(u_2,du_2)}(S,U_2)$. This point $p$ satisfies \eqref{SpacelikeP}. 

Step 3 of the proof of Theorem \ref{ThmUnique} applies literally unchanged if $V_0$ is replaced by $A$. Following Step 4 of the proof of Theorem \ref{ThmUnique} we now construct a spacelike  (with respect to $g(u_2, du_2)$) hypersurface $\Sigma \subseteq \overline{A} \cap U_2$ that contains at least one point  $q \in \partial A \cap U_2 \subseteq \partial U_1$.

By \eqref{PropertyBoundary} we can now find a neighbourhood $V$ of $q$ together with a chart $\psi : V \to (-\varepsilon, \varepsilon)^{d+1}$ and a continuous function $f : (-\varepsilon, \varepsilon)^d \to (-\varepsilon, \varepsilon)$ such that in this chart $\partial U_1 \cap V$ is given by the graph of $f$, $U_1 \cap V$ lies below the graph of $f$, and $V \setminus U_1$ lies above the graph of $f$. We can, after making $V$ smaller if necessary, assume that $V\subseteq U_2$ and that the spacelike hypersurface $\Sigma_V := \Sigma \cap V$ is a closed hypersurface in $V$. It follows from \cite[Chapter 14, 46.\ Corollary]{ONeill} that $\Sigma_V$ is acausal in $V \subseteq U_2$.
We consider now the  domain of dependence $D\Sigma_V$ of $\Sigma_V$ in $V \subseteq U_2$. Clearly, $D\Sigma_V$ contains points that lie above the graph of $f$ in the chart $\psi$. We can now define $u_3 : U_3 \to \R$, $U_3 := U_1 \cup D\Sigma_V$, $u_3(x) := u_1(x)$ for $x \in U_1$ and $u_3(x) := u_2(x)$ for $x \in D\Sigma_V$. This is well defined since the region below the graph of $f$ in the chart $\psi$ lies in $A$, where $u_1$ and $u_2$ agree. It is easy to see that $u_3 : U_3 \to \R$ is a GHD the domain of which contains that of the MGHD $u_1 : U_1 \to \R$. This is a contradiction.
\end{proof}

We conclude with presenting a simple criterion that ensures that condition \eqref{PropertyBoundary} is satisfied. It is tailored to small data results.

\begin{lemma}
Let $u_1 : U_1 \to \R$ be a GHD of given initial data posed on an open and connected subset $S$ of $\{x^0=0\}$ for a  quasilinear wave equation of the form \eqref{QuasilinearEq}. Furthermore, assume that 
\begin{equation}
\label{PropertyTimelike}
\parbox{0.75\textwidth}{
there exists a $\delta >0$ such that $\partial_0 + \sum_{i=1}^d \delta_i \partial_i$ is timelike with respect to $g(u_1, du_1)$ for all $\delta_i \in \R$ with $\sum_{i=1}^d |\delta_i| < \delta$.}
\end{equation} 
Then the condition \eqref{PropertyBoundary} is satisfied.
\end{lemma}

As an application of the lemma and of Theorem \ref{ThmNotBlocking} let us mention the work \cite{Chr} of Christodoulou in which he studies the formation of shocks for relativistic perfect fluids. In the irrotational case the equations of motion give rise to a subluminal wave equation. For sufficienly small initial data he explicitly constructs a MGHD and Theorem 13.1, conclusion iii) in \cite{Chr} shows that the assumptions of the above lemma are met. 

Before we give the proof, let us also emphasise that condition \eqref{PropertyTimelike} only ensures that $u_1 : U_1 \to \R$ is a UGHD \emph{if it is a MGHD to start with}.

\begin{proof} Let us introduce the notation $x = (x^0, x^1, \ldots x^d) = (x^0, \underline{x})$ with $\underline{x} \in \R^d$ and
let $p = (t_0, \underline{x}_0) \in \partial U_1 \setminus \partial S$. It thus follows that $t_0 \neq 0$. Without loss of generality let us assume that $t_0 >0$ and that $\partial_0$ is future directed.  We first show that $[0,t_0) \times \{\underline{x}_0\} \subseteq U_1$. 

Assume it was not the case and there existed a $0 \leq t_1 < t_0$ with $(t_1, \underline{x}_0) \notin U_1$. We can then find a point $q \in U_1$ sufficiently close to $p$ that can be connected to $(t_1, \underline{x}_0)$ by a straight line with slope at most $\delta$ with respect to the $x^0$-axis, i.e., by a straight line with tangent vector proportional to $\partial_0 + \sum_{i=1}^d \delta_i \partial_i$ for some $\delta_i \in \R$, $\sum_{i=1}^d |\delta_i| < \delta$. This however gives rise to a past directed and past inextendible timelike curve in $U_1$ starting at\footnote{It is clear that $q$ must lie in the future of $S$ in $U_1$, since if it were lying in the past, then the future directed timelike curve with velocity $\partial_0$ starting at $q$ would give rise to a future inextendible curve that does not intersect $S$.}  $q \in I^+_{g(u_1, du_1)}(S,U_1)$, which does not intersect $S$. This contradicts $S$ being a Cauchy hypersurface.

In particular, it follows that $(0,  \underline{x}_0) \in S$. Let now $W$ be a small neighbourhood of $\underline{x}_0$ in $S$ and define $f^+ : W \to \R$ by 
\begin{equation*}
f^+(\underline{x}) := \sup \{t >0 \; | \; (t', \underline{x}) \in U_1 \quad \forall\; 0 \leq t' < t\} \;.
\end{equation*}
Clearly we have $f^+(\underline{x}_0) = t_0$. Note that $f^+(\underline{x})$ is indeed finite for all $\underline{x} \in W$: if it were infinite for some $\underline{x} \in W$, then we could choose $t>0$ large enough such that we could connect $(t, \underline{x}) \in I^+_{g(u_1, du_1)}(S,U_1)$ with $p$ by a straight line with slope at most $\delta$ with respect to the $x^0$-axis -- obtaining a contradiction as before. Indeed, the same kind of argument shows that for a sequence of points $\underline{x}_n \in W$, with $\underline{x}_n \to \underline{x} \in W$ for $n \to \infty$ we must have $f^+(\underline{x}_n) \to f^+(\underline{x})$ for $n \to \infty$, since if there were there were infinitely many $n$ such that $|f^+(\underline{x}_n) - f^+(\underline{x})| \geq \varepsilon_0 >0$ for some $\varepsilon_0 >0$, then we could again construct past inextendible timelike curves in $U_1$ starting in the future of $S$ that do not cross $S$. Hence, $f^+$ is continuous. This kind of argument also immediately shows that $(x^0, f^+(\underline{x})) \notin U_1$ for $x^0 \geq f^+(\underline{x})$. This completes the proof.
\end{proof}

\section*{Acknowledgements}

We are grateful to M. Dafermos for useful discussions. FCE and HSR are funded by STFC. 

\appendix

\section*{Appendix: superluminal equations in two dimensions}

\label{sec:2dsuper}

In this Appendix we will consider causal properties of superluminal equations in $1+1$ dimensions. The low dimensionality imposes strong restrictions on the causal structure of solutions. We will review some results on causality in $1+1$ dimensions and explain why it is not possible to violate causality in a smooth way in a finite region of spacetime. 

Assume we have a hyperbolic solution $u$ defined on some open subset $M$ of $\mathbb{R}^2$. 
If $M$ is simply connected then $M$ is homeomorphic to $\mathbb{R}^2$, which implies that $(M,g)$ is stably causal \cite[Theorem 3.43]{beem}. Hence a violation of stable causality requires that $M$ is not simply connected i.e. $M$ must have holes or punctures. We assume that $(M,u)$ is {\it inextendible}, i.e., it is not possible to extend $u$ as a hyperbolic solution onto a connected open set strictly larger than $M$. Hence the non-trivial topology of $M$ must be associated with $u$ developing some pathological feature when we attempt to extend to points of $\partial M$, for example, $u$ or its derivative might blow up, or $g$ might fail to be Lorentzian at such points. 

This looks bad for the possibility of smoothly violating causality (i.e. ``forming a time machine"). But maybe the above pathological features are {\it consequences} of the time machine, i.e., they lie to the future of the causality violating region. This is not the case: we will explain why some such pathology must occur {\it before} causality is violated. One cannot form a time machine smoothly in a two-dimensional superluminal theory.  

Pick inertial coordinates $(t,x)$ for Minkowski spacetime so that
\be
 m =-dt^2 + dx^2
\ee
Now $dx$ is spacelike w.r.t. $m^{\mu\nu}$, which implies that it is also spacelike w.r.t. $g^{\mu\nu}$ (because, for a superluminal equation, the null cone of $g^{\mu\nu}$ lies on, or inside, that of $m^{\mu\nu}$). Hence $x$ is a {\it global space function} for $g$, i.e., a function with everywhere spacelike (non-zero) gradient. The transformation of the previous subsection relates it to a global time function of the corresponding subluminal equation.

Consider a null geodesic of $g$. Then $x$ must be monotonic along the geodesic. To see this, let $V$ be tangent to the geodesic. Then $V^x = V^\mu (dx)_\mu$ and this cannot vanish because $V$ is null w.r.t. $g$ and $dx$ is spacelike w.r.t. $g$.\footnote{In 2d let $P$ and $Q$ be non-zero vectors such that $g_{\mu\nu} P^\mu Q^\nu = 0$. If $P$ is timelike (spacelike) w.r.t. $g$ then $Q$ must be spacelike (timelike) w.r.t. $g$. If $P$ is null w.r.t. $g$ then $Q$ must also be null, and parallel to $P$.} Non-vanishing of $V^x$ implies that $x$ is monotonic along the geodesic. 
It follows that {\it a null geodesic of $g$ cannot be closed and cannot intersect itself.} (Again this is easy to understand using the transformation of the previous section.)

It is also easy to see that there cannot be a smooth closed future-directed causal curve (w.r.t. $g$) which is {\it simple}, i.e., does not intersect itself. This is because there will be a point on any such curve at which the tangent vector is timelike and past directed w.r.t. the Minkowski metric $m_{\mu\nu}$ and hence also timelike and past directed w.r.t. $g_{\mu\nu}$, contradicting the fact that the curve is future-directed. Hence {\it a closed future-directed causal curve must be non-smooth or non-simple.} 

Now let $S$ be a partial Cauchy surface, i.e., a surface (actually a line) which, viewed as a subset of $(M,g)$, is closed, achronal and edgeless \cite{wald}. The future domain of dependence of $S$ is $D^+(S)$ and the future Cauchy horizon is $H^+(S) = \overline{D^+(S)} - I^-(D^+(S))$. If causality is violated to the future of $S$ then this must occur outside $D^+(S)$, so $H^+(S)$ is non-empty. A standard result states that $H^+(S)$ is achronal and closed, and that every $p \in H^+(S)$ lies on a null geodesic contained in $H^+(S)$ which is past inextendible without a past endpoint in $M$ \cite{wald}.

Consider following a generator of $H^+(S)$ to the past. Since $x$ is monotonic it must either diverge or approach a finite limit along this generator. If $x$ diverges then the generator originates from infinity in $\mathbb{R}^2$. Consider the case that $x$ approaches a finite limit in the past. From the fact that the generator is null w.r.t. $g$ and hence non-timelike w.r.t. $m$ we have $|dt/dx| \le 1$, which implies (via integration) that $t$ also approaches a finite limit. Hence the generator has an endpoint $p$ in $\mathbb{R}^2$. But it cannot have an endpoint in $M$ so $p \notin M$. Since we are assuming that $(M,u)$ is inextendible, $p$ must correspond either to a singularity of the spacetime $(M,g)$, or to a ``point at infinity" in $(M,g)$. In the latter case, $g$ would have to blow up at $p$, which is singular behaviour from the point of view of the Minkowski spacetime.  

This proves that generators of $H^+(S)$ must emanate either from infinity in Minkowski spacetime or from a point of $\mathbb{R}^2$ that is singular w.r.t. $(M,g)$ or ``at infinity" w.r.t. $(M,g)$. None of these possibilities corresponds to what is usually regarded as the condition for creation of a time machine in a bounded region of space, namely a ``compactly generated" Cauchy horizon \cite{hawking} (one whose generators remain in a compact region of $(M,g)$ when extended to the past). If the generator does not emanate from infinity in $\mathbb{R}^2$ then it remains in a compact region of $\mathbb{R}^2$ but not a compact region of $M$: in $M$ it ``emerges from a singularity" or ``from infinity". 

To violate causality in a smooth way, the generators of $H^+(S)$ would have to emanate from infinity. This can happen even for the linear wave equation if $S$ extends to left and/or right past null infinity in 2d Minkowski spacetime. In this case, $H^+(S)$ exists because information can enter the spacetime from past null infinity without crossing $S$. This is rather uninteresting (unrelated to any violation of causality) so consider instead the case of $S$ extending to (left and right) spatial infinity in 2d Minkowski spacetime. For such $S$ there is no Cauchy horizon for the linear wave equation so now consider such $S$ for a nonlinear equation of the form (\ref{eqndims}). Assume that the initial data $(u,du)$ is compactly supported on $S$. Under time evolution, the $u$ field can propagate out to future null infinity. In $2d$, even for the linear wave equation solutions do not decay at null infinity, so $u$ does not necessarily decay near future null infinity. This implies that $g$ may not approach $m$ near future null infinity. So perhaps causality violation could originate at infinity with a Cauchy horizon forming at left and/or right future null infinity and propagate into the interior of the spacetime along null geodesics of $g$ which are spacelike w.r.t. $m$. It would be interesting to find an example for which this behaviour occurs.

\end{document}